\newtheorem{theorem}{Theorem}[section]
\newtheorem{definition}{Definition}[section]
\newtheorem{lemma}{Lemma}[section]
\newtheorem{remark}{Remark}[section]
\newtheorem{corollary}{Corollary}[section]
\numberwithin{equation}{section}
\newcommand{\supp}{{\mathrm {supp}}}
\begin{document}
\title[Euler-Boltzmann equations ]{On regular solutions of the $3$D compressible isentropic Euler-Boltzmann equations with vacuum }

\author{yachun li}
\address[Y. C. Li]{Department of Mathematics and Key lab of scientific and engineering computing (MOE), Shanghai Jiao Tong University,
Shanghai 200240, P.R.China} \email{\tt ycli@sjtu.edu.cn}

\author{Shengguo Zhu}
\address[S. G. Zhu]{Department of Mathematics, Shanghai Jiao Tong University,
Shanghai 200240, P.R.China;
School of Mathematics, Georgia Institute of Technology, Atlanta 30332, U.S.A.}
\email{\tt zhushengguo@sjtu.edu.cn}


\begin{abstract}
In this paper, we discuss the Cauchy problem for the compressible isentropic
Euler-Boltzmann equations with vacuum in radiation hydrodynamics. We establish the 
existence of unique local  regular solutions with vacuum by the theory of
quasi-linear symmetric hyperbolic systems and the Minkowski's inequality under some physical assumptions for the radition quantities. Moreover,  it is interesting to show the  non-global existence of regular solutions caused by the
effect of vacuum for polytropic gases with adiabatic exponent $1<\gamma\leq 3$ via some observation on the propagation of the radiation field.  Compared with \cite{tpy}\cite{tms1}\cite{xy}, some new initial conditions that will lead to the finite time blow-up for classical solutions have been introduced.
 These blow-up results tell us that the radiation effect on the fluid is not strong enough to  prevent the formation of singularities  caused by the appearance of vacuum.
\end{abstract}

\date{July 15, 2013}
\subjclass{Primary: 35B40, 35A05, 76Y05; Secondary: 35B35, 35L65,
85A05}
\keywords{Euler-Boltzmann equations,  radiation, vacuum, symmetrization, classical solutions, regular solutions, local existence, blow up.\\
{\bf Acknowledgments.} The authors' research was supported in part
by Chinese National Natural Science Foundation under grants 11231006 and Natural Science Foundation of Shanghai under grant 14ZR1423100. Shengguo Zhu was also supported by China Scholarship Council.
}

\maketitle
\baselineskip=6.3mm

\section{Introduction}\label{S1}

The Euler-Boltzmann system appears in various astrophysical contexts \cite{kr} and in high-temperature plasma physics \cite{gp}. The couplings of fluid field and radiation field involve momentum source and energy source depending on the specific radiation intensity driven by the so called radiation transfer integro-differential equation \cite{gp}.
Suppose that the matter is in local thermodynamical equilibrium, the coupled system  of Euler-Boltzmann equations for the mass density $\rho(t,x)$, the fluid velocity $u(t,x)=(u_1,u_2,u_3)$, and the specific radiation intensity $I(v,\Omega,t,x)$ in three-dimensional  space reads as \cite{gp}
\begin{equation}
\label{eq:1.2}
\begin{cases}
\displaystyle
\frac{1}{c}\partial_tI+\Omega\cdot\nabla I=A_r,\\[10pt]
\displaystyle
\partial_{t}\rho+\nabla\cdot(\rho u)=0,\\[10pt]
\displaystyle
\partial_{t}\left(\rho u+\frac{1}{c^{2}}F_r\right)+\nabla\cdot(\rho u\otimes u+P_r)
  +\nabla p_m =0,
\end{cases}
\end{equation}
where $(t,x)\in \mathbb{R}^+\cap \mathbb{R}^3$, $v\in \mathbb{R}^+$ is the frequency of photons and $\Omega\in S^2$ is the travel direction of photons, here $S^{2}$ stands for the unit sphere in $\mathbb{R}^3$; $p_m$ is the material pressure satisfying the equation of state
\begin{equation}
\label{eq:1.3}
p_m=\rho^{\gamma}, \quad 1<\gamma \leq 3,
\end{equation}
where $\gamma$ is the adiabatic exponent.
\begin{equation}\label{right}
A_r=S-\sigma_aI
+\int_0^\infty \int_{S^{2}} \left(\frac{v}{v'}\sigma_s(v' \rightarrow v,\Omega'\cdot\Omega,\rho)I'
-\sigma_s(v \rightarrow v',\Omega\cdot\Omega',\rho)I\right) \text{d}\Omega' \text{d}v'
\end{equation}
is the collision term involving emission, absorption and scattering of energy, 
where  $I=I(v,\Omega,t,x)$, $I'=I(v',\Omega',t,x)$;  $S=S(v,t,x,\rho)$ is the rate of energy emission due to spontaneous process; $\sigma_a=\sigma_a(v,t,x,\rho)$ denotes the absorption coefficient that may  depend on the mass density $\rho$; $\sigma_s$ is the ``differential scattering coefficient'' (see \cite{lz} or  \cite{gp}) such that the probability of a photon being
scattered from $v'$ to $v$ contained in $\text{d}v$,  from $\Omega'$ to
$\Omega$ contained in $\text{d}\Omega$, and travelling a distance $\text{d}s$ is
given by $\sigma_s(v' \rightarrow v,\Omega'\cdot\Omega)\text{d}v \text{d}\Omega
\text{d}s$, and
\begin{equation*}
\begin{split}
\sigma_s\equiv  \sigma_s(v' \rightarrow v, \Omega'\cdot\Omega,\rho)=O(\rho),\ \sigma'_s\equiv  \sigma_s(v \rightarrow v', \Omega\cdot\Omega',\rho)=O(\rho).
\end{split}
\end{equation*}

For the isentropic flow, the impact of radiation on the dynamical properties of the fluid is described by the following two quantities
\begin{equation}
\label{eq:1.23}
\displaystyle
F_r=\int_0^\infty \int_{S^{2}} I(v,\Omega,t,x)\Omega \text{d}\Omega \text{d}v,\
P_r=\frac{1}{c} \int_0^\infty \int_{S^{2}}  I(v,\Omega,t,x)\Omega \otimes \Omega \text{d}\Omega \text{d}v,
\end{equation}
which are called the radiation flux and the radiation pressure tensor, respectively. The radiation field affects the dynamical properties of the fluid significantly, which makes it difficult to get the estimates of some physical quantities. For example, the material momentum $\int \rho u \text{d}x$ of the fluid is not conserved because of the impact coming from the radiation flux $F_r$ and the radiation pressure tensor $P_r$.

For pure compressible hydrodynamics equations without radiation, there have been many results on the local existence of regular solutions and the formation of singularities caused by the appearance of vacuum. The study on the appearance of vacuum in fluid dynamics can be traced back at least to the collected work of Von Neumann \cite{vn}, where  some important  remarks  on the general hydrodynamical discussion about motions  in one dimension following Riemann's theory have been maken.
Makino-Ukai-Kawashima \cite{tms1} discussed the Cauchy problem for the compressible Euler equations with both initial density and velocity compactly supported. They established the local existence of the regular solutions and showed that the life span is finite for any non-trivial solution, and the similar results have been obtained for  compressible Euler equations with damping in Liu-Yang \cite{tpy}.  Xu-Yang \cite{cy} established the local existence of smooth solutions to Euler equations with damping under the assumption of physical vacuum boundary condition.


Recently, similar problems for compressible radiation hydrodynamics equations started drawing attention of people. For Euler-Boltzmann equations, when the initial density is away from vacuum, Jiang-Zhong \cite{sjx} obtained the local existence of $C^1$ solutions for the Cauchy problem. Jiang-Wang \cite{pjd} showed that some $ C^1$ solutions will blow up in finite time regardless of the size of the initial perturbation. 

For Navier-Stokes-Boltzmann equations, Li-Zhu \cite{lz} considered the formation of singularities to classical solutions when the initial mass density is compactly supported, and  the existence of  unique local strong solutions for the isentropic flow to the Cauchy problem is proved in \cite{zl}.  Some special phenomenon has been observed. For example, it is known that  in contrast with the Second law of thermaldynamics, the associated entropy equation may contain a negative production term for RHD system, which has already been observed in Buet and Despr$\acute{\text{e}}$s \cite{add1}. Moreover, according to  Ducomet, Feireisl and Ne$\check{\text{c}}$asov$\acute{\text{a}}$ \cite{add2},  in which they  obtained the existence of global weak solution for some RHD model,  we know that the velocity field $u$ may develop uncontrolled time oscillations on the hypothetical vacuum zones where the mass density $\rho$ vanishes. Ducomet-Ne$\check{\text{c}}$asov$\acute{\text{a}}$ \cite{BD} \cite{BS} studied the global weak solutions and their large time behavior for one-dimensional case.

In this paper, we are interested in the isentropic Euler-Boltzmann equations with vacuum. Due to the complexity of this physical model, we first need to make some structure assumptions (see (\ref{xishu1})-(\ref{fg})) to the corresponding physical quantities such as $\sigma_a$, $\sigma_s$ and $S$, etc. Combining some arguments in Makino-Ukai-Kawashima \cite{tms1}, Jiang-Zhong \cite{sjx}, Xin-Yan \cite{xy} and introducing some new ideas dealing with the radiation field, we  prove the existence of unique local regular solutions to the Cauchy problem, and  we also study the formation
of singularities caused by  the occurrence of vacuum. 
  These blow-up results imply that the radiation effect on the fluid  is not strong enough to prevent the formation of singularities caused by the appearance of vacuum.  Some discussion on the relation between this kind of singularities and the formation of shock can be seen in \cite{tpy}.

 Compared with the results obtained in \cite{pjd} \cite{sjx} on the  formation of singularities away from vacuum for Euler-Boltzmann equations, via making full use of the mathematical structure of  our system  in vacuum domain, an important observation about the propagation of radiation effect has been shown in this paper as follows. If the initial velocity also vanishes in our vacuum region $V$  (see Definition \ref{local}) where the initial mass density vanishes, then there exists a critical time $T_c=\frac{2R_0}{c}$  such that beyond $T_c$,  the radiation effect on the fluid propagates completely from the interior of $B(t)$ (see Definition \ref{kobn}) to the exterior of $B(t)$. 
Based on this fact, we can remove the key assumptions in \cite{pjd}\cite{sjx} for the radiation field that 
\begin{equation*}\begin{cases}
I_0(v,\Omega,x) \equiv \overline{B}(v),\ \text{ for } \ x\cdot\Omega\leq0,\\[10pt]
 I_0(v,\Omega,x) \geq \overline{B}(v),\quad \forall (v,\Omega,x)\in \mathbb{R}^+\times S^{2}\times\mathbb{R}^3,
\end{cases}
\end{equation*}
where $\overline{B}(v)\in L^2(\mathbb{R}^+)$ is a function of $v$.

Another interesting  ingredients in this paper is our second type blow-up result (see Theorem \ref{th333} or Corollaries \ref{th3333}-\ref{th33333}) based on the definition of hyperbolic singularity set (see Definition \ref{bugers}), which is new even for the compressible Euler equations or Euler equations with damping. We  first recall the previous blow-up results obtained in \cite{tpy}\cite{tms1}\cite{xy} or Section $3.1$. They all need the assumption that 
\begin{equation}\label{haoaa}
u_0(x)=0,\quad \text{when} \quad \rho_0(x)=0\quad \text{in} \ \mathbb{E},
\end{equation}
where the domain $\mathbb{E}$ is an exterior domain  such as $B^C_{R_0}$ (\cite{tpy}\cite{tms1}) satisfying 
\begin{equation}\label{hao1aa}
\int_{B_{R_0}} \rho_0(x) \text{d}x >0;
\end{equation}
or an annular region (\cite{xy}, Section $3.1$) such as $B_{0}-A_{0}$ satisfying 
\begin{equation}\label{hao2aa}
\int_{A_{0}} \rho_0(x) \text{d}x >0.
\end{equation}
The assumption (\ref{haoaa}) will be used to control the size of  $\supp_x \rho(t,x)$ in \cite{tpy}\cite{tms1}, and  $B(t)$  in \cite{xy} or Section $3.1$. And  (\ref{hao1aa})-(\ref{hao2aa}) will be used to construct the desired ODE inequality such as (\ref{eq:3.12}) which plays a key role in the derivation of the contradiction.
So, from the blow-up results mentioned above,  the assumptions that $u_0=0$ in the vacuum domain and there must exist positive mass  (\ref{hao1aa})-(\ref{hao2aa}) contained in the intial  vacuum region seem to be necessary for the formation of singularities.  
However, our results obtained in Theorem \ref{th333}  or Corollaries \ref{th3333}-\ref{th33333}   answered this question negatively. This blow-up result is inspired by the defintion of regular solutions (see Definition \ref{666}) that:
$$
\partial_t u +u\cdot \nabla u=0,\quad \text{when} \ \rho=0,
$$
which tells us that in the vacuum domain,  the behavior of the velocity $u$ is actually controlled by a positive and symmetric hyperbolic system, i.e., the so called multi-dimensional Burgers equations. Then some assumption on  the spectrum of the Jacobian matrix of $u_0$ will lead to the finite blow-up, and the details can be seen in Section $3.2$. 

We organize this paper as follows. In section $2$, we first reformulate the
Cauchy problem for system (\ref{eq:1.2}) into a simpler form for the case $\sigma_s=0$, and then we establish the  existence of unique local regular solutions. In section $3$, we show that the regular solution obtained in Section $2$ will develop singularities in finite time provided that the initial data contain vacuum in some local domain for $\gamma > 1$. Finally, in Section $4$, we show the existence of unique local regular solutions for the case $\sigma_s\neq 0$ under some structure assumptions to the physical coefficients  $S$, $\sigma_a$ and $\sigma_s$.

\section{Reformulation and Local Existence}\label{S2}

\subsection{Reformulation}\ \label{S2.1}\\
We only consider the case $ \sigma_s=0 $ in this section. For the case $ \sigma_s\neq 0 $, some corresponding results will be shown in Section $4$. We reformulate the Cauchy problem of
the compressible Euler-Boltzmann equations (\ref{eq:1.2}) to a quasi-linear symmetric
hyperbolic system, so that we can get the local existence of regular solutions (see Definition \ref{666}).


From the assumptions of ``induced process'' and local thermal equilibrium (LTE, see  \cite{lz} \cite{gp}), $S$ and $\sigma_a$ can be written as
\begin{equation}\label{xishu1}
\begin{cases}
S(v,t,x,\rho)=K_a\overline{B}(v)\left(1+\frac{c^{2}I}{2hv^3}\right), \\[6pt]
\sigma_a(v,t,x,\rho)=K_a\cdot\left(1+\frac{c^{2}}{2hv^3}\overline{B}(v)\right),
\end{cases}
\end{equation}
where $\overline{B}(v)\in L^2(\mathbb{R}^+)$ is a function of $v$, $h$ is the Planck constant, and
\begin{equation} \label{fg}
K_a=K_a(v,t,x,\rho)=\rho \overline{}\overline{K}_a(v,t,x,\rho)=o(\rho)\geq 0,
\end{equation}
where $\overline{K}_a \in C^\infty$ for $(v,t,x,\rho)$ and $\lim_{\rho\rightarrow 0}\overline{K}_a(v,t,x,\rho)=0$.
More comments on $S(v,t,x,\rho)$ and $\sigma_a(v,t,x,\rho)$ can be seen in 
\cite{sjx} as well as in \cite{gp}. So, when $ \sigma_s=0 $,  the radiation transfer equation in (\ref{eq:1.2}) can be written as
\begin{equation} \label{eq:2.3}
\frac{1}{c}\partial_tI+\Omega\cdot\nabla I=-K_a\cdot (I-\overline{B}(v)).
\end{equation}

Then the compressible isentropic Euler-Boltzmann equations (\ref{eq:1.2}) can be reduced to
\begin{equation}\label{eq:2.4}
\begin{cases}
\displaystyle
\frac{1}{c}\partial_tI+\Omega\cdot\nabla I=-K_a\cdot (I-\overline{B}(v)),\\[8pt]
\displaystyle
\partial_{t}\rho+\nabla\cdot(\rho u)=0,\\[8pt]
\displaystyle
\partial_{t}(\rho u)+\nabla\cdot(\rho u\otimes u)+\nabla p_m=\frac{1}{c} \int_0^\infty \int_{S^{2}} K_a\cdot(I-\overline{B}(v))\Omega  \text{d}\Omega \text{d}v.
\end{cases}
\end{equation}

We consider the Cauchy problem with initial data
\begin{equation} \label{eq:2.211}
I|_{t=0}=I_0(v,\Omega,x), \  (\rho,u)|_{t=0}=(\rho_0(x),u_0(x)),
\end{equation}
where $\rho_0(x)\geq 0.$

We first introduce the definition of regular solutions to Cauchy problem (\ref{eq:2.4})-(\ref{eq:2.211}).
\begin{definition}\label{666} Let $T>0$ be a positive constant. A solution $(I,\rho,u)$ to Cauchy problem (\ref{eq:2.4})-(\ref{eq:2.211}) is called a regular solution if
\begin{equation*}\begin{split}
&(i)\,\,I(v,\Omega,t,x)\in  L^2\left(\mathbb{R}^+\times S^{2}; C^1( [0,T) \times \mathbb{R}^3)\right), \ (\rho(t,x),u(t,x))\in C^1\left([0,T)\times \mathbb{R}^3\right);\\
&(ii)\,\,\rho^{\frac{\gamma-1}{2}}\in C^1\left([0,T)\times \mathbb{R}^3\right), \  \rho \geq 0;\\
&(iii)\,\, \partial_t u+u\cdot \nabla u=0  \ \text{holds in the exterior of }\ \rm{supp}\rho.
\end{split}
\end{equation*}

\end{definition}
\begin{remark}\ \
\begin{enumerate}

  \item We point out that this definition for regular solutions is similar to  those of  Makio-Ukai-Kawashima \cite{tms1}   and Liu-Yang \cite{tpy} for $3$D compressible Euler equations, in which the corresponding existences of smooth solutions with vacuum have been proved. Moreover, $\sqrt{\gamma}\rho^{\frac{\gamma-1}{2}}$ is a very important physical quantity called local sound speed in gas dynamics.
  \item When $\rho> 0$, if $(\rho,u)$ has the regularity showed in $(\textrm{i})$-$(\textrm{ii})$, then it naturally holds that
\begin{equation}\label{jgh}\begin{split}
 \partial_t u+ u\cdot \nabla u+\frac{2\gamma}{\gamma-1}\rho^{\frac{\gamma-1}{2}} \nabla \rho^{\frac{\gamma-1}{2}} =\frac{1}{c} \int_0^\infty \int_{S^{2}} \overline{K}_a\cdot(I-\overline{B}(v))\Omega  \text{d}\Omega \text{d}v.
\end{split}
\end{equation}
Passing to the limit as $\rho\rightarrow 0$, we have
\begin{equation*}
 \partial_t u+ u\cdot \nabla u=\lim_{\rho\rightarrow 0}\Big(\frac{1}{c} \int_0^\infty \int_{S^{2}} \overline{K}_a\cdot(I-\overline{B}(v))\Omega  \text{d}\Omega \text{d}v\Big)=0.
\end{equation*}
So condition $(\textrm{iii})$ is reasonable at points $(t,x)(t>0)$ satisfying $\rho(t,x)=0$ due to the continuity of $\rho$ and properties of $\overline{K}_a$.
  \item We emphasize  condition $(\textrm{iii})$ is very important to make the velocity $u$ well defined at vacuum points and to ensure the uniqueness of regular solutions. Without condition $(\textrm{iii})$, it is very difficult to get enough information about velocity even when considering specific cases such as point vacuum or continuous vacuum of one piece.
\end{enumerate}
\end{remark}

Now we symmetrize hyperbolic system (\ref{eq:2.4}). After introducing a new variable
\begin{equation}\label{eq:2.5}
w=p^{\frac{\gamma-1}{2\gamma}}_m=\rho^{\frac{\gamma-1}{2}},
\end{equation}
system (\ref{eq:2.4}) can be written as
\begin{equation}
\begin{cases}
\label{eq:ccc}
\displaystyle
\frac{1}{c}\partial_tI+\Omega\cdot\nabla I=-K_a\cdot(I-\overline{B}(v)),\\[8pt]
\displaystyle
\partial_{t}w+u\cdot \nabla w+\frac{\gamma-1}{2}w \nabla \cdot u=0,\\[8pt]
\displaystyle
\frac{(\gamma-1)^2}{4\gamma}(\partial_tu+u\cdot\nabla u) +\frac{\gamma-1}{2}w\nabla w=\frac{(\gamma-1)^2}{4\gamma c}\int_0^\infty \int_{S^{2}} \overline{K}_a\cdot(I-\overline{B}(v))\Omega \text{d}\Omega \text{d}v.
 \end{cases}
\end{equation}
Let $U=U(t,x)=(w(t,x),u(t,x))$. Then system (\ref{eq:ccc}) is reduced to the following system of $(I,U)$:
\begin{equation}\label{eq:2.6}
\begin{cases}
\displaystyle
\frac{1}{c}\partial_tI+\Omega\cdot\nabla I=-K_a\cdot(I-\overline{B}(v)),\\[10pt]
 A_{0}(U)\partial_tU+\sum_{j=1}^{3} A_{j}(U)\partial_{x_j}U=G(I,U),
\end{cases}
\end{equation}
and the initial condition (\ref{eq:2.211}) turns into
\begin{equation} \label{eq:2.8}(I,U)|_{t=0}
        =(I_0(v,\Omega,x),w_0(x),u_0(x)),
\end{equation}
where $ w_0(x)=\rho_0(x)^{\frac{\gamma-1}{2}}\geq 0$, and
     $$ A_0(U)=\left(
         \begin{array}{cc}

            1& 0 \\[10pt]
           0& \frac{(\gamma-1)^2}{4\gamma}\mathbf{I}_3\\[10pt]
         \end{array}
       \right),\
 A_j(U)=\left(
         \begin{array}{cc}
         u_j  & \frac{\gamma-1}{2}w\delta_j \\[10pt]
         \frac{\gamma-1}{2}w\delta^\top_j & \frac{(\gamma-1)^2}{4\gamma}u_j \mathbf{I}_3 \\[10pt]
         \end{array}
       \right) \ (j=1,2,3), $$
where $\mathbf{I}_3$ is the $3\times 3$  unit matrix, $\delta_j=(\delta_{1j},\delta_{2j},\delta_{3j})$ is the Kronecker symbol satisfying $\delta_{ij}=1,\ i=j$ and $\delta_{ij}=0$, otherwise. The source terms $G=(G_0,G_1,G_2,G_3)$ are
\begin{equation*}
\begin{split}
&G_0(I,U)=0,\\
&G_j(I,U)=\frac{(\gamma-1)^2}{4c\gamma}\left(\int_0^\infty \int_{S^{2}} \overline{K}_a\cdot(I-\overline{B}(v))\Omega_j \text{d}\Omega \text{d}v\right)\ (j=1,2,3).
\end{split}
\end{equation*}
We note that $ A_j(U) ( j=0,1,2,3)$ are
$ C^\infty $ for $U$ and $G(I,U)$ are $C^\infty$ for $I$ and $U$. Moreover, $ A_j(U)\,\,(j=1,2,3)$ are all
symmetric, and $ A_0(U) $ is bounded and positively definite. In fact, we have
\begin{equation}\label{zhengding}
\frac{(\gamma-1)^2}{4\gamma}|\xi|^2 \leq (A_0(U)\xi, \xi) \leq |\xi|^2, \text{  for  }\xi\in \mathbb{R}^3.
\end{equation}
In order to get the local existence of regular solutions to the Cauchy problem (\ref{eq:2.4})-(\ref{eq:2.211}), it sufficies to prove the local existence of classcal solutions to the reformulated Cauchy problem (\ref{eq:2.6})-(\ref{eq:2.8}).

\subsection{Local existence and uniqueness of classical solutions to (\ref{eq:2.6})-(\ref{eq:2.8})}\ \label{S2.2}\\

We first introduce some notations. Denote by $W^{s,p}(\mathbb{R}^n)$ and $H^s(\mathbb{R}^n)$ the ordinary Sobolev spaces, and
$$
\|\cdot\|=\|\cdot\|_{L^2(\mathbb{R}^n)},\ \|\cdot\|_s=\|\cdot\|_{H^s(\mathbb{R}^n)},\ |\|\cdot\||_{s,T}=\max_{t\in[0,T]}\|\cdot\|_s.
$$

The following well-known estimates for the derivatives of product are useful in the energy estimates for local existence.
\begin{lemma} \label{pro:2.3}
Let
$
\frac{1}{r}=\frac{1}{p}+\frac{1}{q}, \quad 1\leq p,\ q,\ r \leq  +\infty
$.
For any integer $s \geq 0$, if
$
f \in W^{s,p}(\mathbb{R}^n),\  g\in W^{s,q}(\mathbb{R}^n)
$,
then
\begin{equation*} \|D^s(fg)\|_{L^r(\mathbb{R}^n)} \leq C_s\left(\|f\|_{L^p(\mathbb{R}^n)}\|D^sg\|_{L^q(\mathbb{R}^n)}+\|g\|_{L^p(\mathbb{R}^n)}\|D^sf\|_{L^q(\mathbb{R}^n)}\right),\end{equation*}
and when $s\geq 1$,
\begin{equation*}
\|D^s(fg)-fD^sg\|_{L^r(\mathbb{R}^n)}\leq C_s\left(\|Df\|_{L^p(\mathbb{R}^n)}\|D^{s-1}g\|_{L^q(\mathbb{R}^n)}+\|g\|_{L^p(\mathbb{R}^n)}\|D^{s}f\|_{L^q(\mathbb{R}^n)}\right),\end{equation*}
where $C_s$ is a constant depending on  $s$.
\end{lemma}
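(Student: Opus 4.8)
The plan is to reduce both inequalities to the classical Gagliardo--Nirenberg interpolation inequalities by first expanding the highest-order derivative via the Leibniz rule and then controlling each intermediate factor by a product of an $L^p$ norm of the undifferentiated function and the $L^q$ norm of its top-order derivative.

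First I would expand
\[
D^s(fg) = \sum_{|\alpha|+|\beta|=s} c_{\alpha\beta}\, D^\alpha f\, D^\beta g,
\]
with combinatorial constants $c_{\alpha\beta}$, so that it suffices to bound each term $\|D^\alpha f\, D^\beta g\|_{L^r}$ with $|\alpha|+|\beta|=s$. Fixing such a pair and writing $j=|\alpha|$, I would apply H\"older's inequality with exponents $\tfrac{1}{r}=\tfrac{1}{r_1}+\tfrac{1}{r_2}$ to split it as $\|D^\alpha f\|_{L^{r_1}}\|D^\beta g\|_{L^{r_2}}$, and then invoke the Gagliardo--Nirenberg inequalities
\[
\|D^\alpha f\|_{L^{r_1}}\le C\|f\|_{L^p}^{1-j/s}\|D^s f\|_{L^q}^{j/s},\qquad \|D^\beta g\|_{L^{r_2}}\le C\|g\|_{L^p}^{j/s}\|D^s g\|_{L^q}^{(s-j)/s},
\]
choosing $r_1,r_2$ precisely so that the homogeneity (scaling) constraints of both interpolation inequalities are met simultaneously. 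Multiplying and regrouping the exponents, the product becomes $\big(\|f\|_{L^p}\|D^sg\|_{L^q}\big)^{(s-j)/s}\big(\|D^sf\|_{L^q}\|g\|_{L^p}\big)^{j/s}$, and Young's inequality $a^\theta b^{1-\theta}\le \theta a+(1-\theta)b$ with $\theta=(s-j)/s$ converts each mixed term into a sum of the two admissible endpoint products. Summing over the finitely many multi-indices yields the first inequality.

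For the commutator estimate with $s\ge1$, the key structural observation is that subtracting $fD^sg$ precisely removes the $\alpha=0$ term from the Leibniz expansion, so that the surviving sum ranges only over $|\alpha|\ge1$; hence every remaining term carries at least one derivative on $f$, which is exactly what produces $\|Df\|_{L^p}$ rather than $\|f\|_{L^p}$ on the right-hand side. I would then repeat the H\"older--Gagliardo--Nirenberg scheme, but now interpolating $\|D^\alpha f\|$ between $\|Df\|_{L^p}$ and $\|D^sf\|_{L^q}$ (since $1\le|\alpha|\le s$) and $\|D^\beta g\|$ between $\|g\|_{L^p}$ and $\|D^{s-1}g\|_{L^q}$ (since $0\le|\beta|\le s-1$), closing again with Young's inequality.

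The main obstacle is purely the exponent bookkeeping: for each admissible $j$ one must verify that the splitting exponents $r_1,r_2$ with $\tfrac1r=\tfrac1{r_1}+\tfrac1{r_2}$ are compatible with the scaling identities underlying both Gagliardo--Nirenberg inequalities at once, and that the borderline and endpoint cases (where $p$, $q$, or $r$ equals $1$ or $+\infty$ and Gagliardo--Nirenberg needs its usual caveats) are treated separately, typically by a direct H\"older argument together with a density or mollification step. Since $s$ is a fixed nonnegative integer, only finitely many terms arise and the resulting constant $C_s$ depends only on $s$ (and on $n,p,q,r$), so once the exponent matching is confirmed both estimates follow by summation.
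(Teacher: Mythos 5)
Your proposal is correct: the Leibniz expansion, H\"older splitting with $\tfrac1r=\tfrac1{r_1}+\tfrac1{r_2}$, the Gagliardo--Nirenberg interpolation with exponent $\theta=j/s$ (whose scaling constraints do indeed sum to $\tfrac1p+\tfrac1q=\tfrac1r$), and the final Young's inequality step, together with the observation that subtracting $fD^sg$ deletes the $\alpha=0$ term, constitute a complete argument for both estimates. The paper itself offers no proof---it only cites Majda \cite{amj}---and the proof found there is essentially this same standard Moser-type argument, so your attempt matches the intended proof.
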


\begin{remark}
The proof of Lemma \ref{pro:2.3} can be found in \cite{amj}.
From Sobolev imbedding theorem we know that, if $s > \frac{n}{2}$, then we have
\begin{equation} \label{pro:2.300}
\|f\|_{L^\infty(\mathbb{R}^n)} \leq C_s \|f\|_{s},
\end{equation}
where $ f \in   L^\infty \bigcap H^s$. Then letting $r=2$ and $p=\infty,q=2$ in Lemma \ref{pro:2.3}, we obtain
\begin{equation}\label{pro:2.311}
\|D^\alpha(fg)\|\leq C_s \|f\|_{s}\|g\|_{s}\quad \text{for}\
s > \frac{n}{2},\  f,\  g \in H^s \  \text{and} \ |\alpha|\leq s.
\end{equation}

\end{remark}

To obtain the local existence of classical solutions to (\ref{eq:2.6})-(\ref{eq:2.8}), we need some assumptions on the coefficient $K_a$. If there exists a positive constant $M$ such that
$$ \|w^i\|_s\leq M,\quad \text{for}\quad i=1,2, $$ and we denote
$ K^i_a=K_a(v,t,x,w^i)$ and $\overline{K}^i_a=\overline{K}_a(v,t,x,w^i)\geq 0$, then we assume that
\begin{equation} \label{eq:2.10}
\begin{cases}
\displaystyle
\|K^i_a\|_{L^\infty(\mathbb{R}^+;C([0,T];H^s))}+\|\overline{K}^i_a\|_{L^2\cap L^\infty(\mathbb{R}^+;C([0,T];H^s))}\leq C_{s,M} \|w^i\|_s,\\[9pt]
\displaystyle
|\overline{K}_a(v,t,x,w^1)-\overline{K}_a(v,t,x,w^2)|\leq K(v,t,x)|w^1-w^2|,\\[9pt]
\displaystyle
\|K(v,t,x)\|_{L^\infty\cap L^2(\mathbb{R}^+;L^\infty([0,T]\times \mathbb{R}^3))}\leq C_{s,M}
\end{cases}
\end{equation}
for any $ t\in [0,T]$, where $M$ is a positive constant and $C_{s,M}$ is also a positive constant depending only on $s$ and $M$.

 Now we give the existence of unique local classical solutions to  (\ref{eq:2.6})-(\ref{eq:2.8}):
\begin{theorem}\label{th:1}
Let $s\geq 3$ be an integer.
If the initial data satisfy
\begin{equation*}\begin{split}
(I_0,U_0) \in \Phi:&=\left\{(I,U)|U(x)\in H^{s}(\mathbb{R}^3), \ (I(v,\Omega, x)-\overline{B}(v)) \in L^2\left(\mathbb{R}^+\times S^{2};H^{s}(\mathbb{R}^3)\right)\right\},
\end{split}
\end{equation*}
then there exists $T > 0$ such that Cauchy problem (\ref{eq:2.6})-(\ref{eq:2.8}) admits a unique classical solution $(I,U)$ satisfying
\begin{equation*}
\begin{split}
&U(t,x) \in C^1\left([0,T)\times \mathbb{R}^3\right) ),\\
&I(v,\Omega, t,x)-\overline{B}(v) \in L^2\left( \mathbb{R}^+\times S^{2}; C^1([0,T)\times \mathbb{R}^3)\right ).
\end{split}
\end{equation*}
\end{theorem}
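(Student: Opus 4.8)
The plan is to construct the solution by a linearized iteration scheme that decouples the radiation transport equation from the fluid system at each step, and then to close the argument by \emph{uniform} high-order energy estimates together with a \emph{contraction} in a lower-order norm. Starting from $(I^0,U^0)=(I_0,U_0)$ and given the $k$-th iterate $(I^k,U^k)$ with $U^k=(w^k,u^k)$, I would first define $I^{k+1}$ as the solution of the linear transport equation obtained from the first line of \eqref{eq:2.6} by freezing the coefficient $K_a(v,t,x,w^k)$. Since $\Omega$ is a constant parameter, the characteristics are the straight lines $x(t)=x_0+c\Omega t$, so $I^{k+1}$ is given by the explicit integrating-factor formula
\[
I^{k+1}(v,\Omega,t,x)=\overline{B}(v)+\big(I_0(v,\Omega,x-c\Omega t)-\overline{B}(v)\big)\exp\Big(-c\!\int_0^t K_a(v,s,x-c\Omega(t-s),w^k)\,\mathrm{d}s\Big),
\]
which, together with $K_a^k\geq0$ and Lemma \ref{pro:2.3}, propagates $H^s_x$-regularity with $L^2(\mathbb{R}^+\times S^2)$-values. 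I would then define $U^{k+1}$ as the solution of the linear symmetric hyperbolic system $A_0(U^k)\partial_t U^{k+1}+\sum_{j=1}^3 A_j(U^k)\partial_{x_j}U^{k+1}=G(I^{k+1},U^k)$ with $U^{k+1}|_{t=0}=U_0$, which is solvable by the classical Friedrichs--Kato theory because $A_0(U^k)$ is bounded and positive definite by \eqref{zhengding} and the $A_j(U^k)$ are symmetric.

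Next I would establish uniform bounds. Differentiating the $U^{k+1}$-equation up to order $s$, pairing with $\partial^\alpha U^{k+1}$, and using the symmetry of the $A_j$, the positive-definiteness \eqref{zhengding}, and the commutator and product estimates of Lemma \ref{pro:2.3}, I obtain an inequality of the form
\[
\frac{\mathrm{d}}{\mathrm{d}t}\|U^{k+1}\|_s^2\leq C(M)\big(\|U^{k+1}\|_s^2+\|G(I^{k+1},U^k)\|_s^2\big).
\]
The crucial term is the radiation source $G$: to bound $\|G(I^{k+1},U^k)\|_s$ I would apply \textbf{Minkowski's inequality} to move the $H^s_x$-norm inside the $\int_0^\infty\!\int_{S^2}\cdots\,\mathrm{d}\Omega\,\mathrm{d}v$ integral, then use the structure assumptions \eqref{eq:2.10} on $\overline{K}_a$ together with the Cauchy--Schwarz inequality in $v$ and the uniform $L^2(\mathbb{R}^+\times S^2;H^s_x)$-bound on $I^{k+1}-\overline{B}$. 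A parallel $L^2(\mathbb{R}^+\times S^2;H^s_x)$ estimate for the transport equation, via the characteristic formula above and the bound $\|K_a^k\|_{H^s}\lesssim\|w^k\|_s$ from \eqref{eq:2.10}, controls $I^{k+1}-\overline{B}$. Choosing $M$ larger than the data norms and $T$ small, an induction shows that the whole sequence stays in the ball $|\|U^{k}\||_{s,T}\leq M$ with $I^k-\overline{B}$ uniformly bounded in $L^2(\mathbb{R}^+\times S^2;C([0,T];H^s))$.

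Then I would prove convergence. Writing $\delta I^{k+1}=I^{k+1}-I^k$ and $\delta U^{k+1}=U^{k+1}-U^k$, the differences satisfy the same linear equations with extra inhomogeneities measuring the differences of frozen coefficients: for $\delta I^{k+1}$ a term $-\big(K_a(w^k)-K_a(w^{k-1})\big)(I^k-\overline{B})$, controlled by the Lipschitz bound in \eqref{eq:2.10} by $\lesssim\|\delta w^k\|$; and for $\delta U^{k+1}$ terms from $A_j(U^k)-A_j(U^{k-1})$ and $G(I^{k+1},U^k)-G(I^k,U^{k-1})$. Carrying out the energy estimates at the \emph{lowest} level (in $L^2_x$ and in $L^2(\mathbb{R}^+\times S^2;L^2_x)$) and absorbing coefficient factors with the uniform $H^s$-bounds, I would obtain
\[
\sup_{[0,T]}\big(\|\delta U^{k+1}\|+\|\delta I^{k+1}\|_{L^2_{v,\Omega,x}}\big)\leq C\,T\,\sup_{[0,T]}\big(\|\delta U^{k}\|+\|\delta I^{k}\|_{L^2_{v,\Omega,x}}\big),
\]
so that for $T$ small the scheme is a contraction. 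Interpolating the low-norm convergence against the uniform $H^s$-bounds upgrades it to $C([0,T];H^{s'})$ for every $s'<s$, enough to pass to the limit in all nonlinear terms; the limit $(I,U)$ lies in $C([0,T];H^s)$ and in $C^1([0,T];H^{s-1})$ through the equations, and since $s\geq3>\tfrac{n}{2}+1$ the embedding \eqref{pro:2.300} gives the stated $C^1([0,T)\times\mathbb{R}^3)$ regularity for $U$, and for $I-\overline{B}$ with $L^2(\mathbb{R}^+\times S^2)$-values via the characteristic formula. Uniqueness follows from the same low-order estimate applied to the difference of two solutions.

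I expect the main obstacle to be the treatment of the radiation source $G$ and the feedback of $w$ into $K_a$: unlike the finite-dimensional fluid part, $I$ carries the extra infinite-dimensional parameters $(v,\Omega)$, so every estimate for $G$ must commute the spatial Sobolev norm with the $(v,\Omega)$-integration. This is precisely where Minkowski's inequality and the $L^2$-in-$v$ integrability built into \eqref{eq:2.10} are essential, and getting the two coupled estimates (fluid energy and radiation transport) to close simultaneously at both the high $H^s$ and the low $L^2$ levels, with constants uniform in $k$, is the technical heart of the proof.
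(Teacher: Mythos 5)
Your proposal follows essentially the same route as the paper's proof: a decoupled linearized iteration with frozen coefficient $K_a(v,t,x,w^{(k)})$, uniform $H^s$ energy estimates in which Minkowski's inequality and the structure assumptions (\ref{eq:2.10}) control the radiation source $G$, a low-norm contraction for small $T$, and interpolation plus Sobolev embedding (with $s'>\frac{5}{2}$) to recover the stated $C^1$ regularity and uniqueness. The only differences are technical rather than substantive: the paper mollifies the initial data at each step of the iteration (which is why its contraction estimate in Lemma \ref{lemma:2:2} carries the extra summable terms $\beta_k+\mu_k$, absent in your clean contraction), it handles the transport step by direct energy estimates rather than your explicit characteristic formula, and it places $I^{(k)}$ rather than $I^{(k+1)}$ in the source of the fluid step.
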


\begin{remark}\label{ku}
The assumptions in Theorem \ref{th:1} for isentropic flows can be satisfied when the absorption coefficient is given by, for example (see \cite{sjx} \text{or}  \cite{gp}),
\begin{equation}\label{kk}
\begin{split}
&K_a(v,t,x,\rho)=D_1\rho \theta^{-\frac{1}{2}}\exp\left(-\frac{D_2}{\theta^{\frac{1}{2}}}\left(\frac{v-v_0}{v_0}\right)^2\right),\\
\end{split}
\end{equation}
where $\theta$ is the temperature, $v_0$ is the fixed frequency, $D_i(i=1,2)$ are positive constants. For isentropic polytropic gas, we know that $
p_m=R\rho\theta=\rho^\gamma$,
where $R$ is a positive constant. So we have $w=\rho^{\frac{\gamma-1}{2}}=\sqrt{R}\theta^{\frac{1}{2}}$, and
\begin{equation}
\begin{split}\label{kkc}
&\lim_{\rho\rightarrow 0}\frac{K_a(v,t,x,\rho)}{\rho}=\lim_{\theta \rightarrow 0} D_1\theta^{-\frac{1}{2}}\exp\left(-\frac{D_2}{\theta^{\frac{1}{2}}}\left(\frac{v-v_0}{v_0}\right)^2 \right)=0,\\[6pt]
&\lim_{\rho\rightarrow +\infty}\frac{K_a(v,t,x,\rho)}{\rho}=\lim_{\theta \rightarrow +\infty} D_1\theta^{-\frac{1}{2}}\exp\left(-\frac{D_2}{\theta^{\frac{1}{2}}}\left(\frac{v-v_0}{v_0}\right)^2\right)=0.
\end{split}
\end{equation}
Then in the case (\ref{kk}), we have
\begin{equation}\label{kkaa}
\begin{split}
&K_a(v,t,x,w)=D_1\sqrt{R}w^{\frac{3-\gamma}{\gamma-1}}\exp\left(-\frac{D_2\sqrt{R}}{w}\left(\frac{v-v_0}{v_0}\right)^2\right)=w^{\frac{2}{\gamma-1}}\overline{K}_a(v,t,\cdot,w),\\
&\overline{K}_a(v,t,x,w)=D_1\sqrt{R}\frac{1}{w}\exp\left(-\frac{D_2\sqrt{R}}{w}\left(\frac{v-v_0}{v_0}\right)^2\right).\\
\end{split}
\end{equation}
When $1< \gamma \leq 3$, it is easy to verify that the assumptions (\ref{eq:2.10}) in Theorem \ref{th:1} are satisfied if $s=3$.
\end{remark}

\textbf{Now we start proving Theorem \ref{th:1}.}
\begin{proof}
The proof is based on standard energy estimates as well as Banach contraction mapping principle.
Let $j(x)\in C^{\infty}_0(\mathbb{R}^3)$ be the standard mollifier satisfying
\begin{equation*}\begin{split}
&\supp j(x)\subseteq \{x:|x| \leq 1\},\ \int_{\mathbb{R}^3}j(x)\text{d}x=1, \ \forall \ j \geq 0.
\end{split}
\end{equation*}
Set $j_{\epsilon}=\epsilon^{-3}j(\frac{x}{\epsilon})$ and define $j_\epsilon u\in C^{\infty}$ by
\begin{equation*}
j_\epsilon u(x)=\int_{\mathbb{R}^3} j_\epsilon(x-y)u(y)dy.
\end{equation*}
For $k=0,1,2,...$, take $\epsilon_k=2^{-k}\epsilon_0 \ $ and
$$
U^{(k)}_0=j_{\epsilon_k}U_0(x), \ I^{(k)}_0=j_{\epsilon_k}I_0(v,\Omega,x),
$$
where $\epsilon_0$ is to be chosen later. We construct approximate solutions to (\ref{eq:2.6})-(\ref{eq:2.8}) through the following iteration scheme. We take
$$
U^{(0)}(t,x)=U^{(0)}_0(x), \ I^{(0)}(v,\Omega,t,x)=I^{(0)}_0(v,\Omega,x).
$$
For $k=0,1,...$, we define $U^{(k+1)}(t,x)$ and $I^{(k+1)}(v,\Omega,t,x)$ inductively as the solution of the following linearized problem:
\begin{equation}\label{eq:2.13}
\begin{cases}
\displaystyle
\frac{1}{c}\partial_t(I^{(k+1)}-\overline{B}(v))+\Omega\cdot\nabla (I^{(k+1)}-\overline{B}(v))=-K_a^{(k)}\cdot(I^{(k+1)}-\overline{B}(v)),\\[8pt]
\displaystyle
A_{0}(U^{(k)})\partial_tU^{(k+1)}+\sum_{j=1}^{3} A_{j}(U^{(k)}) \partial_{x_{j}}U^{(k+1)}=G(I^{(k)},U^{(k)}),\\[8pt]
I^{(k+1)}|_{t=0}=I^{(k+1)}_0(v,\Omega,x), \ U^{(k+1)}|_{t=0}=U^{(k+1)}_0(x),
\end{cases}
\end{equation}
where
\begin{equation}\label{jihao}
\begin{split}
&G_0=0,\ G_j(I^{(k)},U^{(k)})=\frac{(\gamma-1)^2}{4c\gamma}\left(\int_0^\infty \int_{S^{2}} \overline{K}^{(k)}_a\cdot(I^{(k)}-\overline{B}(v))\Omega_j \text{d}\Omega \text{d}v\right),\\
& \overline{K}^{(k)}_a=\overline{K}_a(v,t,x,w^{(k)}),\ K^{(k)}_a=K_a(v,t,x,w^{(k)}).
\end{split}
\end{equation}
%
It follows immediately that
\begin{equation}\label{eq:2.16}
U^{(k+1)}\in C^\infty\left([0,T_k]\times \mathbb{R}^3\right),\ I^{(k+1)}-\overline{B}(v)\in L^2\left((0,\infty)\times S^2;\ C^\infty([0,T_k]\times \mathbb{R}^3)\right)
\end{equation}
with $T_k$ being the largest time of existence for (\ref{eq:2.13}) where the estimates
\begin{equation}\label{eq:2.17}
 \int_{0}^{\infty}\int_{S^{2}} |\|I^{(k)}-I^{(0)}_0\||^2_{s,T_k} \text{d}\Omega \leq C \ \text{and}\ |\|U^{(k)}-U^{(0)}_0\||_{s,T_k}\leq C
\end{equation}
are valid for any given constant $C>0$. Hereinafter, $C$ stands for a generic positive constant.

Of course, in order to get the compactness, we have to guarantee that there exists a $ T>0 $ such that $ T_k \ge T $ for each $k$. So the following lemma which gives the uniform estimates of high order norms is very important.
\begin{lemma}[\textbf{Boundness of high order norms}]  \label{lemma:2:1}\ \\
There exist constants $C_1 > 0$ and $T_* > 0$ such that the solutions $(I^{(k)},U^{(k)})$ $(k=0,1,2,...)$ to (\ref{eq:2.13}) satisfy
\begin{equation}\label{eq:2.19}
 |\|U^{(k)}-U^{(0)}_0\||_{s,T_*}+|\|\partial_tU^{(k)}\||_{s-1,T_*}+\int_{0}^{\infty}\int_{S^{2}} |\|I^{(k)}-I^{(0)}_0\||^2_{s,T_*} \text{d}\Omega \text{d}v\leq C_1.
\end{equation}
\end{lemma}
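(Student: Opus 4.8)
The plan is to establish \eqref{eq:2.19} by induction on $k$, choosing the constant $C_1$ and the time $T_*$ so that the bound at level $k$ propagates to level $k+1$. The base case is immediate, since $U^{(0)}\equiv U_0^{(0)}$ and $I^{(0)}\equiv I_0^{(0)}$ make the left-hand side vanish. The key structural input is that mollification does not increase the $H^s$ norm, so that $\|U_0^{(k)}\|_s\le\|U_0\|_s$ and $\int_0^\infty\int_{S^2}\|I_0^{(k)}-\overline{B}(v)\|_s^2\,\mathrm{d}\Omega\,\mathrm{d}v\le\int_0^\infty\int_{S^2}\|I_0-\overline{B}(v)\|_s^2\,\mathrm{d}\Omega\,\mathrm{d}v$ uniformly in $k$; this pins down a uniform bound on the initial energy, from which I would fix $C_1$ (say twice this value) and leave $T_*$ to be determined at the end. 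Under the induction hypothesis the bound on $U^{(k)}$ supplies a uniform $\|w^{(k)}\|_s\le M$, which is exactly the hypothesis required to invoke the structure assumptions \eqref{eq:2.10} on $K_a^{(k)}$ and $\overline{K}_a^{(k)}$.

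First I would derive the energy estimate for the radiation field. Writing $V^{(k+1)}=I^{(k+1)}-\overline{B}(v)$, the first line of \eqref{eq:2.13} reads $\frac1c\partial_tV^{(k+1)}+\Omega\cdot\nabla V^{(k+1)}=-K_a^{(k)}V^{(k+1)}$. For fixed $(v,\Omega)$ and each $|\alpha|\le s$ I apply $D^\alpha$, pair with $D^\alpha V^{(k+1)}$ in $L^2(\mathbb{R}^3)$, and integrate: the constant-coefficient transport term $\Omega\cdot\nabla$ integrates to zero, and the absorption term contributes $-\int K_a^{(k)}|D^\alpha V^{(k+1)}|^2\,\mathrm{d}x\le0$ by $K_a^{(k)}\ge0$, so it only helps. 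The sole nontrivial term is the commutator $[D^\alpha,K_a^{(k)}]V^{(k+1)}$, which I bound using Lemma \ref{pro:2.3} together with \eqref{eq:2.10}. Integrating over $(v,\Omega)\in\mathbb{R}^+\times S^2$ and using the uniform bound on $\|K_a^{(k)}\|$ yields a Gronwall-type inequality for $\int_0^\infty\int_{S^2}\|V^{(k+1)}\|_s^2\,\mathrm{d}\Omega\,\mathrm{d}v$.

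Next I would treat the symmetric system by the Friedrichs--Kato method: apply $D^\alpha$ to the second line of \eqref{eq:2.13} and pair $D^\alpha U^{(k+1)}$ against $A_0(U^{(k)})D^\alpha U^{(k+1)}$. Symmetry of the $A_j$ lets me integrate the principal part by parts, leaving only the lower-order matrices $\partial_tA_0(U^{(k)})$ and $\partial_{x_j}A_j(U^{(k)})$, which are controlled since $\partial_tU^{(k)}$ and $\nabla U^{(k)}$ are bounded by the induction hypothesis; the commutators $[D^\alpha,A_j]\partial_{x_j}U^{(k+1)}$ are again handled by Lemma \ref{pro:2.3}, and the positive definiteness \eqref{zhengding} makes $(A_0 D^\alpha U^{(k+1)},D^\alpha U^{(k+1)})$ equivalent to $\|D^\alpha U^{(k+1)}\|^2$. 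The hard part will be the source term $G$: since each $G_j$ is the integral $\frac{(\gamma-1)^2}{4c\gamma}\int_0^\infty\int_{S^2}\overline{K}_a^{(k)}(I^{(k)}-\overline{B}(v))\Omega_j\,\mathrm{d}\Omega\,\mathrm{d}v$ over frequency and direction, I must pass the $H^s$ norm inside by Minkowski's inequality, then use Cauchy--Schwarz in $(v,\Omega)$ and the product estimate \eqref{pro:2.311} to obtain $\|G\|_s\le C\big(\int_0^\infty\int_{S^2}\|\overline{K}_a^{(k)}\|_s^2\,\mathrm{d}\Omega\,\mathrm{d}v\big)^{1/2}\big(\int_0^\infty\int_{S^2}\|I^{(k)}-\overline{B}(v)\|_s^2\,\mathrm{d}\Omega\,\mathrm{d}v\big)^{1/2}$, both factors being under control by \eqref{eq:2.10} and the induction hypothesis. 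Reconciling the $L^2$-in-$(v,\Omega)$ nature of the radiation with the $H^s$ energy of $U$ through this Minkowski step is the crux of the whole estimate.

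Finally I would bound $\partial_tU^{(k+1)}$ in $H^{s-1}$ directly from the equation, writing $\partial_tU^{(k+1)}=A_0(U^{(k)})^{-1}\big(G(I^{(k)},U^{(k)})-\sum_{j=1}^3A_j(U^{(k)})\partial_{x_j}U^{(k+1)}\big)$ and using that $A_0^{-1}$ is smooth and that $H^{s-1}$ is an algebra for $s-1\ge2$. Summing the radiation and hydrodynamic inequalities produces $\frac{\mathrm{d}}{\mathrm{d}t}E_{k+1}(t)\le C(C_1)\big(1+E_{k+1}(t)\big)$ for the combined energy $E_{k+1}$, whose initial value is uniformly bounded by the mollification remark above. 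A Gronwall estimate then gives $E_{k+1}(t)\le C_1$ on $[0,T_*]$ once $T_*$ is chosen small enough depending only on $C_1$ and the initial data, closing the induction and proving \eqref{eq:2.19} uniformly in $k$.
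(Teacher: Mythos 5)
Your proposal is correct and follows essentially the same route as the paper: induction on $k$, $H^s$ energy estimates for the radiation transport equation and for the symmetric hyperbolic system (the latter being exactly the standard Friedrichs--Kato procedure that the paper simply cites from Majda), Minkowski's inequality together with \eqref{pro:2.311} and \eqref{eq:2.10} to control the coupling source $G$, and a Gronwall argument with $T_*$ chosen small after $C_1$ is fixed. The only cosmetic difference is that you estimate $V^{(k+1)}=I^{(k+1)}-\overline{B}(v)$, whose equation is homogeneous so the sign $K_a^{(k)}\geq 0$ can be exploited and only a commutator remains, whereas the paper estimates $Q^{(k+1)}=I^{(k+1)}-I^{(0)}_0$ and must carry the extra source $H^{(k)}$ (using the $H^{s+1}$ regularity of the mollified data); both yield the bound in \eqref{eq:2.19} after a triangle inequality.
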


\begin{proof} By induction, it is sufficient to prove that (\ref{eq:2.19}) holds for $(I^{(k+1)},U^{(k+1)})$ under the assumption that (\ref{eq:2.19}) holds for $(I^{(k)},U^{(k)})$. We divide the proof into three steps.

\underline{Step 1}. The estimate of $ \int_{0}^{\infty}\int_{S^{2}} |\|I^{(k+1)}-I^{(0)}_0\||^2_{s,T} \text{d}\Omega \text{d}v $.

Let $ Q^{(k+1)}=I^{(k+1)}-I^{(0)}_0 $. Then $ Q^{(k+1)}$ satisfies
\begin{equation} \label{eq:2.21}
\begin{cases}
\displaystyle
\frac{1}{c}\partial_tQ^{(k+1)}+\Omega\cdot\nabla Q^{(k+1)}
=-K_a^{(k)}Q^{(k+1)}+H^{(k)},\\
Q^{(k+1)}|_{t=0}=I^{(k+1)}_0-I^{(0)}_0 ,
\end{cases}
\end{equation}
where
$$
H^{(k)}=-K^{(k)}_a\cdot (I^{(0)}_0 -\overline{B}(v))-\Omega \cdot \nabla( I^{(0)}_0 -\overline{B}(v)).
$$
Differentiating the equation in (\ref{eq:2.21}) $\alpha$-times ($|\alpha|\leq s$) with respect to $x$ and multiplying the resulting equation by $D^{\alpha}Q^{(k+1)}$, we have
\begin{equation} \label{eq:2.23}
\ \frac{1}{2c}\frac{d}{\text{d}t}(D^{\alpha}Q^{(k+1)})^2+\frac{1}{2}\Omega\cdot\nabla (D^{\alpha}Q^{(k+1)})^2
=-D^{\alpha}(K^{(k)}_a Q^{(k+1)})D^{\alpha}Q^{(k+1)}
+D^{\alpha}H^{(k)}D^{\alpha}Q^{(k+1)}.
\end{equation}
Integrating (\ref{eq:2.23}) with respect to $x$ over $ \mathbb{R}^3$ and using the Cauchy's inequality, we have
\begin{equation} \label{eq:2.24}
\begin{split}
\frac{d}{\text{d}t}\int_{\mathbb{R}^3}|D^{\alpha}Q^{(k+1)}|^2dx
\leq&  C \|D^{\alpha}Q^{(k+1)}\|^2 +\|D^{\alpha}(K^{(k)}_aQ^{(k+1)})\|^2
+\|D^{\alpha}H^{(k)}\|^2 \\
=&:J_1+J_2+J_3.
\end{split}
\end{equation}
According to Lemma \ref{pro:2.3} and (\ref{pro:2.311}), we get
\begin{equation}\label{xxl}
J_2=\|D^{\alpha}(K^{(k)}_aQ^{(k+1)})\|^2
\leq C \|K^{(k)}_a\|^2_{s}\|Q^{(k+1)}\|^2_s,
\end{equation}
and
\begin{equation} \label{eq:2.26}
\begin{split}
J_3=&\|D^{\alpha}H^{(k)}\|^2
=\|D^{\alpha}(-K^{(k)}_a\cdot (I^{(0)}_0 -\overline{B}(v))-\Omega \cdot \nabla( I^{(0)}_0 -\overline{B}(v)))\|^2\\
\leq&  C \|K^{(k)}_a\|^2_{s}\|I^{(0)}_0-\overline{B}(v)\|^2_s +\| I^{(0)}_0-\overline{B}(v)\|_{s+1}^2.
\end{split}
\end{equation}
Combining (\ref{eq:2.10}) and (\ref{eq:2.24})-(\ref{eq:2.26}), we arrive at
\begin{equation*}\begin{split}
\frac{d}{\text{d}t}\|Q^{(k+1)}\|_s^2
\leq&  C(1+\|U^{(k)}\|_s)\|Q^{(k+1)}\|_s^2
+\|U^{(k)}\|_s\|I^{(0)}_0-\overline{B}(v)\|^2_s
+\|I^{(0)}_0-\overline{B}(v)\|^2_{s+1} \\
\leq & C\Big(\|Q^{(k+1)}\|_s^2+\|I^{(0)}_0-\overline{B}(v)\|^2_s\Big)+\|I^{(0)}_0-\overline{B}(v)\|^2_{s+1}.
\end{split}
\end{equation*}
By Gronwall's inequality we obtain
$$
|\|Q^{(k+1)}\||^2_{s,T}\leq e^{CT}\left(\|Q^{(k+1)}_0\|^2_s+T\big(\|I^{(0)}_0-\overline{B}(v)\|^2_s+\|I^{(0)}_0-\overline{B}(v)\|^2_{s+1}\big)\right
).$$
It is obvious that
$$
\int_{0}^{\infty}\int_{S^{2}}|\|Q^{(k+1)}\||^2_{s,T}\text{d}\Omega \text{d}v\leq e^{CT}\left(\int_{0}^{\infty}\int_{S^{2}}\|Q^{(k+1)}_0\|_s^2\text{d}\Omega \text{d}v+T\right).
$$
Taking $T=T_1$ to be small enough, we have
\begin{equation}\label{kka}
\int_{0}^{\infty}\int_{S^{2}} |\|I^{(k+1)}-I^{(0)}_0\||^2_{s,T_1} \text{d}\Omega \text{d}v\leq CC_1.
\end{equation}

\underline{Step 2}. The estimate of source terms $\|D^{\alpha}G(I^{(k)}, U^{(k)})\|$, $\forall|\alpha| \leq s $.

Due to Minkowski's inequality, Holder's inequality and (\ref{pro:2.311}), for $|\alpha| \leq s $, we have
\begin{equation}\label{eq:2.30}
\begin{split}
 \| D^{\alpha}G_j(I^{(k)},U^{(k)}) \|
  \leq &
C
 \Big\|D^\alpha \int_0^\infty \int_{S^{2}}    \overline{K}^{(k)}_a\cdot(I^{(k)}-\overline{B}(v))\Omega_j \text{d}\Omega \text{d}v \Big\|\\
\leq & C \int_0^\infty \int_{S^{2}}\|\overline{K}^{(k)}_a\|_s \|I^{(k)}-\overline{B}(v)\|_s \text{d}\Omega \text{d}v \\
\leq & C \left( \int_0^\infty \int_{S^{2}}\| \overline{K}^{(k)}_a\|^2_s \text{d}\Omega \text{d}v+\int_{0}^{\infty}\int_{S^{2}} \|I^{(k)}-\overline{B}(v)\|_s^2 \text{d}\Omega \text{d}v \right),
\end{split}
\end{equation}
which implies that
\begin{equation}\label{eq:2.33}
\begin{split}
\|G(I^{(k)},U^{(k)})\|_s \leq & C \left( \int_0^\infty \int_{S^{2}}\| \overline{K}^{(k)}_a\|^2_s \text{d}\Omega \text{d}v+\int_{0}^{\infty}\int_{S^{2}} \|I^{(k)}-\overline{B}(v)\|_s^2  \text{d}\Omega \text{d}v\right).
\end{split}
\end{equation}
Then from (\ref{kka}) and assumptions (\ref{eq:2.10}), we see that
\begin{equation}\label{eq:2.3rr}
\|G(I^{(k)},U^{(k)})\|_s \leq  C\left(\|U^{(k)}\|_s+\int_{0}^{\infty}\int_{S^{2}}\|I^{(k)}-\overline{B}(v)\|_s^2 \text{d}\Omega \text{d}v \right)\leq C_1.
\end{equation}
\underline{Step 3}. In order to estimate (\ref{eq:2.19}), define $M^{(k+1)}=U^{(k+1)}-U^{(0)}_0$, and it is easy to get
\begin{equation}\label{eq:2.35}
\begin{cases}
\displaystyle
 A_{0}(U^{(k)})\partial_tM^{(k+1)}+\sum_{j=1}^{3} A_{j}(U^{(k)})\partial_{x_{j}}{M^{(k+1)}}=G(I^{(k)},U^{(k)})+\overline{H}^{(k)},\\[10pt]
 M^{(k+1)}(x,0)=U^{(k+1)}_0(x)-U^{(0)}_0(x),
\end{cases}
\end{equation}
where
$$\overline{H}^{(k)}=-\sum_{j=1}^{3} A_{j}(U^{(k)})\partial_{x_{j}}{U^{(0)}_0}.$$
With the aid of the steps $1$ and $2$, it is easy to follow the standard procedure as in \cite{amj} and obtain that there exists a time $T_2$ such that
\begin{equation}\label{eq:2.36}
|\|U^{(k+1)}-U^{(0)}_0\||_{s,T_2}+|\|\partial_tU^{(k+1)}\||_{s-1,T_2}\leq C_1.
\end{equation}

Let $ T_*=\min\{T_1,T_2\} $ . Then the conclusions in Lemma \ref{lemma:2:1} are obtained.
\end{proof}

The following lemma implies that the operator associated with $(I^{(k)}, U^{(k)})$ is contracted.

\begin{lemma}\label{lemma:2:2}
There exist constants $T_{**} \in [0,T_*]$, $\eta < 1$ , $\{\beta_k\} \ (k=1,2,...)$  and  $\{\mu_k\} \ (k=1,2,...)$  with  $\sum_{k} |\beta_k|< +\infty$, and
$\sum_{k} |\mu_k|< +\infty $ , such that for each $k$
\begin{equation}\label{eq:2.37}
\begin{split}
&|\|U^{(k+1)}-U^{(k)}\||_{0,T_{\ast\ast}}+\left(\int_0^\infty \int_{S^{2}}|\|I^{(k+1)}-I^{(k)}\||^2_{0,T_{\ast\ast}} \text{d}\Omega \text{d}v\right)^{\frac{1}{2}}\\
\leq &\eta \left(|\|U^{(k)}-U^{(k-1)}\||_{0,T_{\ast\ast}}+\left(\int_0^\infty \int_{S^{2}}\||I^{(k)}-I^{(k-1)}\||^2_{0,T_{\ast\ast}} \text{d}\Omega \text{d}v\right)^{\frac{1}{2}}\right)+\beta_k+\mu_k.
\end{split}
\end{equation}
\end{lemma}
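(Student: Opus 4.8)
The plan is to derive low-order ($L^2$) energy estimates for the differences $\delta U^{(k+1)}:=U^{(k+1)}-U^{(k)}$ and $\delta I^{(k+1)}:=I^{(k+1)}-I^{(k)}$, obtained by subtracting consecutive copies of the linearized system (\ref{eq:2.13}). The structural observation driving the whole argument is that $\delta I^{(k+1)}$ is forced only by $\delta U^{(k)}$ (through the coefficient $K_a$), while $\delta U^{(k+1)}$ is forced by $\delta I^{(k)}$ and $\delta U^{(k)}$; hence the combined lowest-order quantity at step $k+1$ is controlled by the same quantity at step $k$ times a factor proportional to the time $T$, plus the differences of the mollified initial data. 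Shrinking $T_{**}$ then converts the time factor into $\eta<1$.

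First, for the radiation part I would subtract the first equations of (\ref{eq:2.13}) at levels $k+1$ and $k$ to get
\[
\frac1c\partial_t\delta I^{(k+1)}+\Omega\cdot\nabla\delta I^{(k+1)}=-K_a^{(k)}\delta I^{(k+1)}-\big(K_a^{(k)}-K_a^{(k-1)}\big)\big(I^{(k)}-\overline B(v)\big),
\]
multiply by $\delta I^{(k+1)}$ and integrate over $x$. The transport term $\Omega\cdot\nabla$ is a perfect divergence and drops, while $-K_a^{(k)}|\delta I^{(k+1)}|^2\le0$ by (\ref{fg}) and is discarded. The source is handled by the Lipschitz bound in (\ref{eq:2.10}), $|K_a^{(k)}-K_a^{(k-1)}|\lesssim|w^{(k)}-w^{(k-1)}|$, so that after integration in $(v,\Omega)$ and use of Minkowski's and H\"older's inequalities one obtains $\frac{d}{dt}\int_0^\infty\int_{S^2}\|\delta I^{(k+1)}\|_0^2\,d\Omega\,dv\le C\int_0^\infty\int_{S^2}\|\delta I^{(k+1)}\|_0^2\,d\Omega\,dv+C\|\delta U^{(k)}\|_0^2$, where $C$ absorbs the uniform high-order bounds of Lemma \ref{lemma:2:1}.

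Next, for the fluid part, since $A_0$ is constant the subtraction of the second equations of (\ref{eq:2.13}) yields
\[
A_0\partial_t\delta U^{(k+1)}+\sum_{j=1}^3 A_j(U^{(k)})\partial_{x_j}\delta U^{(k+1)}=\big(G(I^{(k)},U^{(k)})-G(I^{(k-1)},U^{(k-1)})\big)-\sum_{j=1}^3\big(A_j(U^{(k)})-A_j(U^{(k-1)})\big)\partial_{x_j}U^{(k)},
\]
with no $A_0$-difference term appearing. The standard Friedrichs symmetrization (multiply by $\delta U^{(k+1)}$, integrate, use symmetry of $A_j$ to move a derivative onto the bounded coefficients) controls the left side from below by $\frac{1}{2}\frac{d}{dt}(A_0\delta U^{(k+1)},\delta U^{(k+1)})-C\|\delta U^{(k+1)}\|_0^2$. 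On the right, $A_j(U^{(k)})-A_j(U^{(k-1)})$ is $O(|\delta U^{(k)}|)$ and $\partial_{x_j}U^{(k)}$ is uniformly bounded in $L^\infty$ (via $s\ge3$ and Lemma \ref{lemma:2:1}), whereas $G(I^{(k)},U^{(k)})-G(I^{(k-1)},U^{(k-1)})$ splits into a term in $\overline K_a^{(k)}\delta I^{(k)}$ and a term in $(\overline K_a^{(k)}-\overline K_a^{(k-1)})(I^{(k-1)}-\overline B(v))$, again estimated by (\ref{eq:2.10}) and Minkowski's inequality; together they are bounded by $C\big(\|\delta U^{(k)}\|_0+(\int_0^\infty\int_{S^2}\|\delta I^{(k)}\|_0^2\,d\Omega\,dv)^{1/2}\big)$. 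With (\ref{zhengding}) this produces the analogous differential inequality for $\|\delta U^{(k+1)}\|_0^2$.

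Finally I would add the two inequalities and apply Gronwall on $[0,T]$, so that the lowest-order norm of $(\delta U^{(k+1)},\delta I^{(k+1)})$ is bounded by $e^{CT}$ times the sum of $CT$ times the previous-level quantity and the initial-data differences $\|U_0^{(k+1)}-U_0^{(k)}\|_0$ and $(\int_0^\infty\int_{S^2}\|I_0^{(k+1)}-I_0^{(k)}\|_0^2\,d\Omega\,dv)^{1/2}$. Taking $T_{**}$ small makes $\eta:=CT_{**}e^{CT_{**}}<1$, and the initial-data terms become $\beta_k,\mu_k$; their summability follows from the dyadic choice $\epsilon_k=2^{-k}\epsilon_0$ together with the mollifier estimate $\|j_{\epsilon_{k+1}}f-j_{\epsilon_k}f\|_0\le C\epsilon_k\|f\|_1\le C2^{-k}\epsilon_0\|f\|_1$, giving $\sum_k|\beta_k|+\sum_k|\mu_k|<\infty$. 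I expect the main obstacle to be the careful treatment of the radiation source terms: one must balance the $L^2(\mathbb R^+\times S^2)$-valued intensity against its $L^2_x$ norm, and the crux is exploiting the Lipschitz structure of $\overline K_a$ in (\ref{eq:2.10}) through Minkowski's inequality so that every coupling term closes on the previous-level differences rather than producing an uncontrolled loss of regularity.
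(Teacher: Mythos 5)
Your proposal is correct and follows essentially the same route as the paper: subtract consecutive iterates, exploit the triangular coupling (the $\delta I^{(k+1)}$-equation is forced only by $\delta U^{(k)}$ through the $K_a$-difference, with the nonnegative damping term $-K_a^{(k)}|\delta I^{(k+1)}|^2$ discarded), estimate $Z^{(k)}$ and the $G$-differences via Lemma \ref{lemma:2:1}, the Lipschitz structure in (\ref{eq:2.10}) and Minkowski's inequality, then apply Gronwall, shrink $T_{**}$ to obtain $\eta<1$, and use the dyadic mollifier estimate $\|U_0^{(k+1)}-U_0^{(k)}\|\le C2^{-k}\|U_0\|_1$ for the summability of $\beta_k,\mu_k$. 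The only cosmetic deviations are that you drop the identically vanishing $A_0$-difference term that the paper formally carries in $Z^{(k)}$, and that the paper states the Lipschitz bound for $K_a$ (its (\ref{jia})) as a consequence of (\ref{eq:2.10}) and the factorization $K_a=\rho\overline{K}_a$ rather than quoting it directly; neither affects the argument.
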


\begin{proof}
 According to the second equation in  (\ref{eq:2.13}), we have
 \begin{equation}\label{eq:2.38}
 \begin{split}
 \displaystyle
 &A_{0}(U^{(k)})\partial_t(U^{(k+1)}-U^{(k)})+\sum_{j=1}^3 A_{j}(U^{(k)})\partial_{x_{j}}(U^{(k+1)}-U^{(k)})\\
 =&G(I^{(k)},U^{(k)})-G(I^{(k-1)},U^{(k-1)})+Z^{(k)},
 \end{split}
\end{equation}
where
\begin{equation*}
 Z^{(k)}=-(A_{0}(U^{(k)})-A_{0}(U^{(k-1)}))\partial_t {U^{(k)}}-\sum{j=1}^3 (A_{j}(U^{(k)})-A_{j}(U^{(k-1)}))\partial_{x_{j}}{U^{(k)}}.
\end{equation*}

From Lemma \ref{lemma:2:1} and Taylor's expansion, we easily deduce that, $\forall \tau\in [0,T_*]$,
\begin{equation}\label{eq:2.39}
|\|Z^{(k)}\||_{0,\tau} \leq C|\|U^{(k)}-U^{(k-1)}\||_{0,\tau}.
\end{equation}
According to assumptions (\ref{eq:2.10}), and by Holder's inequality and Minkowski's inequality, we find that
%
\begin{equation}\label{eq:2.41}
\|G_0(I^{(k)},U^{(k)})-G_0(I^{(k-1)},U^{(k-1)})\| =0,
\end{equation}
and for $j=1,2,3$, $\forall \tau\in [0,T_*]$,
\begin{equation}\label{eq:2.41bn}
\begin{split}
&\|G_j(I^{(k)},U^{(k)})-G_j(I^{(k-1)},U^{(k-1)})(v,\Omega,\tau,\cdot)\|\\
\leq & C\int_0^\infty \int_{S^{2}}\|-(I^{(k-1)}-\overline{B}(v))\big(\overline{K}^{(k)}_a-\overline{K}^{(k-1)}_a\big)-\overline{K}^{(k)}_a\cdot(I^{(k)}-I^{(k-1)})\|\text{d}\Omega \text{d}v\\
\leq & C_{s,M} \|U^{(k)}-U^{(k-1)}\|\int_0^\infty \int_{S^{2}}\|I^{(k-1)}(v,\Omega,\tau,\cdot)-\overline{B}(v)\|_{s}\|K(v,\tau,\cdot)\|_{L^\infty(\mathbb{R}^3)}\text{d}\Omega \text{d}v\\
&+C\int_0^\infty \int_{S^{2}}\|\overline{K}_a(v,\tau,\cdot,w^{(k)})\|_{L^\infty(\mathbb{R}^3)}\|(I^{(k)}-I^{(k-1)})(v,\Omega,\tau,\cdot)\|\text{d}\Omega \text{d}v\\
\leq & C_{s,M}\|U^{(k)}-U^{(k-1)}\|\|K\|_{ L^2(\mathbb{R}^+;L^\infty([0,T]\times \mathbb{R}^3))}\left(\int_{0}^{\infty}\int_{S^{2}} \|I^{(k-1)}-\overline{B}(v)\|^2_s \text{d}\Omega \text{d}v\right)^{\frac{1}{2}}\\
&+C\|\overline{K}_a\|_{ L^2(\mathbb{R}^+;L^\infty([0,T]\times \mathbb{R}^3))}\left(\int_0^\infty \int_{S^{2}}\|I^{(k)}-I^{(k-1)}\|^2 \text{d}\Omega \text{d}v\right)^{\frac{1}{2}}\\
\leq & C\left(|\|U^{(k)}-U^{(k-1)}\||_{0,\tau}+\left(\int_0^\infty \int_{S^{2}}|\|I^{(k)}-I^{(k-1)}\||^2_{0,\tau} \text{d}\Omega \text{d}v\right)^{\frac{1}{2}}\right).
\end{split}
\end{equation}
Applying the standard energy estimates to (\ref{eq:2.38}) and using (\ref{eq:2.39})-(\ref{eq:2.41bn}), we easily get
\begin{equation}\label{eq:2.42}
\begin{split}
&|\|U^{(k+1)}-U^{(k)}\||_{0,\tau}\leq  e^{C\tau}\|U^{(k+1)}_0-U^{(k)}_0\|\\
&+e^{C\tau}\tau\left( |\|Z\||_{0,\tau}+|\|G(I^{(k)},U^{(k)})-G(U^{(k-1)},I^{(k-1)})\||_{0,\tau} \right).
\end{split}
\end{equation}
According to the properties of mollifier, for $\epsilon_0$ small enough, we know that
\begin{equation*}
\|J_\epsilon u-u\| \leq C \epsilon \|u\|_1, \quad \forall u \in H^1 ,\quad \epsilon \leq \epsilon_0.
\end{equation*}
So if we take $\epsilon_0<<1$,
then we have
\begin{equation}\label{eq:2.43}
\|U^{(k+1)}_0-U^{(k)}_0\|\leq C2^{-k}\|U_0\|_1.
\end{equation}
From (\ref{eq:2.42}), if we choose $T_3\in [0,T_*]$ to be small enough, then it is easy to get
\begin{equation}\label{eq:2.44}
\begin{split}
&|\|U^{(k+1)}-U^{(k)}\||_{0,T_3}\\
\leq & \eta_1 \left(|\|U^{(k)}-U^{(k-1)}\||_{0,T_3}+\left(\int_0^\infty \int_{S^{2}}\|I^{(k)}-I^{(k-1)}\|^2_{0,T_3} \text{d}\Omega \text{d}v\right)^{\frac{1}{2}}\right)+\beta_k,
\end{split}
\end{equation}
where $\eta_1 < \frac{1}{2}$ and $\sum_{k} |\beta_k|< \infty $.

To bound $I^{(k+1)}-I^{(k)}$, we use the first equation in (\ref{eq:2.13}) to see that
\begin{equation}\label{eq:2.45}
\begin{split}
&\frac{1}{c} \partial_t (I^{(k+1)}-I^{(k)}) +\Omega\cdot\nabla (I^{(k+1)}-I^{(k)})\\
=&(\overline{B}(v)-I^{(k)})(K^{(k)}_a-K^{(k-1)}_a)-K^{(k)}_a\cdot(I^{(k+1)}-I^{(k)}).
\end{split}
\end{equation}
It is easy to show that, $\forall \tau \in [0,T_*]$,
\begin{equation}\label{eq:2.455ttt}
\frac{d}{dt}\|I^{(k+1)}-I^{(k)}\|^2
\leq \|I^{(k)}-\overline{B}(v)\|_s\|K^{(k)}_a-K^{(k-1)}_a\| \|I^{(k+1)}-I^{(k)}\|,
\end{equation}
where we used the fact that $K_a\geq 0$.
From (\ref{eq:2.5}), Lemma \ref{lemma:2:1} and assumptions (\ref{eq:2.10}), we have
\begin{equation}\label{jia}
\begin{split}
&|K_a(v,t,x,w^{(k)})-K_a(v,t,x,w^{(k-1)})|\\
\leq& K(v,t,x)|w^{(k-1)}||w^{(k)}-w^{(k-1)}|+\overline{K}^{(k)}_a|w^{(k)}-w^{(k-1)}|\leq C|w^{(k)}-w^{(k-1)}|.
\end{split}
\end{equation}
Then using Young's inequality, we get
\begin{equation}\label{jiaa11}
\begin{split}
&\|I^{(k)}-\overline{B}(v)\|_s\|K^{(k)}_a-K^{(k-1)}_a\| \|I^{(k+1)}-I^{(k)}\|\\
\leq & C \|U^{(k)}-U^{(k-1)}\|^2\|I^{(k)}-\overline{B}(v)\|^2_s+\|I^{(k+1)}-I^{(k)}\|^2.
\end{split}
\end{equation}
Combining (\ref{eq:2.455ttt})-(\ref{jiaa11}), we have
\begin{equation}\label{eq:2.455}
\begin{split}
&\int_0^\infty \int_{S^{2}}|\|(I^{(k+1)}-I^{(k)})\||^2_{0,\tau} \text{d}\Omega \text{d}v\\
\leq&  e^{CT}\left(\int_0^\infty \int_{S^{2}}\|I^{(k+1)}_0-I^{(k)}_0\|^2 \text{d}\Omega \text{d}v
+C\tau|\|U^{(k)}-U^{(k-1)}\||^2_{0,\tau} \right).
\end{split}
\end{equation}
Similarly to the estimate of $\|U^{(k+1)}_0-U^{(k)}_0\|$, we easily get
\begin{equation*}
\left(\int_0^\infty \int_{S^{2}}\|I^{(k+1)}_0-I^{(k)}_0\|^2 \text{d}\Omega \text{d}v\right)^{\frac{1}{2}} \leq C 2^{-k}\left(\int_0^\infty \int_{S^{2}}\|I_0\|^2_1 \text{d}\Omega \text{d}v\right)^{\frac{1}{2}} .
\end{equation*}
If we choose $T_4\in [0,T_*]$ to be small enough, then we have
\begin{equation}\label{eq:2.46}
\begin{split}
&\left(\int_0^\infty \int_{S^{2}}|\|I^{(k+1)}-I^{(k)}\||^2_{0,T_4} \text{d}\Omega \text{d}v\right)^{\frac{1}{2}}\leq \eta_2|\|U^{(k)}-U^{(k-1)}\||_{0,T_4}
+\mu_k,
\end{split}
\end{equation}
where $\eta_2 < \frac{1}{2}$ and $\sum_{k} |\mu_k|< \infty $. Finally, taking $T_{\ast\ast}=\min\{T_3,T_4\}$, we obtain Lemma \ref{lemma:2:2} by adding (\ref{eq:2.44}) and (\ref{eq:2.46}) together.
\end{proof}

\textbf{Now we continue to prove Theorem \ref{th:1}.}

Lemma \ref{lemma:2:2} tells us that
\begin{equation*}\sum_{k=1}^{\infty}|\|U^{(k+1)}-U^{(k)}\||_{0,T_{\ast\ast}}+\Big(\int_0^\infty \int_{S^{2}}|\|I^{(k+1)}-I^{(k)}\||^2_{0,T_{\ast\ast}} \text{d}\Omega \text{d}v \Big)^{\frac{1}{2}} < +\infty, \end{equation*}
which implies that
\begin{equation}\begin{cases}\label{eq:2.47}
\displaystyle
\lim_{k\rightarrow\infty}|\|U^{(k+1)}-U^{(k)}\||_{0,T_{\ast\ast}}=0,\\
\displaystyle
\lim_{k\rightarrow\infty}\left(\int_0^\infty \int_{S^{2}}|\|I^{(k+1)}-I^{(k)}\||^2_{0,T_{\ast\ast}} \text{d}\Omega \text{d}v\right)^{\frac{1}{2}}=0.
\end{cases}
\end{equation}
In addition, from Lemma \ref{lemma:2:1} we know that sequence $ \{U^{(k)}(t,\cdot)\} \subset\subset \Phi $ for any fixed $t$ and
\begin{equation}\label{eq:2.49}
|\|U^{(k)}\||_{s,T_{\ast\ast}}+|\|\partial_{t} U^{(k)}\||_{s-1,T_{\ast\ast}} \leq 2C_1.
\end{equation}
Then from Sobolev interpolation inequalities, we have
\begin{equation}\label{eq:2.51}
\|U^{(k+1)}-U^{(k)}\|_{s'}\leq C \|U^{(k+1)}-U^{(k)}\|^{1-\frac{s'}{s}}\|U^{(k+1)}-U^{(k)}\|^{\frac{s'}{s}}_s.
\end{equation}
for any $ 0 < s'< s $.
So from (\ref{eq:2.49}) and (\ref{eq:2.51}), we get
\begin{equation*}
|\|U^{(k+1)}-U^{(k)}\||_{s',T_{\ast\ast}}\leq C |\|U^{(k+1)}-U^{(k)}\||^{1-\frac{s'}{s}}_{0,T_{\ast\ast}}, \quad \text{for any  } 0 < s'< s.
\end{equation*}
Similarly, using Sobolev interpolation inequalities and  Holder's inequality, we get
\begin{equation*}
\begin{split}
&\int_0^\infty \int_{S^{2}}|\|I^{(k+1)}-I^{(k)}\||^2_{s',T_{\ast\ast}} \text{d}\Omega \text{d}v\\
\leq& C  \int_0^\infty \int_{S^{2}}|\|I^{(k+1)}-I^{(k)}\||^{2(1-\frac{s'}{s})}_{0,T_{\ast\ast}} |\|I^{(k+1)}-I^{(k)}\||^{\frac{2s'}{s}}_{s,T_{\ast\ast}} \text{d}\Omega  \text{d}v\\
\leq &C  \left(\int_0^\infty \int_{S^2} |\|I^{(k+1)}-I^{(k)}\||^{2}_{0,T_{\ast\ast}}  \text{d}\Omega \text{d}v \right)^{\frac{s-s'}{s}} \left(\int_0^\infty \int_{S^{2}}|\|I^{(k+1)}-I^{(k)}\||^{2}_{s,T_{\ast\ast}}  \text{d}\Omega  \text{d}v \right)^{\frac{s'}{s}}\\
\leq &C\left(\int_0^\infty \int_{S^2} |\|I^{(k+1)}-I^{(k)}\||^{2}_{0,T_{\ast\ast}}  \text{d}\Omega \text{d}v \right)^{\frac{s-s'}{s}}
\end{split}
\end{equation*}
for any constant $s'$ satisfying  $ 0 < s'< s $.

According to  (\ref{eq:2.47}), we conclude that
\begin{equation}\label{eq:2.52}
\begin{cases}
\displaystyle
\lim_{k\rightarrow\infty}|\|U^{(k+1)}-U^{(k)}\||_{s',T_{\ast\ast}}=0,\\
\displaystyle
\lim_{k\rightarrow\infty}\left(\int_0^\infty \int_{S^{2}}|\|I^{(k+1)}-I^{(k)}\||^2_{s',T_{\ast\ast}} \text{d}\Omega \text{d}v\right)^{\frac{1}{2}}=0
\end{cases}
\end{equation}
for any $ 0 < s'< s $.

Therefore, if we choose $s' > \frac{5}{2} $, then from Sobolev embedding theorem, there exists $(I,U)$ such that
$$ U^{(k)} \rightarrow U \in C\left([0,T_{\ast\ast}];C^1(\mathbb{R}^3)\right),\ I^{(k)}\rightarrow I \in L^2\left(\mathbb{R}^+\times S^2; C([0,T_{\ast\ast}];C^1(\mathbb{R}^3))\right).$$
Furthermore, from the second equation in (\ref{eq:2.13}), we have
\begin{equation*}
\partial_t U^{(k+1)}=-A^{-1}_{0}(U^{(k)})\sum_{j=1}^3 A_{j}(U^{(k)})\partial_{x_j}U^{(k+1)}+A^{-1}_{0}(U^{(k)})G(I^{(k)}U^{(k)}),
\end{equation*}
then $\partial_t U^{(k+1)}\rightarrow \partial_t U \in C\left([0,T_{\ast\ast}]\times \mathbb{R}^3\right)$. Similarly, from the first equation in (\ref{eq:2.13}), we easily have $\partial_tI\in L^2\left(\mathbb{R}^+\times S^2; C([0,T_{\ast\ast}]\times \mathbb{R}^3)\right)$.
Thus, $(I,U)$ is a classical solution to (\ref{eq:2.6})-(\ref{eq:2.8}).

Finally, we consider the uniqueness of classical solutions. Let $(I,U)=(I,w,u)$ and $(\widehat{I},\widehat{U})=(\widehat{I},\widehat{w},\widehat{u})$ be two classical solutions to (\ref{eq:2.6})-(\ref{eq:2.8}). From the proof of Lemma \ref{lemma:2:2}, we have
 \begin{equation}\label{eq:2.38bnm}
 \begin{cases}
 \displaystyle
\frac{1}{c} \partial_t (I-\widehat{I}) +\Omega\cdot\nabla (I-\widehat{I})\\
\qquad=(\widehat{I}-\overline{B}(v))\big(K_a(v,t,x,\widehat{w})-K_a(v,t,x,w)\big)-K_a(v,t,x,w)\cdot(I-\widehat{I}).\\
\displaystyle
 A_{0}(U)\partial_t(U-\widehat{U})+\sum_{j=1}^{3} A_{j}(U)\partial_{x_{j}}(U-\widehat{U})
 =G(I,U)-G(\widehat{I},\widehat{U})+Z,
 \end{cases}
\end{equation}
where
\begin{equation*}
 Z=(A_{0}(\widehat{U})-A_{0}(U)\partial_t {\widehat{U}}+\sum_{j=1}^{3} (A_{j}(\widehat{U}-A_{j}(U)\partial_{x_{j}}{\widehat{U}}.
\end{equation*}
Similarly to the proof of Lemma \ref{lemma:2:2}, we can prove that $(I,U)=(\widehat{I},\widehat{U})$.

The proof of Theorem \ref{th:1} is finished.
\end{proof}

\begin{remark}\label{zhengxing}
by the standard method in Majda \cite{amj}, the classical solution obtained in the above theorem also satisfies
\begin{equation*}
\begin{split}
&U(t,x)=(w(t,x),u(t,x)) \in C^1\left([0,T); H^s (\mathbb{R}^3) \right)\cap C\left([0,T); H^{s-1} (\mathbb{R}^3) \right) ,\\
&I(v,\Omega, t,x)-\overline{B}(v) \in L^2\left( \mathbb{R}^+\times S^{2}; C^1([0,T); H^s (\mathbb{R}^3) )\cap C([0,T); H^{s-1} (\mathbb{R}^3) ) \right).
\end{split}
\end{equation*}
\end{remark}
\vspace{0.5cm}
Back to the Cauchy problem (\ref{eq:2.4})-(\ref{eq:2.211}), we will give the local existence and uniqueness of regular solutions based on the above results for classical solutions to Cauchy problem (\ref{eq:2.6})-(\ref{eq:2.8}).

\subsection{Local existence and uniqueness of regular solutions to (\ref{eq:2.4})-(\ref{eq:2.211})}\ \label{S2.3}\\
In this section, we will give the local existence  and uniqueness of regular solutions to the original Cauchy problem (\ref{eq:2.4})-(\ref{eq:2.211}) based on the results obtained in Section \ref{S2.2}.
\begin{theorem}\label{th:2.2}\
Let $s\geq 3$  be an integer.
If the initial data satisfy
\begin{equation*}\begin{split}
(I_0,\rho_0,u_0) \in \Psi:=&\big\{(I,\rho,u)|\ \rho(x) \geq 0 ;  \  \ (\rho^{\frac{\gamma-1}{2}},u)(x)\in H^{s}(\mathbb{R}^3),\\
& I(v,\Omega,x)-\overline{B}(v)\in L^2(\mathbb{R}^+\times S^{2};H^{s}(\mathbb{R}^3))\big\},
\end{split}
\end{equation*}
then there exists a time $T > 0$  such that Cauchy problem (\ref{eq:2.4})-(\ref{eq:2.211}) admits a unique regular solution $(I,\rho,u)$. 

\end{theorem}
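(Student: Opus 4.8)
The plan is to transfer the existence and uniqueness of classical solutions for the symmetrized system (\ref{eq:2.6})--(\ref{eq:2.8}) furnished by Theorem \ref{th:1} back to the original variables through the change of variable $w=\rho^{\frac{\gamma-1}{2}}$ from (\ref{eq:2.5}). Given $(I_0,\rho_0,u_0)\in\Psi$, I would set $w_0=\rho_0^{\frac{\gamma-1}{2}}$, so that $(I_0,U_0)=(I_0,w_0,u_0)\in\Phi$, and invoke Theorem \ref{th:1} to obtain a unique classical solution $(I,U)=(I,w,u)$ on some $[0,T)$. The candidate regular solution is then $(I,\rho,u)$ with $\rho:=w^{\frac{2}{\gamma-1}}$.

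The first point to settle is that this $\rho$ is admissible in the sense of Definition \ref{666}. Since $w_0\geq0$ and $w$ solves $\partial_t w+u\cdot\nabla w=-\frac{\gamma-1}{2}w\,\nabla\cdot u$, integrating along the characteristics $\dot X=u$ (well defined since $u\in C^1$) gives $w(t,X(t))=w_0(X(0))\exp\!\big(-\frac{\gamma-1}{2}\int_0^t\nabla\cdot u\,\mathrm ds\big)\geq0$, so nonnegativity of $w$ is propagated and $\rho\geq0$ is well defined. Condition (ii) is then immediate from $\rho^{\frac{\gamma-1}{2}}=w\in C^1$, while for (i) I would note that $\nabla\rho=\frac{2}{\gamma-1}w^{\frac{3-\gamma}{\gamma-1}}\nabla w$ has a nonnegative exponent precisely because $1<\gamma\leq3$; hence $w^{\frac{3-\gamma}{\gamma-1}}$ is continuous up to $w=0$, giving $\rho\in C^1$ and $(\rho,u)\in C^1([0,T)\times\mathbb{R}^3)$. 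This is the step where the restriction $\gamma\leq3$ is genuinely used.

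Next I would verify that $(I,\rho,u)$ solves (\ref{eq:2.4}). The $I$-equation is untouched. On the open set $\{\rho>0\}$ the map $\rho\mapsto w$ is a smooth diffeomorphism, so the $w$-equation is equivalent to the continuity equation, and, after multiplying the third equation of (\ref{eq:ccc}) by $\frac{4\gamma}{(\gamma-1)^2}$ and using $\frac{1}{\rho}\nabla p_m=\frac{2\gamma}{\gamma-1}w\nabla w$ together with $K_a=\rho\overline K_a$ from (\ref{fg}), the $u$-equation becomes the nonconservative momentum equation (\ref{jgh}); its conservative form follows from $\partial_t(\rho u)+\nabla\cdot(\rho u\otimes u)=\rho(\partial_t u+u\cdot\nabla u)+u\,[\partial_t\rho+\nabla\cdot(\rho u)]$ and the continuity equation. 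The main obstacle is to push these identities across the free interface $\partial\{\rho>0\}$ and into the vacuum. I would handle this purely by continuity: all relevant terms ($w\nabla w$, $\nabla p_m=\gamma\rho^{\gamma-1}\nabla\rho$, and the source $\frac1c\int_0^\infty\!\int_{S^2}\overline K_a(I-\overline B)\Omega\,\mathrm d\Omega\,\mathrm dv$) are continuous and vanish where $w=0$ — the source because $\overline K_a\to0$ as $\rho\to0$ by (\ref{fg}) — so each equation, holding on the dense subset $\{\rho>0\}$ of $\overline{\{\rho>0\}}$ and trivially on the interior of $\{\rho=0\}$, holds everywhere. In particular, letting $\rho\to0$ in the $u$-equation yields condition (iii), namely $\partial_t u+u\cdot\nabla u=0$ off $\mathrm{supp}\,\rho$.

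Finally, uniqueness follows by running the correspondence in reverse. Given any regular solution $(I,\rho,u)$, condition (ii) guarantees $w=\rho^{\frac{\gamma-1}{2}}\in C^1$, and reversing the computations above — using the continuity equation on $\{\rho>0\}$, condition (iii) on $\{\rho=0\}$, and continuity of the $C^1$ quantities across the interface — shows that $(I,w,u)$ is a classical solution of (\ref{eq:2.6})--(\ref{eq:2.8}) with the same data. The uniqueness part of Theorem \ref{th:1} then forces any two regular solutions to coincide, which completes the proof.
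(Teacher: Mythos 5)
Your proposal is correct and takes essentially the same route as the paper's proof: invoke Theorem \ref{th:1}, pass back through the transformation $\rho=w^{\frac{2}{\gamma-1}}$ (using $1<\gamma\leq 3$ so the exponent is $\geq 1$ and $(\rho,u)\in C^1$), recover the equations (\ref{eq:2.4}) by multiplying the symmetrized system by the continuous conversion factors, and obtain nonnegativity by integrating along particle paths. Your more explicit handling of the vacuum interface, of condition (iii), and of uniqueness via the reverse correspondence simply fills in details the paper leaves implicit, rather than constituting a different argument.
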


\begin{proof}
From Theorem \ref{th:1}, we know that
there exists $T > 0$ such that Cauchy problem (\ref{eq:2.6})-(\ref{eq:2.8}) has a unique classical solution $(I,U)$ satisfying
\begin{equation}\label{reg2}
\begin{split}
U=(w,u)\in C^1\left([0,T)\times \mathbb{R}^3\right),\ I\in L^2\left(\mathbb{R}^+\times S^2; C^1([0,T)\times \mathbb{R}^3)\right).
\end{split}
\end{equation}
According to  transformation (\ref{eq:2.5}), since
$ \rho(t,x)=w^{\frac{2}{\gamma-1}} $ and
$ \frac{2}{\gamma-1}\geq 1$ due to $ 1< \gamma \leq3 $ , it is easy to show that
$(\rho,u)(t,x)\in C^1\left([0,T)\times\mathbb{R}^3\right)$.

Multiplying $(\ref{eq:ccc})_2$ by
$
\frac{\partial \rho}{\partial{w}}=\frac{2}{\gamma-1}w^{\frac{3-\gamma}{\gamma-1}}\in C\left([0,T)\times \mathbb{R}^3\right)
$,
we get
\begin{equation} \label{eq:2.58}
\partial_t\rho+u \cdot\nabla \rho+\rho\nabla\cdot u=0,
\end{equation}
which is exactly the continuity equation in (\ref{eq:2.4}). Multiplying $(\ref{eq:ccc})_3$ by
$\frac{4\gamma}{(\gamma-1)^2}w^{\frac{2}{\gamma-1}}\in C^1\left([0,T)\times \mathbb{R}^3\right)$,
we get the momentum equations in (\ref{eq:2.4}):
\begin{equation} \label{eq:2.60}
\begin{split}
&\rho \partial_tu+\rho u\cdot \nabla u+\nabla p_m=\frac{1}{c}\int_0^\infty \int_{S^{2}} K_a\cdot(I-\overline{B}(v))\Omega  \text{d}\Omega \text{d}v.
\end{split}
\end{equation}
That is to say, $(I,\rho,u)$ satisfies the Euler-Boltzmann equations classically. Then from the continuity equation, it is easy to get that $\rho$ can be expressed by
\begin{equation}
\label{eq:bb1}
\rho(t,x)=\rho_0(X(0,t,x))\exp\left(-\int_{0}^{t}\textrm{div} u(s,X(s,t,x))\text{d}s\right)\geq 0,
\end{equation}
where  $X\in C\left([0,T]\times[0,T]\times \mathbb{R}^3\right)$ is the solution of the initial value problem
\begin{equation}
\label{eq:bb2}
\begin{cases}
\frac{d}{dt}X(t,s,x)=u(t,X(t,s,x)),\quad 0\leq t\leq T,\\
X(s,s,x)=x, \quad \ \quad \quad 0\leq s\leq T,\ x\in \mathbb{R}^3,
\end{cases}
\end{equation}
In conclusion, Cauchy problem (\ref{eq:2.4})-(\ref{eq:2.211}) has a unique regular solution $(I,\rho,u)$.
%
%
\end{proof}

\section{Formation of singularities}\label{S3}
In this section, we consider the formation of singularities to regular solutions obtained in Section \ref{S2.3}. Two sufficient initial conditions that will lead to the finite time blow-up for  regular solutions will be given.  However, the second one is new even for compressible Euler equations or Euler equations with damping, which can be regarded as the multi-dimensional version of  \cite{lax} in the finite time blow-up sense when vacuum appears.

\subsection{\textbf{Local vacuum state}}\ \\

We first assume that the initial data $(I_0,\rho_0,u_0)$
satisfies the following local vacuum state condition:
\begin{definition}[\textbf{Local vacuum state}]\label{local}\ \\
Let $ A_0 $ and $ B_0 $ be two bounded open sets in $ \mathbb{R}^3 $, $B_0$ is connected and $\overline{A}_0 \subset B_0 \subseteq  B_{R_0}$, where $R_0$ is a positive constant and $B_{R_0}:=\{x \in \mathbb{R}^3:|x|\le R_0\}$. If the initial data $(I_0, \rho_0, u_0)$ satisfy
\begin{equation} \label{eq:12131}
\begin{cases}
\displaystyle
\rho_0(x)=u_0(x)=0, \ \forall x\in B_0-A_0;\ \int_{A_0} \rho_0(x) \text{d}x  =m_0 >0,\\[9pt]
\displaystyle
I_0\equiv \overline{B}(v),\  \ \forall (v,\Omega,x)\in \mathbb{R}^+\times S^{2}\times B^C_{R_0},
\end{cases}
\end{equation}
where $B^C_{R_0}=\{x\in \mathbb{R}^3| |x|\geq R_0\}$,
then we say that the initial data $(I_0, \rho_0, u_0)$ contain local vacuum state.
\end{definition}
\begin{remark}\label{r11}
$\overline{B}(v)$ is actually a simplification of the Planck function which represents the energy density of black-body radiation. Black-body has the smallest radiation, so condition
$I_0 \geq \overline{B}(v)$ is natural. In Theorem \ref{th33}, we will see that the assumption $ I_0\equiv \overline{B}(v)$ for $|x|\geq R_0$ results in the phenomenon that the impact of radiation on the dynamical properties of the fluid vanishes in the far field, then the system serves as the Euler equations as $|x|\rightarrow +\infty$.
\end{remark}

In order to observe the evolution of $A_0$ and $B_0$, we need the following definition.
\begin{definition}[\textbf{Particle path and flow map}]\label{kobn}\ \\
Let $x(t;0,x_0)$ be the particle path starting from $x_0$ at $t=0$, i.e.,
\begin{equation}\label{gobn}
\frac{d}{\text{d}t}x(t;0,x_0)=u(t,x(t;0,x_0)),\quad x(0;0,x_0)=x_0.
\end{equation}
Then we denote by $A(t)$, $B(t)$, $(B-A)(t)$ the images of $A_0$, $B_0$, and $B_0-A_0$, respectively, under the flow map of (\ref{gobn}), i.e.,
\begin{equation*}
\begin{split}
&A(t)=\left\{x(t;0,x_0)|x_0\in A_0\right\},\ B(t)=\left\{x(t;0,x_0)|x_0\in B_0\right\},\\
&(B-A)(t)=\left\{x(t;0,x_0)|x_0\in (B_0- A_0)\right\}.
\end{split}
\end{equation*}
It is easy to know that $(B-A)(t)$ is the vacuum domain.
\end{definition}

Then we have

\begin{lemma}
\label{lemma:3.1}
Let $(I,\rho,u)$ be the regular solution on $\mathbb{R}^+\times S^{2}\times[0,T)\times \mathbb{R}^3$ of the Cauchy problem (\ref{eq:2.4})-(\ref{eq:2.211}) satisfying (\ref{eq:12131}), then we have
\begin{equation}\label{tgb3}
B(t)= B_0, \ A(t)= A_0,\  \text{for} \  t \in [0,T).
\end{equation}
Moreover, there exists a critical time $T_c=\frac{2R_0}{c}$ such that, if $T> T_c$, then we have
\begin{equation}\label{tgb2} I(v,\Omega,t,x) \equiv \overline{B}(v),\quad (v,\Omega,t,x)\in \mathbb{R}^+\times S^{2}\times[T_c,T)\times B_0.\end{equation}
\end{lemma}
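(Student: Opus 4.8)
The plan is to prove the two assertions of Lemma \ref{lemma:3.1} separately, exploiting the structure of the regular solution in the vacuum region and the finite propagation speed of the radiation transfer equation.

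For the first claim (\ref{tgb3}), I would argue that the vacuum region is transported rigidly by the flow. In the set $B_0-A_0$ we have $\rho_0=u_0=0$, and by the expression (\ref{eq:bb1}) for $\rho$ along particle paths, $\rho$ stays zero along every trajectory starting in $B_0-A_0$; thus $(B-A)(t)$ remains a vacuum domain. By definition \ref{666}(iii) of a regular solution, in the exterior of $\supp\rho$ the velocity satisfies the Burgers system $\partial_t u+u\cdot\nabla u=0$. Since $u_0=0$ on $B_0-A_0$ and $u$ is $C^1$, the characteristic ODE (\ref{gobn}) emanating from any $x_0\in B_0-A_0$ has $\frac{d}{dt}x(t;0,x_0)=u(t,x(t;0,x_0))$, and along such a path $u$ is constant (being governed by the Burgers equation, $\frac{d}{dt}u(t,x(t))=(\partial_t u+u\cdot\nabla u)(t,x(t))=0$), hence equal to its initial value $0$. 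Therefore these particles do not move: $x(t;0,x_0)=x_0$ for all $x_0\in B_0-A_0$. Because the boundaries $\partial A_0$ and $\partial B_0$ lie in the closure of the vacuum region $B_0-A_0$, their images are fixed, and since the flow map is a homeomorphism preserving the partition, we conclude $A(t)=A_0$ and $B(t)=B_0$ for $t\in[0,T)$.

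For the second claim (\ref{tgb2}), the key is that on the vacuum annulus the velocity vanishes, so the radiation transfer equation (\ref{eq:2.3}) reduces to free transport. First I would observe that by (\ref{fg}) we have $K_a=\rho\overline{K}_a=0$ wherever $\rho=0$; in particular $K_a\equiv 0$ on the fixed set $B_0-A_0$. Writing $\widetilde I=I-\overline{B}(v)$, equation (\ref{eq:ccc})$_1$ becomes $\frac{1}{c}\partial_t\widetilde I+\Omega\cdot\nabla\widetilde I=-K_a\widetilde I$, which on $B^C_{A_0}$ outside the support of mass is the pure transport $\frac{1}{c}\partial_t\widetilde I+\Omega\cdot\nabla\widetilde I=0$. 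Hence $\widetilde I$ is constant along the straight characteristics $x(\tau)=x-c(t-\tau)\Omega$ as long as these rays remain in the region where $K_a=0$. Since $I_0\equiv\overline{B}(v)$ (so $\widetilde I_0\equiv 0$) for $|x|\geq R_0$ by (\ref{eq:12131}), and $A_0\subset B_0\subseteq B_{R_0}$, any photon arriving at a point $x\in B_0$ at time $t$ travelled backward along a ray of speed $c$; a point $x\in B_0$ has $|x|\le R_0$, and the ray reaches the region $\{|y|\ge R_0\}$ within backward time at most $\frac{2R_0}{c}$ (the maximal chord length of $B_{R_0}$ divided by $c$), where the data is already $\widetilde I=0$, provided the entire backward ray segment from $x$ to the exit point stays in the zero-$K_a$ region. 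For $t\ge T_c=\frac{2R_0}{c}$ the characteristic traced back from any $(t,x)$ with $x\in B_0$ reaches $\{|y|\ge R_0\}$, so $\widetilde I(v,\Omega,t,x)=0$, i.e. $I\equiv\overline{B}(v)$ on $B_0$.

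The main obstacle I anticipate is the geometric bookkeeping in the characteristic argument: I must ensure that the \emph{entire} backward ray segment from a point in $B_0$ to where it leaves the ball $B_{R_0}$ lies in a region where $K_a=0$, so that the transport equation really has zero source along the whole segment and the value $\widetilde I=0$ is genuinely carried in. This is where the assumption that $u_0=0$ in the vacuum region (giving $A(t)=A_0$, $B(t)=B_0$ static) is essential, but one must also control the contribution from mass inside $A_0$, where $K_a\neq 0$ in general. The cleanest way around this is to note that the rays reaching $x\in B_0$ from the far field $\{|y|\ge R_0\}$ may pass through $A_0$, so the argument cannot simply declare the source zero everywhere. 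Instead I would estimate the critical time as the maximal transit time $\frac{2R_0}{c}$ across the whole ball and argue that the \emph{incoming} data at the sphere $\{|y|=R_0\}$ is the equilibrium $\overline{B}(v)$; combined with the fact that inside $A_0$ the coupling drives $I$ toward $\overline{B}(v)$ and that on the vacuum annulus there is no damping, one shows that after time $T_c$ the only characteristics feeding $B_0$ originate from equilibrium data, forcing $I\equiv\overline{B}(v)$ there. Making this ``complete propagation from interior to exterior'' rigorous — in particular handling the interaction with the mass region $A_0$ — is the delicate part of the proof.
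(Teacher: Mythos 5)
Your first claim and its proof (the vacuum annulus is static, hence $B(t)=B_0$, $A(t)=A_0$) follow the paper's argument essentially verbatim and are fine. The genuine gap is in the second claim, and it is precisely the point you flag at the end as ``the delicate part'': you never close it, and the workaround you sketch cannot close it. You treat $-K_a(I-\overline{B}(v))$ as a source that contaminates any backward ray crossing the mass region $A_0$, and then appeal to the heuristic that inside $A_0$ ``the coupling drives $I$ toward $\overline{B}(v)$.'' But damping only drives $I$ toward $\overline{B}(v)$ asymptotically; it can never yield the exact identity $I\equiv\overline{B}(v)$ on $[T_c,T)\times B_0$ that (\ref{tgb2}) asserts. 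So the fallback argument you propose would not produce the lemma.

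The missing observation, which is the whole content of the paper's proof, is that the equation for $\widetilde I=I-\overline{B}(v)$ is linear and \emph{homogeneous} in $\widetilde I$: since $\overline{B}(v)$ is independent of $(t,x)$, equation (\ref{eq:2.3}) reads
\[
\frac1c\partial_t\widetilde I+\Omega\cdot\nabla \widetilde I=-K_a\,\widetilde I
\]
everywhere in $\mathbb{R}^3$, not only in the vacuum region. Hence along every photon path $y(\tau;y_0)=y_0+c\Omega\tau$ one has the exact formula
\[
\widetilde I\big(v,\Omega,t,y(t;y_0)\big)=\widetilde I_0(v,\Omega,y_0)\exp\Big(-\int_0^t cK_a\big(v,\tau,y(\tau;y_0),\rho\big)\,\text{d}\tau\Big),
\]
valid regardless of whether the ray passes through $A_0$: the absorption coefficient only multiplies by a positive exponential factor and cannot create a nonzero value from a zero one. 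Since $B(t)=B_0\subseteq B_{R_0}$ by the first part, for $x\in B_0$ and $t\ge T_c=\frac{2R_0}{c}$ the foot of the characteristic satisfies $|y_0|=|x-c\Omega t|\ge ct-|x|\ge 2R_0-R_0=R_0$, where $\widetilde I_0\equiv 0$ by (\ref{eq:12131}); hence $\widetilde I(v,\Omega,t,x)=0$. Your geometric bookkeeping (transit time $\frac{2R_0}{c}$, equilibrium data outside $B_{R_0}$) is correct; what you missed is that no ``zero-$K_a$ region'' condition is needed at all, so the obstruction around which you organized the second half of your argument does not exist.
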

\begin{proof}
Firstly, on the domain $ (B-A)(t)$, $\rho=\overline{K}_a(v,t,x,\rho)\equiv 0$.
Due to the momentum equations in (\ref{eq:ccc}) and the definition of  regular solutions, we have
\begin{equation}\label{eq:5.3}
\partial_tu+u\cdot\nabla u=0, \quad \text{in} \quad   (B-A)(t).
\end{equation}
That is to say, $u$ is invariant along the particle path.
Thus, according to  the local vacuum state condition, we have
\begin{equation*}
u(t,x)\equiv 0, \quad \text{in} \quad  (B-A)(t).
\end{equation*}
Using the  continuity of $u(t,x)$, we get
\begin{equation*}
\frac{d}{\text{d}t}x(t;0,x_0)=u(t,x(t;0,x_0))\equiv 0 ,  \quad x_0 \in \partial B_0 \bigcup \partial A_0,
\end{equation*}
so $x(t;0,x_0)\equiv x_0$. Thus  $B(t)= B_0, \ A(t)= A_0$.

Secondly,
because  $\overline{B}(v)$ is independent of $x$ and $t$, the first equation of system (\ref{eq:2.4}) can be rewritten as
$$ \frac{1}{c} \partial_t(I-\overline{B}(v))+\Omega\cdot\nabla (I-\overline{B}(v))=-K_a\cdot(I-\overline{B}(v)).$$
We denote by $ y(t;y_0)$ the photon path starting from $y_0$ at $t=0$, i.e.,
      $$   \frac{d}{\partial{t}}y(t;y_0)=c\Omega ,     \qquad  y(0;y_0)= y_0 .                   $$
Along the photon path, we  obtain
     \begin{equation} \label{eq:**}(I-\overline{B}(v))(t,y(t;y_0 ))=(I_0-\overline{B}(v))(y_0 )\exp\Big(\int_0^t -cK_a(v,\tau,y(\tau;y_0 ),\rho)\text{d}\tau\Big),   \end{equation}
where $y_0=y-c\Omega t $.

Then  for our critical time $T_c=\frac{2R_0}{c}$, if $T> T_c$, via (\ref{tgb3}), we immediately have
$$ 
|y_0|=|y-c\Omega t|\geq R_0,\quad \text{for}\ (v,\Omega,t,y)\in \mathbb{R}^+\times S^{2}\times[T_c,T)\times B_0.$$
Due to $ I_0(v,\Omega,x) \equiv \overline{B}(v)$  for $ |y_0|\geq R_0$ and  (\ref{eq:**}), we deduce that 
 \begin{equation*} I(v,\Omega,t,y) \equiv \overline{B}(v),\quad (v,\Omega,t,y)\in \mathbb{R}^+\times S^{2}\times[T_c,T)\times B_0.\end{equation*}


\end{proof}

Now we give the main result of this section, which shows the formation of singularities caused by the appearance of vacuum in some local domain. We first introduce the mass and second moment over $B(t)$:
\begin{align*}
&m(t)=\int_{B(t)}\rho(t,x)\text{d}x \quad           (\text{mass}),\\
&M(t)=\int_{B(t)} \rho(t,x)|x|^{2}\text{d}x \quad                     (\text{second \ moment}).
\end{align*}

\begin{theorem}[\textbf{Finite time blow-up $1$}]\label{th33}\ \\ Assume that $(I,\rho,u)$
 is the regular solution on $\mathbb{R}^+\times S^2\times[0,T)\times \mathbb{R}^3$ of the Cauchy problem (\ref{eq:2.4})-(\ref{eq:2.211}) satisfying (\ref{eq:12131}),  then  it will
blow up in  finite time, i.e., $T<+\infty$.
\end{theorem}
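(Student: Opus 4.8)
The plan is to run a convexity (second-moment) argument on the fixed ball $B_0$, exploiting Lemma \ref{lemma:3.1} to decouple the radiation from the fluid after the critical time $T_c$. Suppose, for contradiction, that $T=+\infty$; then in particular $T>T_c=\frac{2R_0}{c}$, so Lemma \ref{lemma:3.1} applies and furnishes two crucial facts. First, $B(t)=B_0$ for all $t\in[0,T)$, so the gas stays confined to the fixed bounded set $B_0\subseteq B_{R_0}$ and $u\equiv 0$ on the vacuum collar $(B-A)(t)=B_0-A_0$. Second, $I\equiv\overline{B}(v)$ on $B_0$ for all $t\ge T_c$, so the radiation source $\frac{1}{c}\int_0^\infty\int_{S^2}K_a(I-\overline{B}(v))\Omega\,\mathrm{d}\Omega\,\mathrm{d}v$ vanishes identically on $B_0$ for $t\ge T_c$. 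Consequently the momentum equation in (\ref{eq:2.4}) reduces to the pure compressible Euler balance $\partial_t(\rho u)+\nabla\cdot(\rho u\otimes u)+\nabla p_m=0$ on $B_0$ once $t\ge T_c$.

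First I would record mass conservation. Since $\rho u=0$ in a neighbourhood of $\partial B_0$ (the vacuum collar, where $u=0$), differentiating $m(t)=\int_{B_0}\rho\,\mathrm{d}x$ and using the continuity equation (\ref{eq:2.58}) gives, with no boundary flux, $m'(t)=0$; hence $m(t)\equiv m_0$. The confinement $B(t)=B_0\subseteq B_{R_0}$ then yields the a priori bound
\[
M(t)=\int_{B_0}\rho|x|^2\,\mathrm{d}x\le R_0^2\int_{B_0}\rho\,\mathrm{d}x=R_0^2 m_0,
\]
so the second moment stays bounded for all time.

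Next I would differentiate $M(t)$ twice. Using the continuity equation and integrating by parts (all boundary terms vanish because $\rho u$ and $p_m=\rho^\gamma$ are zero near $\partial B_0$) gives $M'(t)=2\int_{B_0}\rho u\cdot x\,\mathrm{d}x$. Differentiating again and inserting the reduced momentum equation valid for $t\ge T_c$, two more integrations by parts (using $\nabla\cdot x=3$ and $\nabla x=\mathbf{I}_3$) produce
\[
M''(t)=2\int_{B_0}\rho|u|^2\,\mathrm{d}x+6\int_{B_0}\rho^\gamma\,\mathrm{d}x\ge 6\int_{B_0}\rho^\gamma\,\mathrm{d}x.
\]
By Hölder's inequality with exponents $\gamma$ and $\frac{\gamma}{\gamma-1}$ applied to $\int_{B_0}\rho\,\mathrm{d}x=m_0$, one gets $\int_{B_0}\rho^\gamma\,\mathrm{d}x\ge m_0^\gamma/|B_0|^{\gamma-1}=:c_0>0$, where $\gamma>1$ is essential. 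Thus $M''(t)\ge 6c_0>0$ for all $t\ge T_c$.

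Integrating this coercive convexity bound twice gives $M(t)\ge M(T_c)+M'(T_c)(t-T_c)+3c_0(t-T_c)^2$, which tends to $+\infty$ as $t\to+\infty$, contradicting the uniform bound $M(t)\le R_0^2 m_0$. Hence $T=+\infty$ is impossible and $T<+\infty$. I expect the only delicate points to be the rigorous justification that every boundary integral on $\partial B_0$ vanishes, which is exactly secured by the vacuum collar $B_0-A_0$ where $\rho=u=0$, and the clean decoupling of the radiation term, which is the content of Lemma \ref{lemma:3.1} and is what allows the otherwise classical Sideris/Makino--Ukai--Kawashima moment argument to go through despite the radiation coupling.
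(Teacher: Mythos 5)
Your proposal is correct and follows essentially the same route as the paper: the Sideris/Makino--Ukai--Kawashima second-moment convexity argument on the fixed set $B_0$, with Lemma \ref{lemma:3.1} both freezing the domain ($B(t)=B_0$, hence $m(t)=m_0$ and $M(t)\le m_0R_0^2$) and annihilating the radiation source for $t\ge T_c$, followed by H\"older's inequality to get $M''(t)\ge 6c_0>0$ and a contradiction with the uniform bound on $M$. The only cosmetic difference is that you argue by contradiction and integrate the convexity inequality from $T_c$ (which is in fact slightly tidier than the paper's (\ref{eq:3.13}), which writes $M(0)+M'(0)t$ although the differential inequality is only established on $[T_c,T)$), whereas the paper solves the resulting quadratic inequality (\ref{eq:3.14}) to produce an explicit upper bound on the life span.
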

\begin{proof}According to Lemma \ref{lemma:3.1}, we know that $B(t)=B_0$. So we easily have
\begin{equation}\label{zhi}
m(t)=m_0,\ \forall \ t\in[0,T).
\end{equation}
From the continuity equation and integration by parts, we have
\begin{equation}\label{eq:3.5}
\frac{d}{\text{d}t}M(t)=2\int_{B_0}x\cdot \rho u \text{d}x.
\end{equation}
From the momentum equations and integration by parts, we get
\begin{equation}\label{eq:3.6}
\frac{d^2}{\text{d}t^2}M(t)=2\int_{B_0}( \rho|u|^2 +3 p_m )\text{d}x+\frac{2}{c}\int_{B_0}\int_0^\infty \int_{S^{2}} K_a\cdot(I-\overline{B}(v))x\cdot\Omega \text{d}\Omega \text{d}v\text{d}x
\end{equation}

If $T\leq T_c$, the proof is over. So next we only consider the case that $T>T_c$. From Lemma \ref{lemma:3.1}, we know that
 \begin{equation}\label{liang} I(v,\Omega,t,x) \equiv \overline{B}(v),\quad (v,\Omega,t,x)\in \mathbb{R}^+\times S^{2}\times[T_c,T)\times B_0.\end{equation}
It follows from (\ref{eq:3.6})-(\ref{liang}) that
\begin{equation} \label{eq:3.8}\frac{d^2}{\text{d}t^2}M(t)\geq 2\int_{B_0}( \rho|u|^2 +3p_m) \text{d}x \geq 6\int_{B_0} p_m \text{d}x,\quad \text{for} \quad t\in [T_c,T).\end{equation}
From Holder's inequality, we give
\begin{equation}\label{eq:3.9}
m_0=\int_{B_0} \rho(t,x) \text{d}x  \leq \Big(\int_{|x|\leq R_0} \rho^\gamma(t,x) \text{d}x\Big)^{\frac{1}{\gamma}}\Big(\int_{|x|\leq R_0}\text{d}x\Big)^{\frac{1}{\gamma'}},\quad \text{for}\quad t\in [0,T).
\end{equation}
where $\frac{1}{\gamma}+\frac{1}{\gamma'}=1$. Associated with (\ref{zhi}), we give
\begin{equation}\label{eq:3.11}
\int_{|x|\leq R_0} p_m(t,x) \text{d}x \geq m^{\gamma}_0R^{3(1-\gamma)}_0|B_1|^{1-\gamma},\quad \text{for}\quad t\in [0,T).
\end{equation}
Then (\ref{eq:3.8}) yields
\begin{equation}\label{eq:3.12}
\frac{d^2}{\text{d}t^2}M(t)\geq 6 m^{\gamma}_0R^{3(1-\gamma)}_0|B_1|^{1-\gamma}, \quad \text{for} \quad t\in [T_c,T).
\end{equation}
So, using Taylor's expansion, we have
\begin{equation}\label{eq:3.13}
M(t)\geq M(0)+M'(0)t+3m^{\gamma}_0 R^{3(1-\gamma)}_0|B_1|^{1-\gamma} t^2, \quad \text{for} \quad t\in [T_c,T).
\end{equation}
From Lemma \ref{lemma:3.1}, it is clear that
\begin{equation}\label{kkk6}M(t)\leq   m_0 R^2_0,  \ \forall \ t\in[0,T).\end{equation}
Combining (\ref{eq:3.8})-(\ref{kkk6}), we have
\begin{equation}\label{eq:3.14}
m_0 R^2_0 \geq M(0)+M'(0)t+3m^{\gamma}_0 R^{3(1-\gamma)}_0|B_1|^{1-\gamma} t^2 , \quad \text{for} \quad t\in [T_c,T).
\end{equation}
Solving this inequality, we get
\begin{equation}\label{eq:3.15}
T_c\leq t \leq \frac{-M'(0)+\sqrt{M'(0)^2-12m^{\gamma}_0 R^{3(1-\gamma)}_0|B_1|^{1-\gamma} (M(0)- m_0 R^2_0)}}{6m^{\gamma}_0 R^{3(1-\gamma)}_0|B_1|^{1-\gamma} }.
\end{equation}
In other words, the life span $T$ must be finite.
\end{proof}

\begin{remark}
Theorem \ref{th33} stated that  the appearance of vacuum will cause the blow-up of the regular solutions of Euler-Boltzmann equations in finite time. However, this kind of  singularity is different from the shock wave which is caused by the compression of fluid. The corresponding results for Euler equations, damped Euler equations, and Euler-Possion equations can be found in  \cite{tpy}, \cite{tpl2}, \cite{tms3}, \cite{tbd}, \cite{ts2}, etc. Moreover, the result obtained in Theorem \ref{th33} also improved the conclusion in \cite{tpy}\cite{tms3}\cite{tms1} in the sense that we removed the crucial assumption that the initial mass density is compactly supported.
\end{remark}

\subsection{\textbf{Hyperbolic singularity set}}\ \\

%
%
In this section, we will show our second type initial condition that will lead to the finite time blow-up for our regualr solution obtained in Section $2$. 
We first give the defintion of the hyperbolic singularity set:

\begin{definition}[\textbf{Hyperbolic singularity set}]\label{bugers}
We define the smooth,  open set $V \subset \Omega$ as a hyperbolic singularity set, if $V$ and $(\rho_0,u_0)$ satisfy
\begin{equation} \label{eq:12131sss}
\begin{cases}
\displaystyle
\rho_0(x)=0, \ \forall \ x\in V;\\[10pt]
\displaystyle
 Sp(\nabla u_0) \cap \mathbb{R}^-\neq\  \emptyset,\quad  \forall  \ x \in V,
\end{cases}
\end{equation}
where we denote by $Sp(\nabla u_0(x))$ the spectrum of the Jacobian matrix of $u_0$.
\end{definition}
Then we show that 

\begin{theorem}[\textbf{Finite time blow-up $2$}]\label{th333}\ \\ Assume that $(I,\rho,u)$
 is the regular solution on $\mathbb{R}^+\times S^2\times[0,T)\times \mathbb{R}^3$ of the Cauchy problem (\ref{eq:2.4})-(\ref{eq:2.211}) satisfying (\ref{eq:12131sss}),  then  it will
blow up in  finite time, i.e., $T<+\infty$.
\end{theorem}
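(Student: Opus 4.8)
The plan is to exploit condition (iii) of Definition \ref{666}: inside the vacuum region the velocity obeys the pressureless, radiation-free system $\partial_t u+u\cdot\nabla u=0$, i.e. the multidimensional Burgers equations, and to show that the negative-eigenvalue hypothesis forces the Jacobian $\nabla u$ to blow up in finite time. First I would establish that the vacuum propagates along particle paths and that this Burgers structure persists. Arguing exactly as in Lemma \ref{lemma:3.1} via the representation (\ref{eq:bb1}) of $\rho$ along the flow (\ref{eq:bb2}), the image $V(t)$ of $V$ under the flow map of Definition \ref{kobn} stays vacuum; since $u\in C^1$ the flow is a homeomorphism, so $V(t)$ is open and hence lies in the exterior of $\mathrm{supp}\,\rho(t,\cdot)$. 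Therefore condition (iii) yields $\partial_t u+u\cdot\nabla u=0$ on $V(t)$ for every $t\in[0,T)$. Fixing $x_0\in V$ and letting $x(t)=x(t;0,x_0)$, the path stays in $V(t)$, so $\frac{d}{dt}u(t,x(t))=(\partial_t u+u\cdot\nabla u)(t,x(t))=0$; thus $u$ is frozen at $u_0(x_0)$ along it, the characteristic is the straight line $x(t)=x_0+t\,u_0(x_0)$, and the flow map $\Phi_t\colon x_0\mapsto x_0+t\,u_0(x_0)$ on $V$ has Jacobian $D\Phi_t=I+t\nabla u_0(x_0)$.

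Next I would track $\nabla u$ along the path. Because $u$ is constant along each characteristic issuing from $V$, for nearby $x_0,x_0'\in V$ one has $u(t,\Phi_t(x_0'))-u(t,\Phi_t(x_0))=u_0(x_0')-u_0(x_0)$; dividing by $x_0'-x_0$ and letting $x_0'\to x_0$ (only first derivatives of the $C^1$ fields $u_0$ and $u(t,\cdot)$ enter, so no regularity beyond Definition \ref{666} is needed) gives $\nabla u(t,x(t))\,(I+t\nabla u_0(x_0))=\nabla u_0(x_0)$, whence
$\nabla u(t,x(t))=\nabla u_0(x_0)(I+t\nabla u_0(x_0))^{-1}$ for as long as $I+t\nabla u_0(x_0)$ is invertible. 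Equivalently, differentiating the Burgers equation produces the matrix Riccati equation $\frac{D}{Dt}\nabla u=-(\nabla u)^2$ with the same explicit solution. By hypothesis $Sp(\nabla u_0(x_0))\cap\mathbb{R}^-\neq\emptyset$, so there is a real $\lambda_0<0$ with eigenvector $v_0$; then $(I+t\nabla u_0(x_0))v_0=(1+t\lambda_0)v_0$ and $\nabla u(t,x(t))v_0=\frac{\lambda_0}{1+t\lambda_0}v_0$. As $t\uparrow t^\ast:=-1/\lambda_0$ the factor $1+t\lambda_0\downarrow 0^+$, so this eigenvalue tends to $-\infty$ and $\|\nabla u(t,x(t))\|\to+\infty$. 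Complex eigenvalues $a\pm bi$ contribute factors $(1+ta)^2+t^2b^2>0$ to $\det(I+t\nabla u_0)$ and never create a singular time, so $t^\ast$, arising from a real negative eigenvalue, is the relevant finite blow-up time.

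Finally I would close by contradiction. If $T>t^\ast$, then $(t^\ast,x(t^\ast))$ is an interior point of $[0,T)\times\mathbb{R}^3$; since $u\in C^1([0,T)\times\mathbb{R}^3)$, $\nabla u$ is continuous and hence bounded on a neighbourhood of this point, while $x(t)\to x(t^\ast)$ and $\nabla u(t,x(t))\to\infty$ as $t\uparrow t^\ast$, a contradiction. Hence $T\le t^\ast=-1/\lambda_0<+\infty$. The main obstacle is the regularity bookkeeping: the slick derivation through the matrix Riccati equation $\frac{D}{Dt}\nabla u=-(\nabla u)^2$ needs $u\in C^2$ in space, which is not guaranteed when $s=3$, whereas the difference-quotient argument above uses only the $C^1$ regularity of $u_0$ and $u(t,\cdot)$ together with the straight-line structure of the characteristics. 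The second delicate point is to guarantee that the characteristics never leave the open, moving vacuum region $V(t)$ so that the Burgers relation holds along them on the whole interval $[0,T)$, which is precisely what the vacuum-transport step secures.
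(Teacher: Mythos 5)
Your proposal is correct and takes essentially the same route as the paper: vacuum is transported along particle paths, condition (iii) of Definition \ref{666} gives the Burgers equation on $V(t)$, the characteristics are straight lines yielding $\nabla u(t,x)=\big(\mathbb{I}_3+t\nabla u_0\big)^{-1}\nabla u_0$, and a negative real eigenvalue of $\nabla u_0$ forces this to blow up at $t^\ast=-1/\lambda_0$, contradicting $C^1$ regularity if $T>t^\ast$. Your difference-quotient derivation of the Jacobian formula and the explicit closing contradiction simply make rigorous the steps the paper states tersely.
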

\begin{proof}
We denote by $ V(t) $ the evolutioned domain that is the image of $ V$ under the flow map, i.e.,
\begin{equation}\label{zhi}
V(t)=\{x|x=x(t;0,\xi_0), \quad \forall \xi_0\  \in V\},
\end{equation}
where $ x(t; 0,\xi_0)$ is the particle path starting from $\xi_0$ when $t=0$, namely,
\begin{equation}\label{particle}  \frac{d}{\text{d}t}x(t;0, \xi_0)=u(t,x(t;0, \xi_0)),   \quad x(0; 0,\xi_0)= \xi_0.
\end{equation}
      It follows from the continuity equation that the smooth solution is simply supported along the particle paths, so
       $$ \rho(t,x)=0,\quad \text{when} \quad x\ \in \ V(t).$$
Thus, via the Definition \ref{666} for regular solutions, we deduce that 
\begin{equation}
\label{eq:1.2guo}
\partial_t u+u\cdot \nabla u=0, \quad \text{when} \quad x\ \in \ V(t),
\end{equation}
which means that $u$ is a constant along the particle path $x(t; 0,\xi_0)$.
Then for any $x\in V(t)$, we obtain that 
$$
u(t,x)=u_0(x-tu(t,x)),
$$
which immediately implies that 
\begin{equation}
\label{eq:1.2fan}
\nabla u(t,x)=\big(\mathbb{I}_3+t\nabla u_0(x-tu(t,x))\big)^{-1}\nabla u_0,\quad \text{for} \quad x\in V(t).
\end{equation}
If there is any $\lambda\in  Sp(\nabla u_0) $ satisfying $\lambda<0$, 
 then from (\ref{eq:1.2fan}), it is obvious that the quantity $\nabla u$ will blow up in finte time, i.e., 
$$
T<+\infty.
$$

\end{proof}

Then we immediately have the following corollaries for compressible isentropic Euler equations and Euler equations with damping:
\begin{corollary}[\textbf{Finite time blow-up $3$}]\label{th3333}\ \\ 
Let $I\equiv \overline{B}(v)$. Assume that $(\rho,u)$
 is the regular solution on $[0,T)\times \mathbb{R}^3$ of the Cauchy problem (\ref{eq:2.4})-(\ref{eq:2.211}) satisfying (\ref{eq:12131sss}),  then  it will
blow up in  finite time, i.e., $T<+\infty$.
\end{corollary}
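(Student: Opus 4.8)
The plan is to derive Corollary \ref{th3333} as an immediate specialization of Theorem \ref{th333}. Imposing $I \equiv \overline{B}(v)$ kills every source term in the momentum equation of (\ref{eq:2.4}), since each $G_j$ is proportional to $(I - \overline{B}(v))$; the Euler--Boltzmann system thereby collapses to the pure compressible isentropic Euler equations for $(\rho, u)$. The mechanism behind Theorem \ref{th333} is that $u$ obeys the multidimensional Burgers law $\partial_t u + u \cdot \nabla u = 0$ on the vacuum set, and since this law already holds in vacuum irrespective of the radiation field, the specialization $I \equiv \overline{B}(v)$ changes nothing essential.

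First I would check that the reduction leaves intact the two ingredients Theorem \ref{th333} relies on. Condition (iii) of Definition \ref{666} survives because, with $I - \overline{B}(v) \equiv 0$, the right-hand side of the limiting identity in the remark following that definition vanishes identically; hence $\partial_t u + u \cdot \nabla u = 0$ holds in the exterior of $\supp\rho$ exactly as before. The spectral hypothesis in (\ref{eq:12131sss}) constrains only $u_0$ on the set $V$ and involves no radiation quantity, so it too is untouched.

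With these observations the proof of Theorem \ref{th333} transfers verbatim. Let $V(t)$ be the image of $V$ under the flow map (\ref{particle}); the continuity equation propagates the vacuum, giving $\rho \equiv 0$ and therefore $\partial_t u + u \cdot \nabla u = 0$ on $V(t)$. Integrating along characteristics yields $u(t,x) = u_0(x - t u(t,x))$, and differentiation produces
\begin{equation*}
\nabla u(t,x) = \big(\mathbb{I}_3 + t\nabla u_0(x - t u(t,x))\big)^{-1}\nabla u_0, \qquad x \in V(t).
\end{equation*}
Selecting any eigenvalue $\lambda \in Sp(\nabla u_0)$ with $\lambda < 0$, the matrix $\mathbb{I}_3 + t\nabla u_0$ becomes singular at the finite instant $t = -1/\lambda$, so $\nabla u$ blows up and hence $T < +\infty$. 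The only point requiring attention---hardly an obstacle---is the reduction bookkeeping above; the blow-up itself is already furnished by Theorem \ref{th333}. The damped case (Corollary \ref{th33333}) follows identically, since a linear damping term $-\alpha \rho u$ also vanishes where $\rho = 0$, again leaving the Burgers law undisturbed.
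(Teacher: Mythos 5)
Your proposal is correct and takes essentially the same route as the paper: the paper states Corollary \ref{th3333} with no separate proof, treating it exactly as the specialization $I\equiv\overline{B}(v)$ of Theorem \ref{th333}, whose vacuum--Burgers argument (vacuum propagation along particle paths, $u(t,x)=u_0(x-tu(t,x))$, and singularity of $\mathbb{I}_3+t\nabla u_0$ at $t=-1/\lambda$) you reproduce faithfully. Your bookkeeping---that $I\equiv\overline{B}(v)$ annihilates the source terms $G_j$ and is consistent with the transfer equation, so condition (iii) of Definition \ref{666} and the hypothesis (\ref{eq:12131sss}) are untouched---just makes explicit what the paper leaves implicit.
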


The compressible isentropic Euler equations with damping can be given as
\begin{equation}
\label{eq:1.2sss}
\begin{cases}
\displaystyle
\partial_{t}\rho+\nabla\cdot(\rho u)=0,\\[10pt]
\displaystyle
\partial_{t}(\rho u)+\nabla\cdot(\rho u\otimes u)
  +\nabla p_m =-\alpha \rho u,
\end{cases}
\end{equation}
where $\alpha>0$. The corresponding blow-up results can be written as:
\begin{corollary}[\textbf{Finite time blow-up $4$}]\label{th33333}\ \\ 
 Assume that $(\rho,u)$
 is the regular solution on $[0,T)\times \mathbb{R}^3$  obtained in \cite{tpy} to the Cauchy problem (\ref{eq:1.2sss}) with initial data $(\rho_0,u_0)$. If $(\rho_0,u_0)$ satisfies that 
\begin{equation} \label{eq:121331sss}
\begin{cases}
\displaystyle
\rho_0(x)=0, \ \forall \ x\in V;\\[10pt]
\displaystyle
 Sp(\nabla u_0) \cap \{\lambda\in \mathbb{R}|\lambda<-\alpha\}\neq\  \emptyset,\quad  \forall  \ x \in V,
\end{cases}
\end{equation}
 then  it will
blow up in  finite time, i.e., $T<+\infty$.
\end{corollary}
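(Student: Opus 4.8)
The plan is to follow the characteristic analysis of Theorem \ref{th333}, keeping careful track of the extra damping term. First I would denote by $V(t)$ the image of the hyperbolic singularity set $V$ under the flow map of the particle paths $x(t;0,\xi_0)$ defined in (\ref{particle}). Since the continuity equation in (\ref{eq:1.2sss}) transports vacuum along particle paths, one has $\rho(t,x)=0$ for $x\in V(t)$. The crucial first step is to identify the equation satisfied by $u$ in the vacuum region: dividing the damped momentum equation by $\rho$ and passing to the limit $\rho\to 0$ (the precise analogue of condition $(\mathrm{iii})$ of Definition \ref{666}, now for the system of \cite{tpy}), the pressure gradient drops out and the velocity obeys the damped Burgers equation
\begin{equation*}
\partial_t u+u\cdot\nabla u=-\alpha u,\qquad \text{when}\ x\in V(t).
\end{equation*}

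Next I would integrate this relation along the particle paths. Writing $\tfrac{d}{dt}u(t,x(t;0,\xi_0))=\partial_t u+u\cdot\nabla u=-\alpha u$, the velocity decays exponentially along each trajectory,
\begin{equation*}
u(t,x(t;0,\xi_0))=u_0(\xi_0)\,e^{-\alpha t},
\end{equation*}
and inserting this into (\ref{particle}) and integrating in $t$ yields the explicit flow map
\begin{equation*}
x=x(t;0,\xi_0)=\xi_0+\frac{1-e^{-\alpha t}}{\alpha}\,u_0(\xi_0),\qquad \xi_0\in V.
\end{equation*}

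Then I would pass to Eulerian coordinates. Differentiating $u(t,x)=e^{-\alpha t}u_0(\xi_0)$ in $x$, with $\xi_0=\xi_0(t,x)$, and inverting the Jacobian of the flow map gives, for $x\in V(t)$,
\begin{equation*}
\nabla u(t,x)=e^{-\alpha t}\,\nabla u_0\Big(\mathbb{I}_3+\frac{1-e^{-\alpha t}}{\alpha}\nabla u_0\Big)^{-1}.
\end{equation*}
Since $\nabla u_0$ and $(\mathbb{I}_3+\mu\nabla u_0)^{-1}$ commute, with $\mu=\mu(t)=\tfrac{1-e^{-\alpha t}}{\alpha}$, each eigenvalue $\lambda$ of $\nabla u_0$ produces an eigenvalue $\tfrac{e^{-\alpha t}\lambda}{1+\mu\lambda}$ of $\nabla u$. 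As $t$ runs over $[0,\infty)$, $\mu(t)$ increases continuously from $0$ to $\tfrac1\alpha$, so for any eigenvalue $\lambda<-\alpha$ there is a finite time $t_\ast=-\tfrac1\alpha\ln\tfrac{\lambda+\alpha}{\lambda}$ at which $1+\mu(t_\ast)\lambda=0$; since $\lambda\neq 0$, the corresponding eigenvalue of $\nabla u$ diverges as $t\to t_\ast^-$. Hence under (\ref{eq:121331sss}) the $C^1$ norm of $u$ cannot remain finite on $[0,\infty)$, forcing $T\le t_\ast<+\infty$.

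I expect the main obstacle to be the first step: rigorously justifying that in the vacuum region the velocity of the regular solution of \cite{tpy} satisfies $\partial_t u+u\cdot\nabla u=-\alpha u$, i.e.\ that the pressure term genuinely vanishes in the $\rho\to 0$ limit for the damped system. Once this is secured, the remainder is an explicit ODE computation along characteristics, as in Theorem \ref{th333}. The shifted threshold $-\alpha$ in (\ref{eq:121331sss}) (in place of $0$) is precisely the algebraic trace of the damping: the factor $e^{-\alpha t}$ caps the range of $\mu$ at $\tfrac1\alpha$, so only eigenvalues strictly below $-\alpha$ can drive $\mathbb{I}_3+\mu\nabla u_0$ to singularity in finite time.
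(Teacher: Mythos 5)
Your proposal is correct and follows essentially the same route as the paper: you derive the damped Burgers equation $\partial_t u+u\cdot\nabla u=-\alpha u$ in the vacuum region, integrate along particle paths (your exponential decay $u=e^{-\alpha t}u_0(\xi_0)$ is exactly the paper's substitution $\overline{u}=e^{\alpha t}u$), obtain the flow map $x=\xi_0+\frac{1-e^{-\alpha t}}{\alpha}u_0(\xi_0)$, and conclude blow-up from the singularity of $\mathbb{I}_3+\frac{1-e^{-\alpha t}}{\alpha}\nabla u_0$ when some eigenvalue satisfies $\lambda<-\alpha$. The only cosmetic difference is that you exhibit the explicit critical time $t_\ast=-\frac{1}{\alpha}\ln\frac{\lambda+\alpha}{\lambda}$, whereas the paper reaches the same conclusion by an intermediate value argument on $f(t)=1-\frac{1}{\alpha}(e^{-\alpha t}-1)\lambda$.
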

\begin{proof}
Similarly to the proof of Theorem \ref{th333}, and  the definition  for regular solutions in \cite{tpy}, we deduce that 
\begin{equation}
\label{eq:1.2guo1}
\partial_t u+u\cdot \nabla u=-\alpha u, \quad \text{when} \quad x\ \in \ V(t).
\end{equation}
Let 
$$
\overline{u}(t,x)=e^{\alpha t}u(t,x),$$
then it is easy to see that 
\begin{equation}
\label{bianxing1}
\partial_t \overline{u}+u\cdot \nabla \overline{u}=0, \quad \text{when} \quad x\ \in \ V(t),
\end{equation}
which implies that  along  the particle path  $ x(t; t,x)$ (see (\ref{particle})),  $\overline{u}(t,x)$ is a constant and 
\begin{equation}\label{changxi}
\overline{u}(t,x)=u_0(\xi_0).
\end{equation}
And, according to the (\ref{particle}), $\xi_0$ satisfies that 
\begin{equation}\label{chuzhi}\begin{split}
x-\xi_0=&\int_0^t u(\tau;t,x) \text{d} \tau
= \int_0^t e^{-\alpha \tau} \overline{u}(\tau;t,x) \text{d}\tau\\
=&-\frac{1}{\alpha} \overline{u}(t,x) (e^{-\alpha t}-1).
\end{split}
\end{equation}
Then from (\ref{changxi})-(\ref{chuzhi}) it is easy to see that 
\begin{equation}
\label{eq:1.2fan1}
\nabla \overline{u}(t,x)=\Big(\mathbb{I}_3-\frac{1}{\alpha}(e^{-\alpha t}-1)\nabla u_0(\xi_0)\Big)^{-1}\nabla u_0,\quad \text{for} \quad x\in V(t).
\end{equation}
If there is any $\lambda\in  Sp(\nabla u_0) $ satisfying $\lambda<-\alpha$,
 we need to  consider the functions 
$$f(t)=1-\frac{1}{\alpha}(e^{-\alpha t}-1)\lambda.$$ 
We easily have
$$
f(0)=1, \quad  \quad f(+\infty)=1+\frac{\lambda}{\alpha}<0,
$$
which implies that there must exists some finite time $t_0$ such that 
$$
f(t_0)=0,
$$
then from (\ref{eq:1.2fan1}), it is obvious that the quantity $\nabla u$ will blow up in finte time, i.e., 
$$
T<+\infty.
$$

\end{proof}
\begin{remark}\label{zheng9}
The finite time blow-up results obtained in Theorems \ref{th33}-\ref{th333} and Corollaries  \ref{th3333}-\ref{th33333} can be easily generalized to the $1$D space.  And we compare Corollary \ref{th3333} in $1$D case with the formation of singularities obtained in \cite{lax}.  If the initial mass density $\rho_0$ has a positive  lower bound,  in order to obtain the finite blow-up, in \cite{lax} we need to assume that there is some point $x\in \mathbb{R}$ such that 
$$(w_0)_x=(u_0)_x-C(\rho^{\frac{\gamma-1}{2}}_0)_x<0,$$
 where $w$ is the Riemann invariant. However, this result cannot be extended to the multi-dimensional space due to the existence of the Riemann invariant and  other difficulties.

In our theorem, for  $1$D case, if the initial mass density vanished in some open set $V$, in order to achieving the same target, we need to assume that  there is some point $x\in V$ such that 
$$(w_0)_x=(u_0)_x-C(\rho^{\frac{\gamma-1}{2}}_0)_x<0.$$
And we can generalized this condition to the multi-dimensional space such as
\begin{equation*} 
\begin{cases}
\displaystyle
\rho_0(x)=0, \ \forall \ x\in V;\\[10pt]
\displaystyle
 Sp(\nabla u_0) \cap \mathbb{R}^-\neq\  \emptyset,\quad  \forall  \ x \in V,
\end{cases}
\end{equation*}
for compressible Euler equations or Euler-Boltzmann equations.
\end{remark}

\section{Local Existence for the case $\sigma_s\neq 0$}

In this section, we will give the corresponding  local existence of regular solutions for the case $\sigma_s\neq 0$, which is similar to the result obtained in Section $2$, and we use the same notation as in Section $2$. When $\sigma_s\neq 0$, if we still consider the assumptions of `induced process' and local thermal equilibrium as in Section $2$ for the case $\sigma_s= 0$, then the Euler-Boltzmann equations (\ref{eq:1.2}) become very complicated, for example, the radiation transfer equation in (\ref{eq:1.2}) reads as
$$
\frac{1}{c}\partial_tI+\Omega\cdot\nabla I=-K_a\cdot(I-\overline{B}(v))+\Pi,
$$
where
$$
\Pi=\int_0^\infty \int_{S^{2}} \left(\frac{v}{v'}\sigma_sI'\left(1+\frac{c^{2}I}{2hv^3}\right)
-\sigma'_sI\left(1+\frac{c^{2}I'}{2hv^3}\right)\right) \text{d}\Omega' \text{d}v',
$$
which is rather complicated and hard to deal with.
Therefore, for simplicity, we start from the original Euler-Boltzmann equations (\ref{eq:1.2}). For this, we need some assumptions. Let
$$ \sigma_s=\rho\overline{\sigma}_s(v' \rightarrow v, \Omega'\cdot\Omega),\quad \sigma'_s=\rho\overline{\sigma}'_s(v \rightarrow v', \Omega\cdot\Omega'),$$
where $\overline{\sigma}_s\geq 0$ and $\overline{\sigma}'_s\geq 0$ satisfy
\begin{equation}\label{zhen8}
\begin{split}
&\int_0^\infty \int_{S^{2}}\left(\int_{0}^{\infty}\int_{S^{2}} \Big|\frac{v}{v'}\Big|^2|\overline{\sigma}_s|^2\text{d}\Omega' \text{d}v'\right)^{\lambda_1}\text{d}\Omega \text{d}v\leq C,\\
&\int_0^\infty \int_{S^{2}}\left(\int_{0}^{\infty}\int_{S^{2}} \overline{\sigma}'_s\text{d}\Omega' \text{d}v'\right)^{\lambda_2}\text{d}\Omega \text{d}v+\int_{0}^{\infty}\int_{S^{2}} \overline{\sigma}'_s\text{d}\Omega' \text{d}v'\leq C,
\end{split}
\end{equation}
where $\lambda_1=1$ or $\frac{1}{2}$, and $\lambda_2=1$ or $2$.
Let
\begin{equation} \label{fg11}
\begin{cases}
S=S(v,t,x,\rho)=\rho \overline{S}(v,t,x,\rho)=\rho \overline{S},\quad \overline{S}\geq 0, \\[9pt]
\sigma_a=\sigma_a(v,t,x,\rho)=\rho \overline{\sigma}_a(v,t,x,\rho)=\rho\overline{\sigma}_a,\quad \overline{\sigma}_a\geq 0,
\end{cases}
\end{equation}
and for any $\|w(t,\cdot)\|_s\leq M$ (here $ w=\rho^{\frac{\gamma-1}{2}}$, $M$ is a positive constant ), we assume
\begin{equation} \label{fg22}
\begin{cases}
\displaystyle
\|\overline{S}\|_{L^1(\mathbb{R}^+;C([0,T];H^s))}+\|S\|_{L^2(\mathbb{R}^+;C([0,T];H^s))}\leq C_{s,M}\|w\|_s,\\[9pt]
 \|\sigma_a\|_{L^\infty(\mathbb{R}^+;C([0,T];H^s))}+\| \overline{\sigma}_a \|_{L^2(\mathbb{R}^+;C([0,T];H^s))}\leq C_{s,M}\|w\|_s,\\[9pt]
\displaystyle
\|(|\partial_w\overline{\sigma}_a|+|\partial_w\overline{S}|)(v,t,x,w)\|_{L^\infty \cap L^1(\mathbb{R}^+;L^\infty(\mathbb{R}^3))}\leq C_{s,M},\ \forall \ t\in [0,T].
\end{cases}
\end{equation}
\begin{remark}The evaluation of these physical coefficients is a difficult problem in quantum mechanics and their general form is not known.
Some  similar assumptions and physical examples on $S$, $\sigma_a$ and $\sigma_s$ can be found in Remarks $2.2$ and $3.1$ of \cite{sjx} as well as in \cite{gp}. When $\gamma\leq 1+\frac{2}{s}$, that is $\frac{2}{\gamma-1}\geq s$,  (\ref{fg22}) is similar to the assumptions in \cite{sjx} for the local existence of classical solutions to Cauchy problem (\ref{eq:1.2}) and (\ref{eq:2.211}) with initial mass density away from vacuum.
\end{remark}

Similarly, in order to change (\ref{eq:1.2}) into a symmetric hyperbolic system, we introduce the new variable
$$ w=p^{\frac{\gamma-1}{2\gamma}}_m=\rho^{\frac{\gamma-1}{2}},$$
and denote $U=U(t,x)=(w(t,x),u(t,x))^\top$. Then system (\ref{eq:1.2}) of the isentropic Euler-Boltzmann equations
can be reduced to the following system:
\begin{equation}\label{eq:2.6tt}
\begin{cases}
\displaystyle
\frac{1}{c}\partial_tI+\Omega\cdot\nabla I=A_r,\\[10pt]
\displaystyle
 A_{0}(U)\partial_tU+\sum_{j=1}^{3} A_{j}(U)\partial_{x_j}U=F(I,U),
\end{cases}
\end{equation}
where  $A_0(U)$ and $A_j(U)$ are defined in Section $2$, and
\begin{equation*}
\begin{split}
&F(I,U)=(F_0,F_1,F_2,F_3),\ F_0(I,U)=0,\\
&F_j(I,U)=-\frac{(\gamma-1)^2}{4c\gamma}\int_0^\infty \int_{S^{2}} \Big(\overline{S}-\overline{\sigma}_aI
+\int_0^\infty \int_{S^{2}} \big(\frac{v}{v'}\overline{\sigma}_sI'-\overline{\sigma}'_sI\big)
 \text{d}\Omega' \text{d}v'\Big)\Omega_j\text{d}\Omega \text{d}v,
\end{split}
\end{equation*}
where $j=1,2,3$. Similarly to Theorem \ref{th:1}, in order to get the local existence of the original Cauchy problem (\ref{eq:1.2}) and (\ref{eq:2.211}), we need the following key theorem.
\begin{theorem}\label{th:1rr}
Let $s\geq 3$ be an integer and  (\ref{fg11})-(\ref{fg22}) hold.
If the initial data satisfy
\begin{equation*}\begin{split}
(I_0,U_0) \in \Psi:&=\left\{(I,U)|U(x)\in H^{s}(\mathbb{R}^3), \ I(v,\Omega,x) \in L^2(\mathbb{R}^+\times S^{2};H^{s}(\mathbb{R}^3))\right\},
\end{split}
\end{equation*}
then there exists $T > 0$ such that the problem (\ref{eq:2.6tt}) and (\ref{eq:2.8}) has a unique classical solution $(I,U)$ satisfying
\begin{equation*}
U\in C^1\left([0,T)\times \mathbb{R}^3\right),\ I\in L^2\left(\mathbb{R}^+\times S^2; C^1([0,T)\times \mathbb{R}^3)\right).
\end{equation*}
\end{theorem}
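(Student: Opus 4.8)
The plan is to reproduce the scheme used for Theorem \ref{th:1}, namely an iteration built on mollified data together with a Banach contraction argument, and to isolate the genuinely new work as the treatment of the scattering double integrals appearing in $A_r$ and $F$. First I would set up the linearized problem: with $U^{(k)}_0=j_{\epsilon_k}U_0$ and $I^{(k)}_0=j_{\epsilon_k}I_0$, define $(I^{(k+1)},U^{(k+1)})$ inductively by
\begin{equation*}
\begin{cases}
\frac{1}{c}\partial_t I^{(k+1)}+\Omega\cdot\nabla I^{(k+1)}=\rho^{(k)}\overline{S}^{(k)}-\rho^{(k)}\overline{\sigma}^{(k)}_aI^{(k+1)}+\rho^{(k)}\int_0^\infty\int_{S^2}\big(\tfrac{v}{v'}\overline{\sigma}_s{I'}^{(k)}-\overline{\sigma}'_sI^{(k+1)}\big)\,\text{d}\Omega'\text{d}v',\\
A_0(U^{(k)})\partial_tU^{(k+1)}+\sum_{j=1}^3A_j(U^{(k)})\partial_{x_j}U^{(k+1)}=F(I^{(k)},U^{(k)}).
\end{cases}
\end{equation*}
Keeping the absorption term $-\rho\overline{\sigma}_aI$ and the out-scattering term $-\rho\overline{\sigma}'_sI$ implicit (at level $k+1$) is important, since $\overline{\sigma}_a,\overline{\sigma}'_s\geq0$ and both carry a favorable sign in the $L^2$ energy estimate; the in-scattering term $\tfrac{v}{v'}\rho\overline{\sigma}_s{I'}^{(k)}$ must stay explicit because it couples $I$ at the shifted arguments $(v',\Omega')$.

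Next I would prove the analogue of Lemma \ref{lemma:2:1}, the uniform bound on $|\|U^{(k)}-U^{(0)}_0\||_{s,T_*}+|\|\partial_tU^{(k)}\||_{s-1,T_*}+\int_0^\infty\int_{S^2}|\|I^{(k)}-I^{(0)}_0\||^2_{s,T_*}\,\text{d}\Omega\,\text{d}v$. For $U^{(k+1)}$ the argument is verbatim that of Step $3$ there, once the source bound $\|F(I^{(k)},U^{(k)})\|_s\leq C_1$ is in hand; that source bound, and the transport energy estimate for $I^{(k+1)}$ (after subtracting the fixed mollified datum $I^{(0)}_0$ as in Step $1$), are where the scattering integrals enter. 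Differentiating the transport equation $D^\alpha$ times, multiplying by $D^\alpha I^{(k+1)}$ and integrating in $x$ (the convective term $\Omega\cdot\nabla$ integrates away), I would bound each integral contribution by first invoking Minkowski's inequality to move $D^\alpha$ and the $L^2_x$-norm inside the $\text{d}\Omega'\text{d}v'$ integration, then applying the product estimate (\ref{pro:2.311}) together with (\ref{fg22}), and finally integrating in $(v,\Omega)$ and using Hölder against the structural hypotheses (\ref{zhen8}) to close in the $L^2(\mathbb{R}^+\times S^2;H^s)$ norm. The two admissible values $\lambda_1\in\{1,\tfrac12\}$ and $\lambda_2\in\{1,2\}$ in (\ref{zhen8}) are precisely the Hölder exponents that render the $(v,\Omega)$-integration of the in- and out-scattering terms finite when paired with the $L^2$-integrability of $I$ in $(v',\Omega')$.

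Then I would establish the contraction, the analogue of Lemma \ref{lemma:2:2}: subtracting consecutive iterates, the differences $U^{(k+1)}-U^{(k)}$ and $I^{(k+1)}-I^{(k)}$ solve linear equations driven by $F(I^{(k)},U^{(k)})-F(I^{(k-1)},U^{(k-1)})$, by the $A_j(U^{(k)})-A_j(U^{(k-1)})$ commutator terms, and by the scattering differences. Here the Lipschitz-type bounds $\|(|\partial_w\overline{\sigma}_a|+|\partial_w\overline{S}|)\|_{L^\infty\cap L^1}\leq C_{s,M}$ in (\ref{fg22}) play the role that (\ref{eq:2.10})$_2$ played before, letting me write $|\overline{\sigma}^{(k)}_a-\overline{\sigma}^{(k-1)}_a|\leq C|w^{(k)}-w^{(k-1)}|$ and similarly for $\overline{S}$; since $\overline{\sigma}_s,\overline{\sigma}'_s$ are independent of $w$, their differences contribute only through ${I'}^{(k)}-{I'}^{(k-1)}$, again handled by Minkowski and (\ref{zhen8}). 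Choosing $T_{**}$ short enough yields a contraction in the low norm $|\|\cdot\||_{0,T_{**}}+(\int_0^\infty\int_{S^2}|\|\cdot\||^2_{0,T_{**}}\,\text{d}\Omega\,\text{d}v)^{1/2}$ up to summable remainders from the mollification, exactly as in (\ref{eq:2.44}) and (\ref{eq:2.46}).

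Finally, Sobolev interpolation between the uniform high norm and the contracting low norm upgrades the convergence to $C([0,T_{**}];C^1)$ for $U$ and $L^2(\mathbb{R}^+\times S^2;C([0,T_{**}];C^1))$ for $I$, identifying the limit as the desired classical solution; uniqueness follows by running the same difference estimates for two solutions. I expect the main obstacle to be the bookkeeping of the scattering double integrals: ensuring that, after Minkowski's inequality, the remaining $(v,\Omega)$-integral of the in-scattering term can be paired, via Hölder with exponent $\lambda_1$, against $(\int_0^\infty\int_{S^2}|I'|^2\,\text{d}\Omega'\text{d}v')^{1/2}$ to recover the $L^2(\mathbb{R}^+\times S^2)$ norm of $I$, and that the out-scattering term is absorbed by the implicit sign structure rather than spoiling it.
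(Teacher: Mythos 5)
Your proposal is correct and follows essentially the same approach as the paper: the same linearized scheme as (\ref{eq:2.13tt}), keeping the sign-favorable absorption and out-scattering terms implicit at level $k+1$ and the emission and in-scattering terms explicit at level $k$, the same high-norm boundedness and low-norm contraction lemmas closed via Minkowski's inequality, H\"older against (\ref{zhen8}), and the Lipschitz-type hypotheses in (\ref{fg22}), followed by interpolation and a difference argument for uniqueness. The only slip is minor: since the full scattering coefficients $\sigma_s=\rho\overline{\sigma}_s$ and $\sigma'_s=\rho\overline{\sigma}'_s$ carry the density factor, the contraction step also produces terms of the form $I'^{(k-1)}\big(\sigma_s^{(k)}-\sigma_s^{(k-1)}\big)$ (the paper's $\overline{J}_8$ and $\overline{J}_9$ in (\ref{eq:2.45tt})), which depend on $w^{(k)}-w^{(k-1)}$ and must be absorbed exactly as the $\overline{\sigma}_a$ and $\overline{S}$ differences, rather than contributing only through $I'^{(k)}-I'^{(k-1)}$ as you state.
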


We can follow the same procedure as the proof of Theorem \ref{th:1} to prove Theorem \ref{th:1rr}. For $k=0,1,2,...$, we define $U^{(k+1)}(t,x)$ and $I^{(k+1)}(v,\Omega,t,x)$ inductively as the solution of the following linearized problem:
\begin{equation}\label{eq:2.13tt}
\begin{cases}
\displaystyle
\frac{1}{c}\partial_tI^{(k+1)}+\Omega\cdot\nabla I^{(k+1)}+\left(\sigma^{(k)}_a+\int_{0}^{\infty}\int_{S^{2}} \sigma'^{(k)}_s\text{d}\Omega' \text{d}v'\right)I^{(k+1)}=A^{(k)}_r,\\[8pt]
\displaystyle
A_{0}(U^{(k)})\partial_tU^{(k+1)}+\sum_{j=1}^{3} A_{j}(U^{(k)}) \partial_{x_{j}}U^{(k+1)}=F(I^{(k)},U^{(k)}),\\[8pt]
I^{(k+1)}(0,x,v,\Omega)=I^{(k+1)}_0(x,v,\Omega), \ U^{(k+1)}(0,x)=U^{(k+1)}_0(x),
\end{cases}
\end{equation}
where
\begin{equation}\label{jihaott}
\begin{cases}
\displaystyle
A^{(k)}_r=S^{(k)}
+\int_0^\infty \int_{S^{2}}\frac{v}{v'}\sigma^{(k)}_sI'^{(k)}
 \text{d}\Omega' \text{d}v',\quad F_0(I^{(k)},U^{(k)})=0,\\[8pt]
\displaystyle
F_j(I^{(k)},U^{(k)})=-\frac{(\gamma-1)^2}{4c\gamma}\int_0^\infty \int_{S^{2}} \left(\overline{S}^{(k)}
-\overline{\sigma}^{(k)}_aI^{(k+1)}\right)\Omega_j\text{d}\Omega \text{d}v,\\[8pt]
\displaystyle
\qquad -\frac{(\gamma-1)^2}{4c\gamma}\int_0^\infty \int_{S^{2}} \left(\int_0^\infty \int_{S^{2}} \left(\frac{v}{v'}\overline{\sigma}_sI'^{(k)}-\overline{\sigma}'_sI^{(k+1)}\right)
 \text{d}\Omega' \text{d}v'\right)\Omega_j\text{d}\Omega \text{d}v,\\[8pt]
 \sigma^{(k)}_s=\sigma_s(v' \rightarrow v,\Omega'\cdot\Omega,w^{(k)}), \  \sigma'^{(k)}_s=\sigma_s(v \rightarrow v',\Omega\cdot\Omega',w^{(k)}),\\[8pt]
S^{(k)}=S(v,t,x,w^{(k)}),\ \overline{S}^{(k)}=\overline{S}(v,t,x,w^{(k)}),\\[8pt]
 \sigma^{(k)}_a=\sigma_a(v,t,x,w^{(k)}),\ \overline{\sigma}^{(k)}_a=\overline{\sigma}_a(v,t,x,w^{(k)}).
\end{cases}
\end{equation}
It follows immediately that
\begin{equation}\label{eq:2.16tt}
U^{(k+1)}\in C^\infty\left([0,T_k]\times \mathbb{R}^3\right),\ I^{(k+1)}\in L^2\left(\mathbb{R}^+\times S^2,\ C^\infty([0,T_k]\times \mathbb{R}^3)\right),
\end{equation}
where $T_k$ is the largest time of existence for (\ref{eq:2.13tt}) such that the estimates
\begin{equation}\label{eq:2.17tt}
 \int_{0}^{\infty}\int_{S^{2}} |\|I^{(k)}-I^{(0)}_0\||^2_{s,T_k} d\Omega \leq C,\ |\|U^{(k)}-U^{(0)}_0\||_{s,T_k}\leq C
\end{equation}
are valid for any given constant $C>0$. Next we give two key lemmas as in Section $2$, which imply the compactness of the above-constructed approximate solutions.
\begin{lemma}[\textbf{Boundedness in the high norm}]  \label{lemma:2:1tt}\ \\
There exist constants $C_2 > 0$  and $\overline{T}_* > 0$ such that  the solution $(I^{(k)},U^{(k)})$ to (\ref{eq:2.13tt}) and (\ref{eq:2.8}) satisfies
\begin{equation}\label{eq:2.19tt}
|\|U^{(k)}-U^{(0)}_0\||_{s,\overline{T}_*}+|\|\partial_tU^{(k)}\||_{s-1,\overline{T}_*}+ \int_{0}^{\infty}\int_{S^{2}} |\|I^{(k)}-I^{(0)}_0\||^2_{s,  \overline{T}_*} d\Omega \text{d}v\leq C_2,
\end{equation}
for $k=0,1,2,....$
\end{lemma}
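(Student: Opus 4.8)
The plan is to establish (\ref{eq:2.19tt}) by induction on $k$, exactly paralleling the three-step scheme in the proof of Lemma \ref{lemma:2:1}: assuming the bound holds for $(I^{(k)},U^{(k)})$, I propagate it to $(I^{(k+1)},U^{(k+1)})$ and then select $\overline{T}_*>0$ small and independent of $k$. The only genuinely new ingredient relative to the case $\sigma_s=0$ is the pair of scattering integrals in (\ref{eq:2.13tt}), so the bulk of the effort goes into controlling these double integrals over $(v',\Omega')$ in the norm $\big(\int_0^\infty\int_{S^{2}}\|\cdot\|_s^2\,\text{d}\Omega\,\text{d}v\big)^{1/2}$, using the structural hypotheses (\ref{zhen8}) and (\ref{fg11})--(\ref{fg22}) together with Lemma \ref{pro:2.3} and Minkowski's inequality.

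\emph{Step 1 (radiation field).} I set $Q^{(k+1)}=I^{(k+1)}-I^{(0)}_0$. Since $I^{(0)}_0$ is time-independent, $Q^{(k+1)}$ solves the linear transport equation
\begin{equation*}
\frac{1}{c}\partial_tQ^{(k+1)}+\Omega\cdot\nabla Q^{(k+1)}+\beta^{(k)}Q^{(k+1)}=\widetilde H^{(k)},
\end{equation*}
with nonnegative absorption coefficient $\beta^{(k)}=\sigma^{(k)}_a+\int_0^\infty\int_{S^{2}}\sigma'^{(k)}_s\,\text{d}\Omega'\,\text{d}v'\ge 0$ (by (\ref{fg11}) and $\overline{\sigma}'_s\ge 0$) and source $\widetilde H^{(k)}=A^{(k)}_r-\Omega\cdot\nabla I^{(0)}_0-\beta^{(k)}I^{(0)}_0$. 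Differentiating $\alpha$-times ($|\alpha|\le s$), multiplying by $D^\alpha Q^{(k+1)}$ and integrating over $\mathbb{R}^3$, the transport term is a perfect divergence and drops out, while the top-order absorption term $-\int_{\mathbb{R}^3}\beta^{(k)}|D^\alpha Q^{(k+1)}|^2\,\text{d}x\le 0$ may be discarded thanks to its good sign; what remains are a commutator $[D^\alpha,\beta^{(k)}]Q^{(k+1)}$ and $D^\alpha\widetilde H^{(k)}$, treated by Cauchy's inequality. The crucial term is the scattering gain $\int_0^\infty\int_{S^{2}}\frac{v}{v'}\sigma^{(k)}_sI'^{(k)}\,\text{d}\Omega'\,\text{d}v'$: writing $\sigma^{(k)}_s=\rho^{(k)}\overline{\sigma}_s$ with $\overline{\sigma}_s$ independent of $x$, the factors $\frac{v}{v'}$ and $\overline{\sigma}_s$ pull out of the $H^s_x$-norm, and a Cauchy--Schwarz step in $(v',\Omega')$ (for $\lambda_1=1$), or a Minkowski step (for $\lambda_1=\frac12$), bounds its $\big(\int_0^\infty\int_{S^{2}}\|D^\alpha(\cdot)\|^2\,\text{d}\Omega\,\text{d}v\big)^{1/2}$-norm by the product of the weight $\big(\int_0^\infty\int_{S^{2}}\int_0^\infty\int_{S^{2}}|\frac{v}{v'}|^2|\overline{\sigma}_s|^2\,\text{d}\Omega'\,\text{d}v'\,\text{d}\Omega\,\text{d}v\big)^{1/2}\le C$ from (\ref{zhen8}) and of $\|\rho^{(k)}\|_s\big(\int_0^\infty\int_{S^{2}}\|I^{(k)}\|_s^2\,\text{d}\Omega'\,\text{d}v'\big)^{1/2}$, which is controlled by (\ref{eq:2.17tt}) and a Moser-type composition estimate for $\rho^{(k)}=(w^{(k)})^{2/(\gamma-1)}$. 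The scattering-out coefficient $\int_0^\infty\int_{S^{2}}\overline{\sigma}'_s\,\text{d}\Omega'\,\text{d}v'$ is, by (\ref{zhen8}), an $L^\infty\cap L^{\lambda_2}$ function of $(v,\Omega)$ alone, so the commutator $[D^\alpha,\beta^{(k)}]Q^{(k+1)}$ is handled via Lemma \ref{pro:2.3} and Hölder in $(v,\Omega)$; the emission $S^{(k)}=\rho^{(k)}\overline{S}$ and $\sigma^{(k)}_a$ are estimated directly from (\ref{fg22}) as in (\ref{eq:2.30})--(\ref{eq:2.3rr}). Gronwall's inequality on a short time then yields the $I$-part of (\ref{eq:2.19tt}).

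\emph{Steps 2--3 ($U$-field).} With the radiation bound secured, the source $F(I^{(k)},U^{(k)})$ is estimated in $H^s$ by the same Minkowski/Hölder argument as in (\ref{eq:2.30})--(\ref{eq:2.33}), now also absorbing the scattering contributions through (\ref{zhen8}), giving $\|F(I^{(k)},U^{(k)})\|_s\le C_2$. Writing $M^{(k+1)}=U^{(k+1)}-U^{(0)}_0$ as the solution of the linear symmetric hyperbolic system with right-hand side $F(I^{(k)},U^{(k)})+\overline H^{(k)}$, where $\overline H^{(k)}=-\sum_{j=1}^3 A_j(U^{(k)})\partial_{x_j}U^{(0)}_0$, the standard Friedrichs--Majda energy estimate \cite{amj} — using the symmetry of the $A_j$ and the positive-definiteness (\ref{zhengding}) of $A_0$ — produces the bound on $|\|U^{(k+1)}-U^{(0)}_0\||_{s,\overline T_*}+|\|\partial_tU^{(k+1)}\||_{s-1,\overline T_*}$ on a short time. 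Taking $\overline T_*$ to be the minimum of the times arising in Steps 1 and 3 closes the induction.

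\emph{Main obstacle.} The hard part is the Step-1 control of the scattering integrals: unlike the pointwise absorption of the $\sigma_s=0$ case, they are nonlocal in $(v,\Omega)$, and a naive bound loses integrability in frequency and angle. The assumption (\ref{zhen8}), with its admissible exponents $\lambda_1\in\{1,\frac12\}$ and $\lambda_2\in\{1,2\}$, is tailored precisely so that the Cauchy--Schwarz/Minkowski interchange followed by Hölder closes, and keeping the scattering-out term on the left with its favorable sign is what prevents any loss of derivatives in the radiation estimate.
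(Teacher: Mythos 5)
Your proposal is correct and follows essentially the same route as the paper: the same induction with the three-step scheme (energy estimate for $Q^{(k+1)}=I^{(k+1)}-I^{(0)}_0$ with the scattering integrals controlled via (\ref{zhen8}), Minkowski/H\"older and Lemma \ref{pro:2.3}; the $H^s$ bound on the source $F(I^{(k)},U^{(k)})$; and the Friedrichs--Majda estimate for $M^{(k+1)}=U^{(k+1)}-U^{(0)}_0$), closed by taking the minimum of the two existence times. The only cosmetic difference is that you keep the top-order absorption term on the left and discard it by its sign, handling the remainder as a commutator, whereas the paper simply moves the full products $\overline{J}_4,\overline{J}_5$ to the right and bounds them by $C\|U^{(k)}\|_s\|Q^{(k+1)}\|_s^2$ before applying Gronwall --- both variants yield the same conclusion.
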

\begin{proof}It is sufficient to prove that (\ref{eq:2.19tt}) holds for $(I^{(k+1)},U^{(k+1)})$ under the assumption that (\ref{eq:2.19tt}) holds for $(I^{(k)},U^{(k)})$. We divide the proof into three steps.

\underline{Step 1}. The estimate of $ \int_{0}^{\infty}\int_{S^{2}} |\|I^{(k+1)}-I^{(0)}_0\||^2_{s,T} \text{d}\Omega \text{d}v $.

Let $ Q^{(k+1)}=I^{(k+1)}-I^{(0)}_0 $.  Then $ Q^{(k+1)}$ satisfies
\begin{equation} \label{eq:2.21tt}
\begin{cases}
\displaystyle
\frac{1}{c}\partial_tQ^{(k+1)}+\Omega\cdot\nabla Q^{(k+1)}+\Big(\sigma^{(k)}_a+\int_{0}^{\infty}\int_{S^{2}} \sigma'^{(k)}_s\text{d}\Omega' \text{d}v'\Big)Q^{(k+1)}
=S^{(k)}+\Theta^{(k)},\\
Q^{(k+1)}(v,\Omega,0,x)=I^{(k+1)}_0-I^{(0)}_0 ,
\end{cases}
\end{equation}
where
\begin{equation} \label{eq:2.21mtt}
\Theta^{(k)}=-\Big(\sigma^{(k)}_a+\int_{0}^{\infty}\int_{S^{2}} \sigma'^{(k)}_s\text{d}\Omega' \text{d}v'\Big) I^{(0)}_0 -\Omega \cdot \nabla I^{(0)}_0+\int_0^\infty \int_{S^{2}} \frac{v}{v'}\sigma^{(k)}_sI'^{(k)}
 \text{d}\Omega' \text{d}v'.
\end{equation}
Differentiating the equations in (\ref{eq:2.21tt}) $\alpha$-times ($|\alpha|\leq s$) with respect to $x$, from Young's inequality we have
\begin{equation} \label{eq:2.24tt}
\begin{split}
&\frac{d}{\text{d}t}\int_{\mathbb{R}^3}|D^{\alpha}Q^{(k+1)}|^2dx
\leq  C \|D^{\alpha}Q^{(k+1)}\|^2
+\|D^{\alpha}S^{(k)}\|^2 +\|D^{\alpha}\Theta^{(k)}\|^2\\
&+\|D^{\alpha}(\sigma^{(k)}_aQ^{(k+1)})\|^2+\int_0^\infty \int_{S^{2}} \int_{\mathbb{R}^3} D^\alpha(\sigma'^{(k)}_sQ^{k+1})
D^\alpha Q^{(k+1)}\text{d}x \text{d}\Omega'\text{d}v'\\
=&:\overline{J}_1+\overline{J}_2+\overline{J}_3+\overline{J}_4+\overline{J}_5.
\end{split}
\end{equation}
According to Lemma \ref{pro:2.3}, we get
\begin{equation} \label{eq:2.26tt}
\begin{split}
\overline{J}_3=\|D^{\alpha}\Theta^{(k)}\|^2
\leq &\left\|D^{\alpha}\left(-\sigma^{(k)}_aI^{(0)}_0 -\int_{0}^{\infty}\int_{S^{2}} \sigma'^{(k)}_s\text{d}\Omega' \text{d}v' I^{(0)}_0-\Omega \cdot \nabla I^{(0)}_0\right) \right\|^2\\
&+\left\|\int_0^\infty \int_{S^{2}}  D^\alpha\left(\frac{v}{v'}\sigma^{(k)}_sI'^{(k)}
\right) \text{d}\Omega'\text{d}v'\right\|^2\\
\leq&  C \|\sigma^{(k)}_a\|^2_{s}\|I^{(0)}_0\|^2_s +\left(1+\|U^k\|^2_s\right)\| I^{(0)}_0\|_{s+1}^2+\Delta,
\end{split}
\end{equation}
where
\begin{equation} \label{eq:2.27tt}
\begin{split}
\Delta=&\left\|\int_0^\infty \int_{S^{2}}  D^\alpha\left(\frac{v}{v'}\sigma^{(k)}_sI'^{(k)}
\right) \text{d}\Omega'\text{d}v'\right\|^2 \leq C\left( \int_{0}^{\infty} \int_{S^{2}} \Big|\frac{v}{v'}\Big|
\|\sigma^{(k)}_s \|_s \|I'^{(k)}\|_s \text{d}\Omega'\text{d}v' \right)^2 \\
\leq & C\|U^{(k)}\|_s  \int_{0}^{\infty} \int_{S^{2}}\|I^{(k)}\|^2_s \text{d}\Omega \text{d}v \cdot \int_0^\infty  \int_{S^{2}}  \frac{v^2}{v'^2}  |\overline{\sigma}_s |^2\text{d}\Omega'\text{d}v',
\end{split}
\end{equation}
where we used Holder's inequality, Minkowski's inequality and assumptions (\ref{zhen8})-(\ref{fg22}). We also have
\begin{equation} \label{eq:2.27ttt}
\begin{split}
\overline{J}_4=&\|D^{\alpha}(\sigma^{(k)}_aQ^{(k+1)})\|^2\leq C\|\sigma^{(k)}_a\|^2_s\|Q^{(k+1)}\|^2_s\leq C\|U^k\|^2_s\|Q^{(k+1)}\|^2_s,\\
\end{split}
\end{equation}
and
\begin{equation} \label{eq:2.27tttt}
\begin{split}
\overline{J}_5=&\int_0^\infty \int_{S^{2}} \int_{\mathbb{R}^3} D^\alpha(\sigma'^{(k)}_sQ^{k+1})
D^\alpha Q^{(k+1)}\text{d}x \text{d}\Omega'\text{d}v'\leq C\|U^k\|_s\|Q^{(k+1)}\|^2_s.
\end{split}
\end{equation}

Combining (\ref{zhen8})-(\ref{fg22}) and  (\ref{eq:2.24tt})-(\ref{eq:2.27tttt}) , we have
\begin{equation*}\begin{split}
\frac{d}{\text{d}t}\|Q^{(k+1)}\|_s^2
\leq&  C\left(\|D^{\alpha}S^{(k)}\|^2+\|Q^{(k+1)}\|_s^2
+\|I^{(0)}_0\|^2_{s+1} +\int_0^\infty  \int_{S^{2}}  \frac{v^2}{v'^2}  \| \overline{\sigma}_s \|^2 \text{d}\Omega'\text{d}v' \right).
\end{split}
\end{equation*}
Integrating this inequality with respect to $t$ over $[0,T]$, according to Gronwall's inequality, we obtain
$$
\int_{0}^{\infty}\int_{S^{2}}|\|Q^{(k+1)}\||^2_{s, T}\text{d}\Omega \text{d}v\leq C e^{(C_2+1) T}\left(\int_{0}^{\infty}\int_{S^{2}}\|Q^{(k+1)}_0\|_s^2\text{d}\Omega \text{d}v+T\right).$$
Taking $\overline{T}_1$ to be enough small, we arrive at
\begin{equation}\label{kkatt}
\int_{0}^{\infty}\int_{S^{2}} |\|I^{(k+1)}-I^{(0)}_0\||^2_{s, \overline{T}_1}  \text{d}\Omega \text{d}v\leq C_2.
\end{equation}

\underline{Step 2}. The estimate of source term $\|D^{\alpha}F(I^{(k)}, U^{(k)})\|$, $\forall|\alpha| \leq s $. 

Due to Minkowski's inequality, Holder's inequality and (\ref{pro:2.311}), for $|\alpha| \leq s $, we have
\begin{equation*}
\begin{split}
 &\| D^{\alpha}F_j(I^{(k)},U^{(k)}) \|\\
 \leq &
C
 \left\|D^\alpha \int_0^\infty \int_{S^{2}}   \left(\overline{S}^{(k)}-\overline{\sigma}^{(k)}_aI^{(k)}
+\int_0^\infty \int_{S^{2}} \Big(\frac{v}{v'}\overline{\sigma}_sI'^{(k)}-\overline{\sigma}'_sI^{(k+1)}
\Big) \text{d}\Omega' \text{d}v'\right)\Omega_j \text{d}\Omega \text{d}v \right\|\\
\leq & C \left( \int_{0}^{\infty} \int_{S^2} (\| \overline{S}^{(k)} \|_s + \| \overline{\sigma}^{(k)}_a I^{(k)}  \|_s) \text{d}\Omega \text{d}v+\overline{J}_6+\overline{J}_7 \right) \\
\leq & C  \left( \int_{0}^{\infty} \int_{S^2} \| \overline{S}^{(k)} \|_s\text{d}\Omega \text{d}v +\int_{S^{2}}\Big(\int_0^\infty \|\overline{\sigma}^{(k)}_a\|^2_s \text{d}v\Big)^{\frac{1}{2}}\Big( \int_0^\infty \|I^{(k)}\|_s^2 \text{d}v \Big)^{\frac{1}{2}} \text{d}\Omega+\overline{J}_6+\overline{J}_7 \right),
\end{split}
\end{equation*}
where
\begin{equation}\label{eq:2.32tt}
\begin{split}
\overline{J}_6=&\Big
\|\int_0^\infty \int_{S^{2}}\int_0^\infty \int_{S^{2}}\left(\frac{v}{v'}\overline{\sigma}_sI'^{(k)}\right)\Omega_j \text{d}\Omega' \text{d}v'\text{d}\Omega \text{d}v\Big\|_s\\
\leq &  C\int_{0}^{\infty} \int_{S^{2}}  \Big(\int_{0}^{\infty}\int_{S^{2}}\Big|\frac{v}{v'}\Big| | \overline{\sigma}_s |  \|I'^{(k)}\|_s \text{d}\Omega' \text{d}v' \Big) \text{d}\Omega \text{d}v\\
\leq& C\left(\int_{0}^{\infty}\int_{S^{2}}\|I^{(k)}\|^2_s \text{d}\Omega \text{d}v\right)^{\frac{1}{2}}
 \left(\int_0^\infty \int_{S^{2}}\Big(\int_{0}^{\infty}\int_{S^{2}} \Big|\frac{v}{v'}\Big|^2\overline{\sigma}^2_s \text{d}\Omega' \text{d}v'\right)^{\frac{1}{2}}\text{d}\Omega \text{d}v,
 \end{split}
 \end{equation}
and
 \begin{equation}\label{eq:2.32ttt}
\begin{split}
\overline{J}_7=&\Big
\|\int_0^\infty \int_{S^{2}}\int_0^\infty \int_{S^{2}}\overline{\sigma}'_sI^{(k+1)}\Omega_j \text{d}\Omega' \text{d}v'\text{d}\Omega \text{d}v\Big\|_s\\
\leq &  C\int_{0}^{\infty} \int_{S^{2}}  \|I^{(k+1)}\|_s \Big(\int_{0}^{\infty}\int_{S^{2}} | \overline{\sigma}'_s |  \text{d}\Omega' \text{d}v' \Big) \text{d}\Omega \text{d}v\\
\leq& C\left(\int_{0}^{\infty}\int_{S^{2}}\|I^{(k+1)}\|^2_s \text{d}\Omega \text{d}v\right)^{\frac{1}{2}}\left(\int_0^\infty \int_{S^2} \Big( \int_{0}^{\infty}\int_{S^2} \overline{\sigma}'_s\text{d}\Omega \text{d}v \Big)^{2}\text{d}\Omega' \text{d}v' \right)^{\frac{1}{2}}.
\end{split}
\end{equation}
Together with
  assumptions (\ref{fg11})-(\ref{fg22}), we obtain
\begin{equation}\label{eq:2.3eet}
\|F(I^{(k)},U^{(k)})\|_s \leq  C\Big(\|U^{(k)}\|_s+ \int_{0}^{\infty}\int_{S^{2}} \Big(\|I^{(k)}\|^2_s+\|I^{(k+1)}\|^2_s\Big) d\Omega \text{d}v\Big)\leq CC_2.
\end{equation}
\underline{Step 3}. In order to estimate (\ref{eq:2.19tt}), let $M^{(k+1)}=U^{(k+1)}-U^{(0)}_0$. It is easy to get
\begin{equation}\label{eq:2.35tt}
\begin{cases}
\displaystyle
 A_{0}(U^{(k)})\partial_tM^{(k+1)}+\sum_{j=1}^{3} A_{j}(U^{(k)})\partial_{x_{j}}{M^{(k+1)}}=F(I^{(k)},U^{(k)})+\overline{\Theta}^{(k)},\\[10pt]
 M^{(k+1)}(0,x)=U^{(k+1)}_0(x)-U^{(0)}_0(x),
\end{cases}
\end{equation}
where
$$
\overline{\Theta}^{(k)}=-\sum_{j=1}^{3} A_{j}(U^{(k)})\partial_{x_{j}}{U^{(0)}_0}.
$$
With the aid of the steps $1$ and $2$, we can easily follow the standard procedure as in \cite{amj} to show that there is a time $\overline{T}_2$ satisfying
\begin{equation}\label{eq:2.36-1}
|\|U^{(k+1)}-U^{(0)}_0\||_{s,T_2}+|\|\partial_tU^{(k+1)}\||_{s-1,T_2}\leq C_2.
\end{equation}
Let $ \overline{T}_*=\min\{\overline{T}_1,\overline{T}_2\} $. Then Lemma \ref{lemma:2:1tt} is proved.
\end{proof}
\begin{lemma}\label{lemma:2:2tt}
There exist constants $\overline{T}_{\ast\ast}\in [0,\overline{T}_*]$, $\overline{\eta} < 1$ , $\{\overline{\beta}_k\} \ (k=1,2,...)$  and  $\{\overline{\mu}_k\} \ (k=1,2,...)$  with  $\sum_{k} |\overline{\beta}_k|< +\infty$ and
$\sum_{k} |\overline{\mu}_k|< +\infty $ , such that for each $k$
\begin{equation}\label{eq:2.37-1}
\begin{split}
&|\|U^{(k+1)}-U^{(k)}\||_{0,\overline{T}_{\ast\ast}}+\left(\int_0^\infty \int_{S^{2}}\||I^{(k+1)}-I^{(k)}\||^2_{0,\overline{T}_{\ast\ast}} \text{d}\Omega \text{d}v\right)^{\frac{1}{2}}\\
\leq &\overline{\eta} \left(|\|U^{(k)}-U^{(k-1)}\||_{0,\overline{T}_{\ast\ast}}+\left(\int_0^\infty \int_{S^{2}}\||I^{(k)}-I^{(k-1)}\||^2_{0,\overline{T}_{\ast\ast}} \text{d}\Omega \text{d}v\right)^{\frac{1}{2}}\right)+\overline{\beta}_k+\overline{\mu}_k.
\end{split}
\end{equation}
\end{lemma}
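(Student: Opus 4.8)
The plan is to follow the architecture of Lemma \ref{lemma:2:2} verbatim, replacing the collision term $-K_a(I-\overline{B}(v))$ by the full operator $A_r$ and the source $G$ by $F$, and to absorb the genuinely new scattering contributions by means of the integrability hypotheses (\ref{zhen8}) together with the coefficient bounds (\ref{fg11})--(\ref{fg22}). Write $V^{(k+1)}=U^{(k+1)}-U^{(k)}$ and $P^{(k+1)}=I^{(k+1)}-I^{(k)}$, and subtract two consecutive copies of (\ref{eq:2.13tt}). For the fluid part this produces a linear symmetric hyperbolic system for $V^{(k+1)}$ with the same principal part $A_0(U^{(k)})\partial_t+\sum_j A_j(U^{(k)})\partial_{x_j}$, a commutator forcing $\overline{Z}^{(k)}$ coming from the frozen-coefficient differences $A_\ell(U^{(k)})-A_\ell(U^{(k-1)})$, and the source difference $F(I^{(k)},U^{(k)})-F(I^{(k-1)},U^{(k-1)})$. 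Exactly as in (\ref{eq:2.39}), Taylor expansion and the high-norm bounds of Lemma \ref{lemma:2:1tt} give $|\|\overline{Z}^{(k)}\||_{0,\tau}\leq C|\|V^{(k)}\||_{0,\tau}$.

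The first genuinely new step is the estimate of the $F$-difference. Here one splits $F$ into the emission piece $\overline{S}$, the absorption piece $\overline{\sigma}_a I^{(k+1)}$, the in-scattering piece $\tfrac{v}{v'}\overline{\sigma}_s I'^{(k)}$ and the out-scattering piece $\overline{\sigma}'_s I^{(k+1)}$, and writes each difference as (coefficient difference)$\times$(old iterate)$+$(old coefficient)$\times$(iterate difference). The coefficient differences are controlled by $|w^{(k)}-w^{(k-1)}|$ through the $\partial_w$-bounds in (\ref{fg22}), hence by $|\|V^{(k)}\||_{0,\tau}$; the iterate differences produce $(\int_0^\infty\int_{S^2}|\|P^{(k)}\||_{0,\tau}^2\,\text{d}\Omega\,\text{d}v)^{1/2}$ and, because $F$ is linearized at the new iterate $I^{(k+1)}$, also $(\int_0^\infty\int_{S^2}|\|P^{(k+1)}\||_{0,\tau}^2\,\text{d}\Omega\,\text{d}v)^{1/2}$. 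The scattering integrals are handled by Minkowski's and H\"older's inequalities together with the finiteness of $\int\int(\int\int|\tfrac{v}{v'}|^2|\overline{\sigma}_s|^2)^{\lambda_1}$ and $\int\int(\int\int\overline{\sigma}'_s)^{\lambda_2}$ from (\ref{zhen8}). A standard $L^2$ energy estimate for the symmetric system, using the positive definiteness (\ref{zhengding}) of $A_0$ and the mollifier bound $\|U_0^{(k+1)}-U_0^{(k)}\|\leq C2^{-k}\|U_0\|_1$, then yields $|\|V^{(k+1)}\||_{0,\tau}\leq C2^{-k}\|U_0\|_1+e^{C\tau}\tau\,C(|\|V^{(k)}\||_{0,\tau}+(\int\int|\|P^{(k)}\||^2)^{1/2}+(\int\int|\|P^{(k+1)}\||^2)^{1/2})$.

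For the radiation part, subtracting the transfer equations gives a transport equation for $P^{(k+1)}$ whose zeroth-order coefficient $\sigma^{(k)}_a+\int\int\sigma'^{(k)}_s\,\text{d}\Omega'\text{d}v'$ is nonnegative; multiplying by $P^{(k+1)}$, integrating in $x$ and in $(v,\Omega)$ and discarding this good-sign term, the right-hand side contains only $S^{(k)}-S^{(k-1)}$, the coefficient-difference terms $(\sigma^{(k)}_a-\sigma^{(k-1)}_a+\int\int(\sigma'^{(k)}_s-\sigma'^{(k-1)}_s))I^{(k)}$, and the in-scattering difference $\int\int\tfrac{v}{v'}[(\sigma^{(k)}_s-\sigma^{(k-1)}_s)I'^{(k)}+\sigma^{(k-1)}_s(I'^{(k)}-I'^{(k-1)})]$. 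Crucially the coefficients here are frozen at level $k$ (and $\sigma_s=\rho\overline{\sigma}_s$ is bounded via Lemma \ref{lemma:2:1tt}), so no $P^{(k+1)}$ appears on the right and there is no same-level feedback on the radiation side; the in-scattering term produces exactly $(\int\int|\|P^{(k)}\||^2)^{1/2}$ after Minkowski's inequality and (\ref{zhen8}), while the coefficient differences give $|\|V^{(k)}\||$ via (\ref{fg22}). One thus obtains $(\int\int|\|P^{(k+1)}\||^2_{0,\tau})^{1/2}\leq C2^{-k}(\int\int\|I_0\|_1^2)^{1/2}+C\sqrt{\tau}(|\|V^{(k)}\||_{0,\tau}+(\int\int|\|P^{(k)}\||^2_{0,\tau})^{1/2})$.

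Finally I would combine the two inequalities: substituting the radiation bound into the fluid bound removes the offending same-level term $(\int\int|\|P^{(k+1)}\||^2)^{1/2}$ (it carries a factor $C\tau$, and the radiation bound is itself $O(2^{-k})$ plus $O(\sqrt\tau)$ times the level-$k$ differences), and then adding the resulting fluid estimate to the radiation estimate produces the claimed recursion with $\overline{\beta}_k,\overline{\mu}_k=O(2^{-k})$, hence summable. Choosing $\overline{T}_{\ast\ast}\in[0,\overline{T}_*]$ small enough makes the overall prefactor $\overline{\eta}<1$. I expect the main obstacle to be precisely this scattering bookkeeping: ensuring that the double angular--frequency integrals of the in- and out-scattering differences close in $L^2(\mathbb{R}^+\times S^2;L^2(\mathbb{R}^3))$ against the level-$k$ norm of $P^{(k)}$, which forces careful use of Minkowski's inequality, of Fubini to exchange the $(v,\Omega)$ and $(v',\Omega')$ integrations, and of the two admissible exponent choices $\lambda_1,\lambda_2$ in (\ref{zhen8}); a secondary subtlety, absent when $\sigma_s=0$, is the same-level coupling created by linearizing the absorption and out-scattering at $I^{(k+1)}$, which is resolved by the $\tau$-factor substitution just described.
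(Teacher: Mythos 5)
Your proposal is correct and follows essentially the same route as the paper: subtract consecutive iterates, derive the fluid difference estimate (with the commutator term bounded via Taylor expansion and the high-norm bounds) and the radiation difference estimate (discarding the good-sign zeroth-order terms, bounding coefficient differences via the $\partial_w$ hypotheses in (\ref{fg22}) and the scattering integrals via Minkowski/H\"older and (\ref{zhen8})), then add the two bounds and take $\overline{T}_{\ast\ast}$ small, with the mollifier bounds supplying the summable $O(2^{-k})$ remainders. Your explicit substitution of the radiation bound into the fluid bound to eliminate the same-level term $\bigl(\int_0^\infty\int_{S^2}|\|I^{(k+1)}-I^{(k)}\||^2\,\text{d}\Omega\,\text{d}v\bigr)^{1/2}$ (arising because $F$ is linearized at $I^{(k+1)}$) is a detail the paper leaves implicit in its appeal to Lemma \ref{lemma:2:2}, but it is exactly what is needed to justify the paper's inequality (\ref{eq:2.44t}).
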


\begin{proof}Similarly to the proof of Lemma \ref{lemma:2:2}, according to assumptions (\ref{zhen8})-(\ref{fg22}),
for $\overline{T}_3$ small enough, we have
\begin{equation}\label{eq:2.44t}
\begin{split}
&|\|U^{(k+1)}-U^{(k)}\||_{0,\overline{T}_3}\\
\leq & \overline{\eta}_1 \left(|\|U^{(k)}-U^{(k-1)}\||_{0,\overline{T}_3}+\left
(\int_0^\infty \int_{S^{2}}\||I^{(k)}-I^{(k-1)}\||^2_{0,\overline{T}_3} \text{d}\Omega \text{d}v\right)^{\frac{1}{2}}\right)+\overline{\beta}_k,
\end{split}
\end{equation}
where $\overline{\eta}_1 < \frac{1}{2}$ and $\sum_{k} |\overline{\beta}_k|< +\infty $.

To bound $I^{(k+1)}-I^{(k)}$, we use equation (\ref{eq:2.13}) to show that
\begin{equation}\label{eq:2.45tt}
\begin{split}
&\frac{1}{c} \partial_t (I^{(k+1)}-I^{(k)}) +\Omega\cdot\nabla (I^{(k+1)}-I^{(k)})\\
=&S^{(k+1)}-S^{(k)}-\sigma^{(k)}_a(I^{(k+1)}-I^{(k)})
-I^{(k)}(\sigma^{(k)}_a-\sigma^{(k-1)}_a)+\overline{J}_8+\overline{J}_9,
\end{split}
\end{equation}
where
\begin{equation*}
\begin{split}
\overline{J}_8=&\int_0^\infty \int_{S^{2}} \frac{v}{v'} \Big\{\sigma^{(k)}_s(I'^{(k)}-I'^{(k-1)})+I'^{(k-1)}(\sigma^{(k)}_s-\sigma^{(k-1)}_s)\Big\} \text{d}\Omega' \text{d}v',\\
\overline{J}_9=&\int_0^\infty \int_{S^{2}} \Big\{\sigma'^{(k)}_s(I^{(k+1)}-I^{(k)})+I^{(k)}(\sigma'^{(k)}_s-\sigma'^{(k-1)}_s)\Big\} \text{d}\Omega' \text{d}v'.
\end{split}
\end{equation*}
Similarly to the proof of Lemma \ref{lemma:2:2}, $\forall \tau \in [0,\overline{T}_*]$, we have
\begin{equation*}\begin{split}
& \int_0^\infty \int_{S^{2}}|\|(I^{(k+1)}-I^{(k)})\||^2_{0,\tau} \text{d}\Omega \text{d}v
\leq e^{C\tau}\Big(\int_0^\infty \int_{S^{2}}\|I^{(k+1)}_0-I^{(k)}_0\|^2 \text{d}\Omega \text{d}v \\
&\qquad\qquad\qquad\qquad\qquad+\tau\Big(|\|U^{(k)}-U^{(k-1)}\||^2_{0,\tau}
+\int_0^\infty \int_{S^{2}}|\|I^{(k)}-I^{(k-1)}\||^2_{0,\tau} \text{d}\Omega \text{d}v\Big)\Big).
\end{split}
\end{equation*}
Choosing $\overline{T}_4\in [0,\overline{T}_*]$ to be small enough, we have
\begin{equation}\label{eq:2.46tt}
\begin{split}
&\left(\int_0^\infty \int_{S^{2}}|\|I^{(k+1)}-I^{(k)}\||^2_{0,T_4} \text{d}\Omega \text{d}v\right)^{\frac{1}{2}}\\
\leq &\overline{\eta}_2\left(|\|U^{(k)}-U^{(k-1)}\||_{0,T_4}
+\left(\int_0^\infty \int_{S^{2}}|\|I^{(k)}-I^{(k-1)}\||^2_{0,T_4} \text{d}\Omega \text{d}v\right)^{\frac{1}{2}}\right)+\overline{\mu}_k,
\end{split}
\end{equation}
where $\overline{\eta}_2 < \frac{1}{2}$ and $\sum_{k} |\overline{\mu}_k|< +\infty $. 

Finally, taking $\overline{T}_{\ast\ast}=\min\{\overline{T}_3,\overline{T}_4\}$, Lemma \ref{lemma:2:2tt} is proved by adding (\ref{eq:2.44t}) and (\ref{eq:2.46tt}) together.
\end{proof}

Based on Lemmas \ref{lemma:2:1tt} and \ref{lemma:2:2tt}, we can prove Theorem \ref{th:1rr} analogously. We omit the details here.

It turns out that we have the following local existence of regular solutions to the original Cauchy problem (\ref{eq:1.2}) and (\ref{eq:2.211}) when $\sigma_s\neq 0$.
\begin{theorem}\label{th:1rre}
Let $s\geq 3$ be an integer and  (\ref{fg11})-(\ref{fg22}) hold.
If the initial data satisfy
\begin{equation*}\begin{split}
(I_0,\rho_0,u_0) \in \overline{\Psi}:&=\left\{(I,\rho,u )|\rho\geq0,\ (\rho^{\frac{\gamma-1}{2}},u)\in H^{s}, \ I(v,\Omega,x) \in L^2(\mathbb{R}^+\times S^{2};H^{s}(\mathbb{R}^3))\right\},
\end{split}
\end{equation*}
then there exists a time $T > 0$ such that the Cauchy problem (\ref{eq:1.2}) and (\ref{eq:2.211}) has a unique regular solution $(I,\rho,u)$.
\end{theorem}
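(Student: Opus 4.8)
The strategy is to mirror the proof of Theorem \ref{th:2.2}, with Theorem \ref{th:1rr} playing the role of Theorem \ref{th:1}: first produce a classical solution of the symmetrized system (\ref{eq:2.6tt})--(\ref{eq:2.8}), and then invert the symmetrizing change of variable (\ref{eq:2.5}), $w=\rho^{(\gamma-1)/2}$, to return to the physical unknowns $(\rho,u)$ and check that $(I,\rho,u)$ is a regular solution of the original problem (\ref{eq:1.2}) and (\ref{eq:2.211}). Observe first that, after setting $w_0=\rho_0^{(\gamma-1)/2}$ and $U_0=(w_0,u_0)$, the class $\overline{\Psi}$ is exactly the admissible class of Theorem \ref{th:1rr}; hence there exist $T>0$ and a unique classical solution $(I,U)=(I,w,u)$ with $U\in C^1([0,T)\times\mathbb{R}^3)$ and $I\in L^2(\mathbb{R}^+\times S^2;C^1([0,T)\times\mathbb{R}^3))$.

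Next I would recover $\rho$ and the original equations. Since $1<\gamma\le 3$ gives $\tfrac{2}{\gamma-1}\ge 1$, the map $w\mapsto w^{2/(\gamma-1)}$ is $C^1$ on $[0,\infty)$, so $\rho=w^{2/(\gamma-1)}\ge 0$ is $C^1$ and $\rho^{(\gamma-1)/2}=w\in C^1$, giving conditions (i)--(ii) of Definition \ref{666}. Exactly as in (\ref{eq:2.58})--(\ref{eq:2.60}), multiplying the $w$-equation by $\partial\rho/\partial w=\tfrac{2}{\gamma-1}w^{(3-\gamma)/(\gamma-1)}$ recovers the continuity equation, while multiplying the $u$-equations by $\tfrac{4\gamma}{(\gamma-1)^2}w^{2/(\gamma-1)}=\tfrac{4\gamma}{(\gamma-1)^2}\rho$ converts the source $F(I,U)$ back into $-\tfrac{1}{c}\int_0^\infty\int_{S^2}A_r\Omega\,\mathrm{d}\Omega\,\mathrm{d}v$ (using $S=\rho\overline{S}$, $\sigma_a=\rho\overline{\sigma}_a$, $\sigma_s=\rho\overline{\sigma}_s$) and yields the momentum balance. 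Reversing the identity $\tfrac{1}{c^2}\partial_tF_r+\nabla\cdot P_r=\tfrac{1}{c}\int_0^\infty\int_{S^2}A_r\Omega\,\mathrm{d}\Omega\,\mathrm{d}v$, which follows from the transfer equation after integration in $(v,\Omega)$ and is legitimized by the regularity of $I$ together with (\ref{zhen8})--(\ref{fg22}), then shows that $(I,\rho,u)$ solves (\ref{eq:1.2}) classically, and the flow-map representation (\ref{eq:bb1})--(\ref{eq:bb2}) preserves $\rho\ge 0$.

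The point requiring the most care, and the one I expect to be the main obstacle, is the vacuum behaviour of $u$, i.e.\ the analogue of condition (iii) of Definition \ref{666}. Dividing the momentum equation by $\rho$ and letting $\rho\to 0$, the pressure term $\tfrac{2\gamma}{\gamma-1}w\nabla w$ vanishes since $w\in C^1$ with $w\to 0$, but the reduced radiation source $-\tfrac{1}{c}\int_0^\infty\int_{S^2}\big(\overline{S}-\overline{\sigma}_aI+\int_0^\infty\int_{S^2}(\tfrac{v}{v'}\overline{\sigma}_sI'-\overline{\sigma}'_sI)\,\mathrm{d}\Omega'\,\mathrm{d}v'\big)\Omega\,\mathrm{d}\Omega\,\mathrm{d}v$ survives, because here $S,\sigma_a,\sigma_s$ are only $O(\rho)$ (the barred coefficients being $O(1)$) rather than the $o(\rho)$ decay of $K_a$ exploited in the remark after Definition \ref{666}. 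Consequently the velocity in the exterior of $\supp\rho$ obeys a transport equation driven by this radiation source rather than the pure Burgers equation $\partial_tu+u\cdot\nabla u=0$; showing that this limit is attained and that it uniquely determines $u$ on the vacuum set is the technical heart of the argument, and it is precisely here that one must use the uniform integrability in $(v,\Omega)$ supplied by (\ref{zhen8})--(\ref{fg22}).

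Finally, uniqueness is inherited from Theorem \ref{th:1rr}: since $\rho\mapsto\rho^{(\gamma-1)/2}=w$ is a bijection of $[0,\infty)$, any two regular solutions of (\ref{eq:1.2}) and (\ref{eq:2.211}) correspond to two classical solutions of (\ref{eq:2.6tt})--(\ref{eq:2.8}) with identical data, which must coincide, whence the two regular solutions agree. This completes the plan.
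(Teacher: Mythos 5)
Your overall strategy coincides with the paper's: the paper's entire proof of Theorem \ref{th:1rre} is a one-sentence deferral to the argument of Theorem \ref{th:2.2}, namely obtain the classical solution $(I,U)=(I,w,u)$ of (\ref{eq:2.6tt}), (\ref{eq:2.8}) from Theorem \ref{th:1rr}, set $\rho=w^{2/(\gamma-1)}\in C^1$ (using $\tfrac{2}{\gamma-1}\geq 1$), multiply the $w$-equation by $\partial\rho/\partial w$ and the $u$-equations by $\tfrac{4\gamma}{(\gamma-1)^2}\rho$ to recover the mass and momentum equations of (\ref{eq:1.2}), and inherit uniqueness from the symmetrized problem. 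Your first, second and fourth paragraphs are exactly this.

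Your third paragraph is the only place you deviate, and it is half wrong and half right; in both halves, however, the conclusion that there remains a hard ``technical heart'' to supply is mistaken. For the emission/absorption part you are wrong: assumptions (\ref{fg22}) are required to hold for \emph{every} $w$ with $\|w\|_s\leq M$, so applying them to $w\equiv 0$ forces $\overline{S}(v,t,x,0)=\overline{\sigma}_a(v,t,x,0)=0$, and the bound on $\partial_w\overline{S},\ \partial_w\overline{\sigma}_a$ in the third line of (\ref{fg22}) then gives $|\overline{S}|+|\overline{\sigma}_a|\leq C|w|$; hence $S$ and $\sigma_a$ are in fact $o(\rho)$, exactly like $K_a$ in Section 2, and these terms do drop out in vacuum. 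For the scattering part you are right: $\overline{\sigma}_s$ and $\overline{\sigma}'_s$ are $\rho$-independent, so at points where $w=0$ the $u$-component of (\ref{eq:2.6tt}) --- which is non-degenerate and holds pointwise everywhere, so no limit $\rho\to 0$ ever needs to be ``attained'' and no integrability argument in $(v,\Omega)$ is called for --- reads
$$\partial_t u+u\cdot\nabla u=-\frac{1}{c}\int_0^\infty \int_{S^{2}}\Big(\int_0^\infty \int_{S^{2}}\big(\tfrac{v}{v'}\overline{\sigma}_sI'-\overline{\sigma}'_sI\big)\,\text{d}\Omega' \text{d}v'\Big)\Omega \,\text{d}\Omega\, \text{d}v,$$
and this right-hand side need not vanish. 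The obstruction is therefore not analytic but definitional: condition (iii) of Definition \ref{666}, read verbatim, can fail for the solution constructed here, so either ``regular solution'' for (\ref{eq:1.2}) must be understood with (iii) replaced by this scattering-driven transport equation in the exterior of $\supp\rho$ (in which case your construction, like the paper's, closes with no additional work), or the statement requires an extra hypothesis making the scattering force vanish on vacuum. This is a genuine defect of the paper itself --- its one-line proof never confronts it --- and your instinct to flag the vacuum behaviour was sound, but your proposed repair (proving a limiting transport equation via uniform integrability) addresses a step that is either unnecessary or impossible, depending on which reading of the definition one adopts.
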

The proof is the same as the corresponding theorem in Section $2$, here we omit it.

\end{document}